%% file: paper.tex
\documentclass[11pt]{article}

\input{preamble}

\title{Dimension-Free Polylogarithmic Quantum Shadow Tomography}
\author{Fernando Granha Jeronimo\thanks{Department of Computer Science, University of Illinois at Urbana-Champaign. Email: \texttt{granha@illinois.edu}}
\and Qizhao Huang\thanks{Department of Computer Science, University of Illinois at Urbana-Champaign. Email: \texttt{qizhaoh2@illinois.edu}}
\and Lenny Liu\thanks{Department of Computer Science, University of Illinois at Urbana-Champaign. Email: \texttt{hengyu2@illinois.edu}}
}
\date{}

\begin{document}
\maketitle
\begin{abstract}
\textit{Shadow tomography} is a fundamental problem in quantum information theory. Given multiple copies of an unknown $d$-dimensional quantum state $\rho$ and a known collection of observables ${E_1,\ldots,E_M}$, the goal is to estimate all expectation values $\{\Tr(\rho E_i)\}_{i=1}^M$ to additive accuracy $\varepsilon$ with probability at least $1-\delta$. 

An elusive open question from the seminal shadow tomography work of Aaronson\cite{aaronson2018shadow} is whether this task admits a dimension-independent sample complexity with only polylogarithmic dependence on $M$, 
as suggested by the best-known lower bounds. In this work, 
we propose two different quantum protocols for shadow tomography with the best sample complexity
\[
O\left(
\frac{\log(M)\log(M/\delta)}{\varepsilon^2}
\right),
\]
which is polylogarithmic in the number of observables and independent of the dimension of the unknown state thereby answering Aaronson's original question while also providing an exponential improvement in the prior best dimension independent sample complexity of shadow tomography from Sinha\cite{sinha2024} and, more recently, Chen, O'Donnell, Pelecanos, and Wright\cite{chenOdonnellPelecanosWright2026}.

Our approach first reduces the general shadow-tomography problem to a finite-ensemble estimation problem via a minimax argument. We then develop an observable-independent 
protocol that repeatedly applies the pretty-good measurement and updates the prior distribution over the finite ensemble according to the measurement outcomes, the 
tail analysis of the resulting estimation error yields simultaneous accuracy 
guarantees for all observables.

We also give a recovery label measurement from the same finite ensemble,
We proved its conditional mean-bias bound, thus after signed-coordinate minimax,
independent averaging and geometric precision refinement give the bound
of the main theorem.  The analysis of
the sequential PGM gives the cubic logarithmic upper bound of the stated problem, it remains open that if Sequential PGM framework can obtain the same sample complexity.
\end{abstract}

\newpage

\input{sections/introduction}

\input{sections/algorithms}
\input{sections/posterior_localization}
\input{sections/recovery_analysis}
\input{sections/worst_case}

\section{Acknowledgement}
\textbf{Use of generative AI.} The authors used OpenAI’s ChatGPT 5.6-sol and ChatGPT 6-Astra as assistance tools during the 
preparation of this manuscript, which gave us insights for the usage of sequential PGM for improving the framework and analysis,
 and analysis for Geometric precision refinement in Section~\ref{subsec:geometric-refinement}, it was also used for proof clarity checking, and internal notation consistency checking, all mathematical claims, proofs, 
citations, and final text were reviewed, revised, and verified by the authors, who take full responsibility
for the integrity, accuracy, originality, and copyright compliance.
\newpage

\bibliographystyle{alpha}
\begingroup
\small
\bibliography{references}
\endgroup

\appendix
\input{sections/binomial_distribution}
\input{sections/recovery_appendix}

\end{document}

%% file: preamble.tex
\usepackage[T1]{fontenc}
\usepackage{lmodern}
\usepackage[margin=1in]{geometry}
\usepackage{microtype}
\usepackage{amsmath,amssymb,amsthm,mathtools}
\usepackage[numbers,sort&compress]{natbib}
\usepackage{xcolor}
\usepackage{enumitem}
\usepackage{booktabs}
\usepackage{hyperref}
\usepackage{setspace}
\usepackage[capitalise,noabbrev]{cleveref}

\hypersetup{
  colorlinks=true,
  linkcolor=blue!55!black,
  citecolor=green!40!black,
  urlcolor=blue!60!black
}

\allowdisplaybreaks
\setlist{leftmargin=*,topsep=3pt,itemsep=2pt}

\newtheorem{theorem}{Theorem}[section]
\newtheorem{proposition}[theorem]{Proposition}
\newtheorem{lemma}[theorem]{Lemma}
\newtheorem{corollary}[theorem]{Corollary}
\theoremstyle{definition}
\newtheorem{definition}[theorem]{Definition}
\newtheorem{algorithm}[theorem]{Algorithm}
\newtheorem{remark}[theorem]{Remark}

\DeclareMathOperator{\Tr}{Tr}
\DeclareMathOperator{\supp}{supp}

\newcommand{\cD}{\mathcal{D}}
\newcommand{\cE}{\mathcal{E}}
\newcommand{\cH}{\mathcal{H}}
\newcommand{\ketbra}[2]{\lvert #1\rangle\!\langle #2\rvert}
\newcommand{\abs}[1]{\lvert #1\rvert}

\newcommand{\dd}{\,\mathrm{d}}
\usepackage{comment} 
\usepackage{braket}
\usepackage{algorithm}
\usepackage{algpseudocode}

%% file: sections/introduction.tex
\section{Introduction}
\label{sec:introduction}

Shadow tomography, introduced by Aaronson~\cite{aaronson2018shadow}, asks
how many copies of an unknown quantum state are required to predict the
expectation values of many known measurements.  Let $\cH$ be a
finite dimensional Hilbert space, let $\rho\in\cD(\cH)$ be unknown, and let
    $0 \preceq E_1,\ldots,E_M \preceq I$
be a known list of quantum observables.  The goal is to estimate all quantities
$\operatorname{Tr}(E_1\rho),\ldots,\operatorname{Tr}(E_M\rho)$
to additive accuracy $\varepsilon$, using as few independent copies of
$\rho$ as possible.

\begin{definition}[Shadow tomography]
\label{def:shadow-tomography}
A $T$-copy shadow-tomography strategy for $E_1,\ldots,E_M$ consists of a
POVM $\{N_z\}_{z\in\mathcal Z}$ on $\cH^{\otimes T}$ and a decoder
\[
    g\colon\mathcal Z\to[0,1]^M.
\]
For an input state $\rho$, write
\[
    \mathbb P_\rho[Z=z]
    \coloneqq
    \operatorname{Tr}(N_z\rho^{\otimes T}).
\]
The strategy has accuracy $\varepsilon$ and failure probability $\delta$ if
\[
    \sup_{\rho\in\cD(\cH)}
    \mathbb P_\rho
    \left[
        \max_{1\leq j\leq M}
        \left|
            g_j(Z)-\operatorname{Tr}(E_j\rho)
        \right|
        >
        \varepsilon
    \right]
    \leq
    \delta.
\]
\end{definition}

For every fixed observable list, one measurement and one decoder must work
simultaneously for every input state.  The strategy may be collective across
the $T$ copies, and it may depend on the known list
$E_1,\ldots,E_M$, but it cannot depend on the unknown state $\rho$.  This is
the original list-dependent problem of Aaronson~\cite{aaronson2018shadow}.
It is distinct from the classical-shadow setting, in which one first produces
a reusable classical description using a measurement chosen independently of
the observables that may later be queried~\cite{huangkuengpreskill2020}.

\subsection{Our result}

Our strategy is information-theoretic, we claim that there exists a collective POVM construction related to input observables, while not claiming it can be computationally efficient.

\begin{theorem}[Dimension-free polylogarithmic shadow tomography]
  \label{thm:main}
  There is a universal constant $C<\infty$ such that, for every
  finite-dimensional $\cH$, every $M\geq1$ and observables
  $E_1,\ldots,E_M$ with $0\preceq E_j\preceq I$, and all
  $0<\varepsilon\leq1$ and $0<\delta<1$, there is a shadow-tomography
  strategy using
  \[
  T=O\left(\varepsilon^{-2}\log(M)\log(M/\delta)\right)
  \]
  copies.  The bound is independent of $\dim\cH$.
\end{theorem}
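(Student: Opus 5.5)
We follow the route the abstract outlines, in three stages: a minimax reduction to a finite ensemble, a single collective measurement built from that ensemble with a per-observable mean-bias guarantee, and amplification by independent averaging together with geometric precision refinement.

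\emph{Stage 1: minimax reduction.} We fix a candidate decoder class and view the estimation task as a zero-sum game. The statistician chooses a POVM $\{N_z\}$ on $\cH^{\otimes T}$ and a decoder $g$. Nature chooses a state $\rho$. The payoff is the failure probability, or a smoothed surrogate such as $\mathbb E_\rho[\max_j \phi(|g_j(Z)-\Tr(E_j\rho)|)]$ with a convex ramp $\phi$.

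The payoff is linear in the POVM and in the distribution over $\rho$ after lifting to mixed strategies. The strategy sets are convex and compact once we discretize $\rho$ through the values $(\Tr(E_j\rho))_j\in[0,1]^M$ on an $\varepsilon/4$-grid. A minimax theorem (Sion/von Neumann) therefore says it suffices to handle every prior $\mu$ over states. By a further sparsification step (Carathéodory, or sampling $\mathrm{poly}(M/\varepsilon)$ states from $\mu$ and using a union bound over the $M$ observables), we may assume $\mu$ is a finite ensemble $\{(p_k,\sigma_k)\}_{k=1}^K$ with $\log K=O(\log M\cdot\mathrm{polylog}(1/\varepsilon))$.

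We also apply a signed-coordinate minimax per observable. This reduces the task to showing that, for each fixed prior and each sign pattern, the estimator's bias on each coordinate $j$ is at most $\varepsilon/2$ with constant probability.

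\emph{Stage 2: a recovery measurement.} For the finite ensemble we apply the pretty-good measurement on $\sigma_k^{\otimes t}$ with prior $p_k$. We output the label $\hat k$ and set $g_j=\Tr(E_j\sigma_{\hat k})$.

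The key claim is a conditional mean-bias bound: for $t=O(\log K/\varepsilon^2)=O(\log M/\varepsilon^2)$ copies,
\[
\mathbb E_{k\sim\mu,\hat k}\bigl|\Tr(E_j\sigma_{\hat k})-\Tr(E_j\sigma_k)\bigr|\le\varepsilon/4 \quad\text{for every } j .
\]
To prove it, we bound the PGM error by the Barnum--Knill bound, $P_{\rm err}\le 2\times$ the optimal error. We compare against a hypothetical per-observable test. Ensemble members $\sigma_{k'}$ whose $E_j$-value differs from that of $\sigma_k$ by more than $\varepsilon$ are distinguishable by measuring $E_j$ on each copy, with error $e^{-\Omega(t\varepsilon^2)}$. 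A union bound over the $K$ labels then costs $\log K$ in $t$.

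Here the averaging is over the prior $\mu$, not worst case. This is exactly what the minimax reduction permits, and it is what avoids any dependence on $\dim\cH$.

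\emph{Stage 3: amplification.} We take the median (or trimmed mean) of $r=O(\log(M/\delta))$ independent repetitions, each using $t$ fresh copies. A Chernoff bound per coordinate and a union bound over the $M$ coordinates push the failure probability to $\delta$. The total is $T=rt=O(\varepsilon^{-2}\log M\log(M/\delta))$.

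Geometric precision refinement handles the fact that the crude ensemble discretization and the $\varepsilon/4$ grid could introduce extra $\log(1/\varepsilon)$ factors. We run stages at precisions $2^{-i}$, each localizing the previous estimate, so that the copy counts sum geometrically to the final $\varepsilon^{-2}$ term.

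\emph{Main obstacle.} The hardest step is Stage 2. The PGM must recover a label whose observable values are simultaneously close to the truth for all $j$. The per-observable distinguishing arguments hold one $E_j$ at a time, and Barnum--Knill compares to the optimal \emph{label} error, which can be large even when the observable values are nearly correct.

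My first attempt would define equivalence classes of labels that agree to within $\varepsilon$ on coordinate $j$. I would then bound the PGM probability of leaving the class of $k$ using the pairwise fidelity bound
\[
P(\hat k=k'\mid k)\le \sqrt{p_{k'}/p_k}\,F(\sigma_k,\sigma_{k'})^{t}.
\]
For pairs differing by more than $\varepsilon$ in some $E_j$ we have $F\le 1-\Omega(\varepsilon^2)$, and summing over $k'$ gives the needed $\log K$ dependence.
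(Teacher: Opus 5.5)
Your proposal has a genuine gap. The whole argument runs through $\log K$, the size of the finite ensemble, and the step meant to make $\log K=O(\log M\,\mathrm{polylog}(1/\varepsilon))$ is unjustified. That reduction is essentially the content of the theorem, not a preprocessing step.

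Your Stage~2 bound (PGM, pairwise fidelity $F(\sigma_k,\sigma_{k'})^t\le(1-\varepsilon^2/2)^t$, union bound over competing labels) needs $t\gtrsim\log K/\varepsilon^2$. None of the proposed sparsifications delivers a small $K$:
\begin{itemize}
\item Minimax over a trace-distance net produces ensembles with $\log K$ of order $(\dim\cH)^2\log(1/s)$.
\item Carath\'eodory, applied to the operators on $\cH^{\otimes T}$ that determine the Bayes risk, only gives support size growing like $(\dim\cH)^{2T}$.
\item Gridding the value vectors gives $(1/\varepsilon)^M$ cells, and states with equal value vectors are still distinct hypotheses for the measurement.
\item Sampling states from $\mu$ fails because the true state is a fresh draw not among the samples. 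A measurement tuned to the samples would have to generalize uniformly over POVMs on $\cH^{\otimes t}$, which is dimension-dependent.
\end{itemize}

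The missing ingredient is a bias bound that never sees $|\mathcal X|$ (Lemma~\ref{lem:selected-reward}, Corollary~\ref{cor:conditional-bias}). It works as follows.
\begin{itemize}
\item An analysis-only label $J=(j,s)\in[M]\times\{\pm1\}$ is drawn from the ensemble label through a stochastic map $W$. The chain rule then gives $\sum_{t<N}I(J:Q_{t+1}\mid Q^t)\le\log(2M)$ for an ensemble of any size.
\item The averaged rotated-Petz recovery of $J$ from a uniformly random prefix $Q^t$ reproduces the joint law of $(J,Q_{t+1})$ within trace distance $\sqrt{\eta_t}$.
\item The recovered label is a $W$-post-processing of fixed effects $D_{x,t}$, so one measurement serves every $W$.
\item Detailed balance converts the reward difference into the signed bias, giving $\sum_xp_x\max_j|\theta_j(\rho_x)-\sum_yK(y\mid x)\theta_j(\rho_y)|\le\sqrt{\log(2M)/N}$.
\end{itemize}
This is what lets the net in Lemma~\ref{lem:uniform-bias} have arbitrary cardinality. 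The paper's PGM-based route, using a Barnum--Knill-type factor-two comparison, only reaches $\log^3(M/\delta)/\varepsilon^2$.

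Your Stage~3 also does not follow from what Stage~2 delivers. A bound on the mean of $Z_j$ gives no constant-probability accuracy: an estimator can have exactly the right mean while always being far from it. So a median of $O(\log(M/\delta))$ repetitions is unjustified. Stage~1 conflates ``bias at most $\varepsilon/2$'' with ``error at most $\varepsilon/2$ with constant probability.''

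The paper instead averages $O(\log(M/\delta)/\varepsilon^2)$ independent runs and applies Hoeffding. This costs $O(\varepsilon^{-4}\log M\log(M/\delta))$ (Corollary~\ref{cor:independent-averaging}). It then removes the extra $\varepsilon^{-2}$ with a specific refinement. At stage $s$ it calls that corollary at constant accuracy $1/4$ on the threshold observables $F_{j,s}$ acting on $\cH^{\otimes k_s}$, with $k_s\approx\epsilon_s^{-2}$. It then shifts $c_{j,s}$ by $\pm\epsilon_s/2$, so the costs sum geometrically to $O(\varepsilon^{-2}\log M\log(M/\delta))$. Your description of refinement as absorbing discretization losses does not supply this mechanism, and without it you do not get the $\varepsilon^{-2}$ dependence.
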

We improved the current best known 
dimension-independent sample complexity of shadow tomography, before which, the best known was $O(\sqrt{M}/\varepsilon^2)$~\cite{chenOdonnellPelecanosWright2026}.

\input{sections/related_work}

\subsection{Proof roadmap and organization}
Our proof consists of two stages. We first develop two measurement procedures for a finite ensemble of candidate states: a sequential pretty-good measurement (PGM) and a recovery measurement. We then use minimax and continuity arguments to convert their finite-prior guarantees into worst-case shadow-tomography guarantees that hold uniformly over all input states. The recovery-measurement route ultimately yields the main sample-complexity bound in \cref{thm:main}.

In \cref{sec:algorithms}, we introduce the two finite-ensemble measurement protocols used throughout the paper. We first define the sequential PGM in \cref{subsec:PGM,subsec:label-walk}. We then introduce the recovery measurement in \cref{subsec:recovery-pgm}. Finally, we establish the near-optimality property of the PGM that will be used to analyze the sequential procedure. In \cref{sec:posterior-localization}, we analyze the sequential PGM and obtain, for finite ensembles, a dimension-independent sample complexity of $O\!\left(\frac{\log^3(M/\delta)}{\varepsilon^2}\right)$. In \cref{sec:recovery-analysis}, we instead analyze the averaged recovery measurement. Although this route initially gives only a weaker conditional mean-bias guarantee, it achieves bias $O\!\left(\sqrt{\frac{\log M}{N}}\right)$ using $N$ copies, simultaneously over all observables.

Finally, \cref{sec:worst-case} converts these finite-prior guarantees into worst-case shadow-tomography protocols. In \cref{subsec:common-finite-decoder}, we first introduce a common finite output alphabet so that all prior-dependent measurements lie in the same compact convex strategy space. In \cref{subsec:finite-state-minimax}, a minimax argument removes the dependence on the prior, and in \cref{subsec:trace-distance-net}, a trace-distance net extends the resulting guarantee from a finite family of states to all states. Applying this framework to the sequential PGM yields the dimension-independent sample complexity $O\!\left(\frac{\log^3(M/\delta)}{\varepsilon^2}\right)$.

For the recovery-measurement route, \cref{subsec:uniform-bias} first applies minimax to convert the finite-prior conditional mean-bias bound into a uniform bounded-bias estimator for all states using $O\!\left(\frac{\log M}{\varepsilon^2}\right)$ copies. Independent repetition followed by empirical averaging then gives the preliminary shadow-tomography bound $O\!\left(\frac{\log M\,\log(M/\delta)}{\varepsilon^4}\right)$. Finally, in \cref{subsec:geometric-refinement}, we use this constant-accuracy procedure as a subroutine in an iterative refinement scheme, which yields our main bound $O\!\left(\frac{\log M\,\log(M/\delta)}{\varepsilon^2}\right)$.

%% file: sections/related_work.tex
\subsection{Relation to previous work}
\label{subsec:related-work}

\textbf{Shadow Tomography.}
The shadow tomography problem was introduced by Aaronson~\cite{aaronson2018shadow}, who gave an algorithm using $\widetilde O(\varepsilon^{-4}\log^4 M\log d)$ copies and established a lower bound of order $\min\{d^2,\log M\}/\varepsilon^2$. B{\u a}descu and O'Donnell subsequently improved the upper bound to $\widetilde O(\varepsilon^{-4}\log^2 M\log d)$, while quantum event learning provides an alternative route to the same asymptotic sample complexity~\cite{badescuodonnell2024,wattsbostanci2022}. Although these results depend only polylogarithmically on the number $M$ of observables, they retain a logarithmic dependence on the Hilbert-space dimension $d$.

A separate line of work seeks dimension-independent sample complexity. Sinha gave a dimension-independent protocol using $O(\sqrt M\log M/\varepsilon^2)$ copies with constant success probability~\cite{sinha2024}. Chen, Li, and Liu obtained the optimal dimension-independent sample complexity $O(\varepsilon^{-2}\log M)$~\cite{chenliliu2024} in higH precision regime when $\varepsilon \leq O(d^{-12})$. Pelecanos, Spilecki, and Wright later extended the range in which this optimal rate is achievable to the broader regime $\varepsilon \leq O(1/d)$~\cite{pelecanosspileckiwright2025}. Recently, Chen, O'Donnell, 
Pelecanos, and Wright further improved the dimension-independent sample complexity to $O(\sqrt M/\varepsilon^2)$~\cite{chenOdonnellPelecanosWright2026}.

Shadow tomography has also been studied in the online setting, in which the observables 
arrive sequentially and the protocol must answer each query before seeing the subsequent ones. 
Aaronson and Rothblum developed an online protocol through a connection 
between gentle quantum measurement and differential privacy~\cite{aaronson2019gentlemeasurementquantumstates}. Their protocol uses 
$\widetilde O(\varepsilon^{-8}\log^2 M\log^2 d)$ copies. The protocol of B{\u a}descu 
and O'Donnell also applies in the online setting and improves the sample complexity 
to $\widetilde O(\varepsilon^{-4}\log^2 M\log d)$. Also, Chen, O'Donnell, 
Pelecanos, and Wright further improved the online sample complexity to $O(\min\{\log M\sqrt{\log d}/\varepsilon^3,\sqrt M/\varepsilon^2\})$~\cite{chenOdonnellPelecanosWright2026}.

Several additional lines of work investigate shadow tomography under structural 
assumptions on the observables, the unknown state, or the allowed measurement scheme. 
For structured classes of observables, Huang, Kueng, and Preskill introduced the 
classical-shadows framework, a closely related variant of shadow tomography designed 
to predict many properties of an unknown quantum state from a compact classical 
representation~\cite{huangkuengpreskill2020}. For Pauli observables and restricted 
measurements, King, Gosset, Kothari, and Babbush developed triply efficient protocols 
that are simultaneously efficient in sample complexity, classical computation time, and 
measurement entanglement~\cite{kinggossetkotharibabbush2025}. For thermal states, Chen 
and Gily{\'e}n obtained the optimal $O(\log M)$ sample complexity when suitable access 
to the underlying Hamiltonian is available~\cite{CG26}. Finally, Chen, Gong, and Zhou 
gave instance-optimal characterizations of shadow tomography with one-copy and few-copy 
measurements below an instance-dependent precision threshold~\cite{chengongzhou2026}.
\newline

\textbf{Pretty Good Measurement}. The pretty-good or square-root measurement has its roots in quantum
hypothesis testing
~\cite{belavkin1975,hausladenwootters1994}.  Barnum and Knill proved its
near-optimality for discrimination and recovery tasks
~\cite{barnumknill2002}.  The precise quadratic comparison used here is due
to Mishra, Lami, and Wilde, their general positive-score theorem compares
the optimal gain with the gain of the generalized square-root measurement,
and its Euclidean mean-square specialization is at most twice the optimum~\cite{mishralamiwilde2025}. We include its short
finite-dimensional specialization in
\cref{subsec:quadratic-comparison} so that the constants and support
conventions needed later are transparent. 
\newline

\textbf{Recovery Measurement}
The recovery map viewpoint originated from Petz's characterization of equality in the data processing inequality and quantum sufficiency~\cite{petz1986sufficient, petz1988sufficiency}, which was connected to equality in strong subadditivity and quantum Markov structure~\cite{haydenjozsapetzwinter2004}. Fawzi and Renner later showed that small conditional mutual information inplies approximate recoverability~\cite{fawzirenner2015}, which motivates lots of strengthened data processing and explicit recovery results based on the rotated or averaged version of Petz maps~\cite{wilde2015recoverability, sutterfawzirenner2016, sutterbertatomamichel2017, junge2018recovery}. In particular, Junge et, al. constructed a universal recovery map, which only depends on the reference state and the corresponding channel, by averaging rotated Petz maps with an explicit probability density~\cite{junge2018recovery}. Our recovery measurements is obtained by specifying the universal recovery map to a classical label register and reading out the recovered label, the recovery map is used as an estimation measurement rather than a procedure of reconstructing the quantum subsystem. 
\newline

\textbf{Minimax theorem}. Finally, the passage from an ensemble-average guarantee to a worst-case
measurement is a minimax argument, rather than an additional tomography
measurement.  We use the finite-dimensional form of Sion's theorem
~\cite{sion1958}, which provides a general framework for solving minimax
problems in quantum information theory.  Broader quantum minimax theorems and least-favourable
priors are developed in~\cite{tanaka2015}.  A fixed finite decoder is
introduced before minimax.  This small ordering detail is essential because
it places every prior-dependent construction in the same compact convex
strategy space.

%% file: sections/algorithms.tex
\section{Finite ensembles and the two measurements}
\label{sec:algorithms}

\subsection{Pretty-good measurement}
\label{subsec:PGM}
Let
\[
  \mathcal F=\{(q_x,\rho_x):x\in\mathcal X\},
  \qquad q_x>0,\qquad \sum_xq_x=1,
\]
be a finite ensemble on a Hilbert space $\mathcal H$. Operationally, the
label $X$ is drawn with probability $q_x$, and multiple identical copies
of the corresponding state $\rho_X$ are prepared. The receiver then
performs a measurement in an attempt to identify $X$. This is known as
the \textit{state discrimination} problem. A commonly used measurement
for this problem is the pretty-good measurement (PGM).

\begin{definition}
  \label{lem:completed-pgm}
  The operators
  \begin{equation}
    G_x
    \coloneqq
    \bar\rho^{-1/2}q_x\rho_x^{\otimes T}\bar\rho^{-1/2}
    +q_x(I-P),
    \qquad x\in\mathcal X,
    \label{eq:completed-pgm}
  \end{equation}
  form the completed pretty-good measurement, where
  \[
    \bar\rho\coloneqq\sum_xq_x\rho_x^{\otimes T}
  \]
  is the average state of the $T$-copy ensemble, $P$ denotes the
  orthogonal projector onto $\supp\bar\rho$, and the inverse is taken on
  $\supp\bar\rho$. The second term in Eq.\eqref{eq:completed-pgm} has zero
  contribution on every state $\rho_x^{\otimes T}$ in the ensemble.
\end{definition}

Zero-weight states may always be removed before applying the definition.
This support convention avoids artificial inverse-eigenvalue assumptions
and will be used without further comment.

\subsection{$r$-round sequential pretty-good measurement}
\label{subsec:label-walk}

For this procedure, write the known ensemble as
$\cE=\{(p_x,\rho_x):x\in\mathcal X\}$, with prior $p_x$.
In this work, we consider a sequential variant of the PGM acting on $T$
copies of the unknown state. We assume that $T=rn$ and partition the
copies into $r$ groups of equal size $n$. We then update the estimate of
the unknown label iteratively by performing a PGM on each group of fresh
copies.

In the first round, we perform the ordinary PGM and obtain the label
estimate $Y_1$. We then perform a PGM on a new group of copies, but
replace the original prior distribution by the posterior distribution
obtained from the previous measurement outcomes. Thus, in each round, we
perform the PGM associated with the updated prior distribution. Specifically, write
\[
  H_{s-1}\coloneqq(Y_1,\ldots,Y_{s-1}),
  \qquad H_0\coloneqq\varnothing,
\]
for the measurement history before round $s$. For a specific history
$h=(y_1,\ldots,y_{s-1})$ obtained from the first $s-1$ rounds, denote the
updated prior distribution and its support by
\begin{equation*}
  p_x^h
  \coloneqq
  \Pr[X=x\mid H_{s-1}=h],
  \qquad
  \mathcal X_h
  \coloneqq
  \{x\in\mathcal X:p_x^h>0\}.
  \end{equation*}
At round $s$, the relevant ensemble is therefore the posterior ensemble
\[
  \mathcal F_h^{(s)}
  \coloneqq
  \{(p_x^h,\rho_x^{\otimes n}):x\in\mathcal X_h\}.
\]
Set
\[
  \bar\rho_h^{(s)}
  \coloneqq
  \sum_{x\in\mathcal X_h}p_x^h\rho_x^{\otimes n},
  \qquad
  \Pi_h^{(s)}
  \coloneqq
  \Pi_{\supp\bar\rho_h^{(s)}}.
\]
The measurement implemented at round $s$ is the completed PGM of this
posterior ensemble.
\begin{equation}
  G_{y\mid h}^{(s)}
  \coloneqq
  \bigl(\bar\rho_h^{(s)}\bigr)^{-1/2}
  p_y^h\rho_y^{\otimes n}
  \bigl(\bar\rho_h^{(s)}\bigr)^{-1/2}
  +p_y^h\bigl(I-\Pi_h^{(s)}\bigr),
  \qquad y\in\mathcal X_h.
  \label{eq:conditional-pgm}
\end{equation}
Specifically, the conditional probability of obtaining the outcome $y$
is
\[
  \Pr[Y_s=y\mid X=x,H_{s-1}=h]
  =
  \Tr\!\left(G_{y\mid h}^{(s)}\rho_x^{\otimes n}\right).
\]
By Bayes' rule, the marginal probability of obtaining the outcome $y$ is
\[
\begin{aligned}
  \Pr[Y_s=y\mid H_{s-1}=h]
  &=
  \sum_{x\in\mathcal X_h}
  p_x^h\Tr\!\left(G_{y\mid h}^{(s)}\rho_x^{\otimes n}\right)
  \\
  &=
  \Tr\!\left(G_{y\mid h}^{(s)}\bar\rho_h^{(s)}\right)
  \\
\end{aligned}
\]

\noindent
After observing $Y_s=y$, the updated history is $h'\coloneqq(h,y)$, and
Bayes' rule gives
\begin{align}
  p_x^{h'}
  &=
  \Pr[X=x\mid H_{s-1}=h,Y_s=y]
  =
  \frac{
    p_x^h\Tr\!\left(G_{y\mid h}^{(s)}\rho_x^{\otimes n}\right)
  }{
    p_y^h
  }.
  \label{eq:posterior-update}
\end{align}
Hence, the posterior distribution at the next round can be derived
entirely from the posterior distribution at the current round and the
observed measurement outcome. Starting from
$p_x^{\varnothing}=p_x$, we repeat this construction for
$s=1,\ldots,r$ and return the final PGM outcome
\[
  J\coloneqq Y_r.
\]
We also keep the complete classical history $H_r$ for the decoder in
Theorem~\ref{thm:recursive-posterior-localization}, maintaining this record
leaves every conditional measurement in Eq.\eqref{eq:conditional-pgm}
unchanged.
The complete algorithm can be written as follows.

\begin{algorithm}[H]
  \caption{The $r$-round sequential posterior PGM}
  \label{alg:sequential-pgm}
  \begin{algorithmic}[1]
    \Require Ensemble $\cE=\{(p_x,\rho_x):x\in\mathcal X\}$, integers
      $r,n\geq1$, and $r$ groups $B_1,\ldots,B_r$, each containing $n$ fresh
      copies of the unknown state $\rho_X$
    \Ensure The complete history $H_r\in\mathcal X^r$ and final ensemble index $Y_r\in\mathcal X$
    \State $h\gets\varnothing$ and $p_x^h\gets p_x$ for every
      $x\in\mathcal X$
    \For{$s=1,\ldots,r$}
      \State $\mathcal X_h\gets\{x\in\mathcal X:p_x^h>0\}$
      \State $\displaystyle
        \bar\rho_h^{(s)}\gets
        \sum_{x\in\mathcal X_h}p_x^h\rho_x^{\otimes n}$
      \State $\Pi_h^{(s)}\gets\Pi_{\supp\bar\rho_h^{(s)}}$
      \State Construct the completed PGM
        $\{G_{y\mid h}^{(s)}:y\in\mathcal X_h\}$ according to
        Eq.\eqref{eq:conditional-pgm}
      \State Measure $B_s$ with this PGM and denote the outcome by $Y_s$
      \State $h'\gets(h,Y_s)$
      \ForAll{$x\in\mathcal X$}
        \State $\displaystyle
          p_x^{h'}\gets
          \frac{
            p_x^h\Tr\!\left(
              G_{Y_s\mid h}^{(s)}\rho_x^{\otimes n}
            \right)
          }{p^h_{Y_s}}$
      \EndFor
      \State $h\gets h'$
    \EndFor
    \State \Return $(H_r,Y_r)$, where $H_r=h$
  \end{algorithmic}
\end{algorithm}

The construction depends only on the ensemble since it contains no information about the observable $E$. In general, it is not equivalent to the
ordinary PGM of the $rn$-copy ensemble, because the prior distribution
is recomputed after each observed outcome.

For a positive-probability history, set $G_{y\mid h}^{(s)}=0$ when
$y\notin\mathcal X_h$.  At a zero-probability history choose a full
$\mathcal X$-outcome POVM, which can define every summand in
Eq.\eqref{eq:flattened-sequential-pgm}.  Moreover, at a positive probability history,
$\Tr(G_{y\mid h}^{(s)}\bar\rho_h^{(s)})=p_y^h$ by
Eq.\eqref{eq:conditional-pgm}, which supplies the denominator in
Eq.\eqref{eq:posterior-update}.And for a history that occurs with probability zero, the corresponding
conditional PGM can be defined arbitrarily. This convention has no
effect on the operational procedure or any of its outcome probabilities.

\begin{remark}
  \label{prop:flatten}
  After discarding the rest of the history and retaining only $J$, the sequential procedure in \cref{alg:sequential-pgm} is equivalent to
  the $\lvert\mathcal X\rvert$-outcome POVM
  \begin{equation}
    M_j^{[r]}
    \coloneqq
    \sum_{y_1,\ldots,y_{r-1}\in\mathcal X}
      G_{y_1\mid\varnothing}^{(1)}
      \otimes
      G_{y_2\mid y_1}^{(2)}
      \otimes\cdots\otimes
      G_{j\mid(y_1,\ldots,y_{r-1})}^{(r)},
    \qquad j\in\mathcal X,
    \label{eq:flattened-sequential-pgm}
  \end{equation}
  on $\mathcal H^{\otimes rn}$. More precisely, for every
  $x,j\in\mathcal X$,
  \[
    \Pr[J=j\mid X=x]
    =
    \Tr\!\left(M_j^{[r]}\rho_x^{\otimes rn}\right).
  \]
\end{remark}

\input{sections/recovery_algorithm}

\subsection{Quadratic labels and the factor-two theorem}
\label{subsec:quadratic-comparison}

We now return to the one-block ensemble $\mathcal F$ introduced at the
beginning of this section and take $T=1$. Thus,
\[
  \bar\rho\coloneqq\sum_xq_x\rho_x.
\]
The key feature of the PGM is that its operators do not depend on the
scalar quantity that one ultimately wishes to estimate. Assign an
arbitrary real label $z_x$ to each state in $\mathcal F$. A general
scalar estimation strategy consists of a POVM
$\{N_w:w\in\mathcal W\}$ and a reported value
$\widehat z_w \in \mathbb R$ for each measurement outcome. Its mean-square risk is
\begin{equation}
  \sum_{x\in\mathcal X}\sum_{w\in\mathcal W}
    q_x\Tr(N_w\rho_x)(z_x-\widehat z_w)^2.
  \label{eq:scalar-risk}
\end{equation}

The following finite-dimensional form of the Personick estimator
\cite{personick1971,helstrom1976} is included to fix the normalization.

\begin{proposition}
  \label{prop:personick}
  The minimum of Eq.\eqref{eq:scalar-risk} over all scalar strategies is
  \begin{equation*}
    \sum_xq_xz_x^2
    -
    \Tr\!\left(
      L\sum_xq_xz_x\rho_x
    \right),
      \end{equation*}
  where $L$ is the unique Hermitian solution of the Sylvester equation
  \begin{equation}
    \bar\rho L+L\bar\rho
    =
    2\sum_xq_xz_x\rho_x
    \label{eq:sylvester}
  \end{equation}
  on $\supp(\bar\rho)$. The optimal strategy is obtained by extending
  $L$ by zero on $\ker\bar\rho$, measuring its full spectral
  projection-valued measurement (PVM), and
  reporting the corresponding eigenvalues.
\end{proposition}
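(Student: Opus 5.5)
We prove this in two stages. First we reduce to projective strategies with an operator-valued estimator. Then we solve a convex quadratic minimization over Hermitian operators.

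\textbf{Stage 1: reduction to an operator.} Fix any strategy $\{N_w,\widehat z_w\}$ and write $\sigma\coloneqq\sum_xq_xz_x\rho_x$. Expanding the square, the risk in Eq.\eqref{eq:scalar-risk} equals
\[
  \sum_xq_xz_x^2
  -2\sum_w\widehat z_w\Tr(N_w\sigma)
  +\sum_w\widehat z_w^2\Tr(N_w\bar\rho).
\]
Set $A\coloneqq\sum_w\widehat z_wN_w$, which is Hermitian. The middle term is exactly $-2\Tr(A\sigma)$. For the last term we use the operator Cauchy--Schwarz (Kadison-type) inequality for POVMs,
\[
  \sum_w\widehat z_w^2N_w\succeq A^2.
\]
It follows because the block operator with entries $\sqrt{N_w}$, paired with the vector $(\widehat z_w\sqrt{N_w})_w$, gives a positive Gram-type difference $\sum_w(\widehat z_w-A)N_w(\widehat z_w-A)\succeq0$, whose expansion is $\sum_w\widehat z_w^2N_w-A^2\succeq0$ since $\sum_wN_w=I$. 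Hence the risk is at least
\[
  R(A)\coloneqq\sum_xq_xz_x^2-2\Tr(A\sigma)+\Tr(A^2\bar\rho).
\]
Conversely, for any Hermitian $A$, measuring its spectral PVM and reporting eigenvalues gives equality, since then $\sum_w\widehat z_w^2N_w=A^2$. So the minimum risk equals $\inf_{A=A^\dagger}R(A)$.

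\textbf{Stage 2: minimizing $R$.} The support of $\sigma$ lies in $\supp\bar\rho$, because $q_xz_x\rho_x$ is supported inside the support of $q_x\rho_x\preceq\bar\rho$. Decompose $A$ in blocks with respect to $P$ and $I-P$. The off-diagonal and $\ker\bar\rho$ blocks contribute a nonnegative amount to $\Tr(A^2\bar\rho)$ and nothing to $\Tr(A\sigma)$. So we may take $A=PAP$, which also matches extending $L$ by zero.

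On $\supp\bar\rho$, $R$ is a strictly convex quadratic in $A$, since $\Tr(A^2\bar\rho)=\|A\bar\rho^{1/2}\|_2^2$ with $\bar\rho$ invertible there. Its gradient condition, obtained by symmetrizing $\Tr(A^2\bar\rho)$, is
\[
  \bar\rho A+A\bar\rho=2\sigma,
\]
which is Eq.\eqref{eq:sylvester}.

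Existence and uniqueness of the solution follow in the eigenbasis of $\bar\rho$: $L_{ij}=2\sigma_{ij}/(\lambda_i+\lambda_j)$ with $\lambda_i>0$. This $L$ is Hermitian because $\sigma$ is. The minimum value then follows by taking the trace of the Sylvester equation against $L$:
\[
  \Tr(L^2\bar\rho)=\tfrac12\Tr\bigl(L(\bar\rho L+L\bar\rho)\bigr)=\Tr(L\sigma),
\]
so $R(L)=\sum_xq_xz_x^2-\Tr(L\sigma)$.

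The main obstacle is the operator inequality $\sum_w\widehat z_w^2N_w\succeq A^2$ for non-projective POVMs, which I would verify via the $\sum_w(\widehat z_w-A)N_w(\widehat z_w-A)$ expansion above. A second point to handle carefully is the support bookkeeping on $\ker\bar\rho$, which is needed to make the minimizer and the "extend by zero" convention precise.
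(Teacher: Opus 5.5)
Your proof is correct, but it is organized differently from the paper's.

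\textbf{The paper's route.} The paper first writes down $L$ entrywise in the eigenbasis of $\bar\rho$ and extends it by zero. The Sylvester equation then holds on the whole space, because your $\sigma=\sum_xq_xz_x\rho_x$ lives on $\supp\bar\rho$. It substitutes $2\sigma=\bar\rho L+L\bar\rho$ into the expanded risk of an arbitrary strategy and completes the square once. This writes the risk as $\sum_xq_xz_x^2-\Tr(L\sigma)$ plus the remainder $\sum_w\bigl\|N_w^{1/2}(\widehat z_wI-L)\bar\rho^{1/2}\bigr\|_2^2$.

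\textbf{Your route.} You instead pass through the first-moment operator $A=\sum_w\widehat z_wN_w$, which splits that same remainder into two separately nonnegative gaps. Using $\sum_wN_w=I$, one has the operator identity $\sum_w(\widehat z_w-L)N_w(\widehat z_w-L)=\bigl(\sum_w\widehat z_w^2N_w-A^2\bigr)+(A-L)^2$. One also has $\Tr\bigl((A-L)^2\bar\rho\bigr)=R(A)-R(L)$.

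\textbf{What each buys.} Your route gives a structural fact the paper does not state: every strategy is dominated by the spectral PVM of its own first-moment operator $A$. It also explains where the Sylvester equation comes from, namely as the stationarity condition of a strictly convex quadratic. The paper's route is shorter and purely algebraic. It needs no convexity or gradient argument and no separate block decomposition for $\ker\bar\rho$.

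You could get the same economy in your Stage~2 from the identity $R(A)=R(L)+\bigl\|(A-L)\bar\rho^{1/2}\bigr\|_2^2$. It is valid for every Hermitian $A$ once $L$ is extended by zero, and it replaces both your block reduction and your first-order condition.
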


\begin{proof}
  Given the spectral decomposition
  \[
    \bar\rho=\sum_ad_a\ketbra{a}{a}
  \]
  on $\supp\bar\rho$, define the operator $L$ by
  \begin{equation*}
    \bra{a}L\ket{b}
    =
    \frac{
      2\bra{a}\bigl(\sum_xq_xz_x\rho_x\bigr)\ket{b}
    }{d_a+d_b}.
  \end{equation*}
  This is a solution of Eq.\eqref{eq:sylvester} by construction. Moreover,
  every $d_a$ appearing on $\supp\bar\rho$ is strictly positive.
  Therefore, $\bar\rho$ and $-\bar\rho$ have disjoint spectra on
  $\supp\bar\rho$, so the solution of Eq.\eqref{eq:sylvester} on this
  subspace is unique.
  Using Eq.\eqref{eq:sylvester} and $\sum_wN_w=I$, a direct expansion of the
  squared loss gives
  \begin{align}
    \sum_{x,w}q_x\operatorname{Tr}(N_w\rho_x)(z_x-\widehat z_w)^2
    &=
    \sum_xq_xz_x^2
    -2\sum_{w}\operatorname{Tr}\left(
      \widehat{z}_wN_w
      \left(\sum_x q_xz_x\rho_x\right)
    \right)
    +\sum_w\operatorname{Tr}(\widehat z_w^2N_w\bar\rho)
    \notag\\
    &=
    \sum_xq_xz_x^2
    -\sum_{w}\operatorname{Tr}\left(
      \widehat{z}_wN_w(\bar\rho L+L\bar\rho)
    \right)
    +\sum_w\operatorname{Tr}(\widehat z_w^2N_w\bar\rho)
    \notag\\
    &=
    \sum_xq_xz_x^2
    -\sum_{w}\widehat{z}_w\operatorname{Tr}\left(
      \bar\rho^{1/2}N_wL\bar\rho^{1/2}
    \right)
    \notag\\
    &\quad
    -\sum_{w}\widehat{z}_w\operatorname{Tr}\left(
      \bar\rho^{1/2}LN_w\bar\rho^{1/2}
    \right)
    +\sum_w\widehat z_w^2\operatorname{Tr}\left(
      \bar\rho^{1/2}N_w\bar\rho^{1/2}
    \right)
    \notag\\
    &=
    \sum_xq_xz_x^2
    -\sum_w
      \operatorname{Tr}\!\left[
        \bar\rho^{1/2}L
        N_w
        L\bar\rho^{1/2}
      \right]
    \notag\\
    &\quad+
    \sum_w
      \operatorname{Tr}\!\left[
        \bar\rho^{1/2}(\widehat z_wI-L)
        N_w
        (\widehat z_wI-L)\bar\rho^{1/2}
      \right].
    \label{eq:personick-completion}
  \end{align}
  Each summand in the final term is nonnegative because
  \[
    \operatorname{Tr}\!\left[
      \bar\rho^{1/2}(\widehat z_w I-L)
      N_w
      (\widehat z_w I-L)\bar\rho^{1/2}
    \right]
    =
    \left\|
      N_w^{1/2}(\widehat z_w I-L)\bar\rho^{1/2}
    \right\|_2^2
    \geq 0.
  \]
  Hence, the entire final term is nonnegative. Furthermore, using
  $\sum_wN_w=I$, we obtain
  \[
  \begin{aligned}
    \sum_w
      \operatorname{Tr}\!\left[
        \bar\rho^{1/2}L
        N_w
        L\bar\rho^{1/2}
      \right]
    &=
    \operatorname{Tr}(L\bar\rho L)
    \\
    &=
    \frac{1}{2}
    \operatorname{Tr}\!\left[
      L(\bar\rho L+L\bar\rho)
    \right]
    \\
    &=
    \operatorname{Tr}\!\left(
      L\sum_xq_xz_x\rho_x
    \right).
  \end{aligned}
  \]
  Substituting this identity into Eq.\eqref{eq:personick-completion} gives
  \[
    \sum_{x,w}q_x\operatorname{Tr}(N_w\rho_x)
    (z_x-\widehat z_w)^2
    \geq
    \sum_xq_xz_x^2
    -
    \Tr\!\left(
      L\sum_xq_xz_x\rho_x
    \right).
  \]
  If the measurement is the spectral PVM of $L$ and the reported values
  are the corresponding eigenvalues, every summand in the final term of
  Eq.\eqref{eq:personick-completion} vanishes. Thus, the lower bound is
  attained.
\end{proof}

\begin{theorem}
  \label{thm:factor-two}
  This is the finite-dimensional quadratic specialization of
  \cite[Theorem~3.6]{mishralamiwilde2025}. If the PGM outcome $Y$ is
  decoded as $z_Y$, then
  \[
    \sum_{x,y}
      q_x\operatorname{Tr}(G_y\rho_x)(z_x-z_y)^2
    \leq
    2
    \inf_{\{N_w\},\{\widehat z_w\}}
    \sum_{x,w}
      q_x\operatorname{Tr}(N_w\rho_x)(z_x-\widehat z_w)^2.
  \]
  The PGM is determined entirely by $\mathcal F$ and is independent of
  the labels $z$.
\end{theorem}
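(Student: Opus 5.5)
The plan is to compute the mean-square risk of the PGM decoded as $z_Y$ in closed form, and compare it term by term with the Personick value from \cref{prop:personick} in the eigenbasis of $\bar\rho$. Write
\[
A\coloneqq\sum_x q_x z_x\rho_x ,\qquad S\coloneqq\sum_x q_x z_x^2 .
\]
Since $q_x\rho_x\preceq\bar\rho$, we have $\supp\rho_x\subseteq\supp\bar\rho$. Hence the completion term $q_y(I-P)$ contributes nothing to $\operatorname{Tr}(G_y\rho_x)$, and every trace can be computed on $\supp\bar\rho$.

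\textbf{Step 1: expand the PGM risk.} Expanding the square gives
\[
\sum_{x,y}q_x\operatorname{Tr}(G_y\rho_x)(z_x-z_y)^2
= S-2\sum_y z_y\operatorname{Tr}(G_yA)+\sum_y z_y^2\operatorname{Tr}(G_y\bar\rho).
\]
For the last term, cyclicity gives $\operatorname{Tr}(G_y\bar\rho)=q_y\operatorname{Tr}(\rho_y P)=q_y$, so it equals $S$. For the cross term, substituting the definition of $G_y$ and summing over $y$ gives
\[
\sum_y z_y\operatorname{Tr}(G_yA)=\operatorname{Tr}\bigl(\bar\rho^{-1/2}A\bar\rho^{-1/2}A\bigr)\eqqcolon K.
\]
Thus the PGM risk equals exactly $2(S-K)$.

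\textbf{Step 2: reduce to a scalar inequality.} By \cref{prop:personick}, the infimum on the right-hand side is $S-\operatorname{Tr}(LA)$. The claim is therefore equivalent to $K\ge\operatorname{Tr}(LA)$.

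\textbf{Step 3: compare in the eigenbasis of $\bar\rho$.} Take the spectral decomposition $\bar\rho=\sum_a d_a\ketbra{a}{a}$ on the support, with every $d_a>0$. Since $A$ is Hermitian,
\[
K=\sum_{a,b}\frac{|\bra{a}A\ket{b}|^2}{\sqrt{d_ad_b}} .
\]
Using the explicit solution for $L$ from the proof of \cref{prop:personick},
\[
\operatorname{Tr}(LA)=\sum_{a,b}\frac{2\,|\bra{a}A\ket{b}|^2}{d_a+d_b} .
\]
The AM--GM inequality $\sqrt{d_ad_b}\le (d_a+d_b)/2$ then gives the comparison termwise.

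\textbf{Where care is needed.} The only genuine obstacle is the support bookkeeping: the PGM and $L$ both use inverses on $\supp\bar\rho$, and the completion term must be shown to drop out. Both are handled by the inclusion $\supp\rho_x\subseteq\supp\bar\rho$. Finally, label-independence of the PGM is immediate, since the operators $G_y$ are built from $\mathcal F$ alone.
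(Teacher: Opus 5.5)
Your proposal is correct and follows essentially the same route as the paper: you use the marginal identity $\operatorname{Tr}(G_y\bar\rho)=q_y$ to reduce the PGM risk to $2\sum_x q_xz_x^2-2\operatorname{Tr}(A\bar\rho^{-1/2}A\bar\rho^{-1/2})$, then compare with the Personick value termwise in the eigenbasis of $\bar\rho$ via $1/\sqrt{d_ad_b}\ge 2/(d_a+d_b)$. Your explicit use of $\supp\rho_x\subseteq\supp\bar\rho$ to dispose of the completion term is a detail the paper leaves implicit.
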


\begin{proof}
  The symmetry and marginal identities of the PGM-induced joint
  distribution give
  \begin{align}
    \sum_{x,y}q_x\Tr(G_y\rho_x)(z_x-z_y)^2
    &=
    \sum_{x,y}q_x\Tr(G_y\rho_x)z_x^2
    -
    2\sum_{x,y}q_x\Tr(G_y\rho_x)z_xz_y
    +
    \sum_{x,y}q_x\Tr(G_y\rho_x)z_y^2
    \notag\\
    &=
    2\sum_xq_xz_x^2
    -
    2\sum_{x,y}q_x\Tr(G_y\rho_x)z_xz_y
    \notag\\
    &=
    2\sum_xq_xz_x^2
    -
    2\Tr\!\left[
      \left(\sum_xq_xz_x\rho_x\right)
      \bar\rho^{-1/2}
      \left(\sum_yq_yz_y\rho_y\right)
      \bar\rho^{-1/2}
    \right].
    \label{eq:pgm-risk-formula}
  \end{align}
  In the eigenbasis
  $\bar\rho=\sum_ad_a\ketbra{a}{a}$ on $\supp\bar\rho$, we have
  \[
    \operatorname{Tr}\!\left[
      \left(\sum_xq_xz_x\rho_x\right)
      \bar\rho^{-1/2}
      \left(\sum_yq_yz_y\rho_y\right)
      \bar\rho^{-1/2}
    \right]
    =
    \sum_{a,b}
    \frac{
      \abs{
        \bra{a}
        \left(\sum_xq_xz_x\rho_x\right)
        \ket{b}
      }^2
    }{\sqrt{d_ad_b}}.
  \]
  Moreover,
  \[
    \operatorname{Tr}\!\left(
      L\sum_xq_xz_x\rho_x
    \right)
    =
    \sum_{a,b}
    \frac{
      2\abs{
        \bra{a}
        \left(\sum_xq_xz_x\rho_x\right)
        \ket{b}
      }^2
    }{d_a+d_b}.
  \]
  Since
  \[
    \frac{1}{\sqrt{d_ad_b}}
    \geq
    \frac{2}{d_a+d_b}.
  \]
  Substituting this inequality into the preceding expressions and
  applying Proposition~\ref{prop:personick} proves the theorem.
\end{proof}

The general theorem of Mishra, Lami, and Wilde
\cite{mishralamiwilde2025} applies to continuous ensembles,
infinite-dimensional systems, and substantially more general score
functions. \Cref{thm:factor-two} is precisely its quadratic consequence
in the present finite-dimensional setting.

%% file: sections/recovery_algorithm.tex
\subsection{The averaged-recovery label measurement}
\label{subsec:recovery-pgm}

We now use the same finite ensemble $\cE=\{(p_x,\rho_x):x\in\mathcal X\}$
with its original prior.  All states and probabilities are known, while
$X$ is unknown.  Set
\[
  \theta_j(\rho)\coloneqq\Tr(E_j\rho),
  \qquad
  \vartheta(\rho)\coloneqq
  \bigl(\operatorname{Tr}(E_1\rho),\ldots,\operatorname{Tr}(E_M\rho)\bigr).
\]
The same response vector is used for both algorithms.  Fix an integer
budget $N\geq1$, and for $0\leq t<N$ set
\[
  B_t\coloneqq\sum_xp_x\rho_x^{\otimes t},
  \qquad P_t\coloneqq\Pi_{\supp B_t},
  \qquad
  \beta_0(u)\coloneqq\frac{\pi}{2(\cosh(\pi u)+1)}.
\]
Define
\begin{equation}
  D_{y,t}
  \coloneqq
  p_y\int_{\mathbb R}\beta_0(u)
  B_t^{-(1+iu)/2}\rho_y^{\otimes t}B_t^{-(1-iu)/2}\dd u
  +p_y(I-P_t).
  \label{eq:recovery-effects}
\end{equation}
Complex powers here are computed on $\supp B_t$ and extended by zero.
The integral converges in finite dimension.  Indeed the imaginary
powers are unitary on this support, so its integrand in norm is bounded
by $\beta_0(u)\|B_t^{-1/2}\|^2\|\rho_y^{\otimes t}\|$.
The function $\beta_0$ is an even probability density, since
\[
  \beta_0(u)=\frac{\pi}{4\cosh^2(\pi u/2)},
  \qquad
  \int_{\mathbb R}\beta_0(u)\dd u
  =\frac12[\tanh(\pi u/2)]_{-\infty}^{\infty}=1.
\]
The specialization of that recovery
map is proved in Section~\ref{sec:recovery-analysis}.

\begin{algorithm}[H]
  \caption{The recovery PGM}
  \label{alg:recovery-pgm}
  \begin{algorithmic}[1]
    \Require Ensemble $\cE=\{(p_x,\rho_x):x\in\mathcal X\}$,
      integer $N\geq1$, and $N$ fresh copies of $\rho_X$
    \Ensure An ensemble index $Y\in\mathcal X$
    \State Choose $t$ uniformly from $\{0,\ldots,N-1\}$
    \If{$t=0$}
      \State Sample $Y$ from the law $p$
    \Else
      \State Construct $\{D_{y,t}:y\in\mathcal X\}$
      \State Measure the first $t$ copies with this POVM and denote the outcome by $Y$
    \EndIf
    \State Ignore the remaining copies
    \State \Return $Y$
  \end{algorithmic}
\end{algorithm}

The term recovery PGM denotes this averaged recovery label
measurement, its raw decoder reports $\vartheta(\rho_Y)$.  The
integrated effects are
\begin{equation}
  \overline D_y
  \coloneqq
  \frac1N\sum_{t=0}^{N-1}
  D_{y,t}\otimes I_{\mathcal H^{\otimes(N-t)}}.
  \label{eq:averaged-recovery-povm}
\end{equation}
Lemma~\ref{lem:recovery-normalization} proves that the effects for each $t$ sum
to the identity, thus the effects in
Eq.\eqref{eq:averaged-recovery-povm} are positive and sum to the identity
on $N$ copies.  They have the finite output alphabet $\mathcal X$,
the integral in Eq.\eqref{eq:recovery-effects} is part of an effect,
and does not add an output label.

For clarity, the relation to ordinary PGM can be written on any fixed
ensemble $\{(q_y,\sigma_y)\}$ on one common input space.  Set
$B=\sum_yq_y\sigma_y$, let $P=\Pi_{\supp B}$, and on its support put
\[
  G_y=q_yB^{-1/2}\sigma_yB^{-1/2},
  \qquad U_u=B^{-iu/2}.
\]
Then
\begin{align}
  U_uG_yU_u^\dagger
  &=q_y B^{-iu/2}B^{-1/2}\sigma_yB^{-1/2}B^{iu/2}
    \notag\\
  &=q_y B^{-(1+iu)/2}\sigma_yB^{-(1-iu)/2},
    \notag
\end{align}
and
\begin{align}
  D_y
  &=\int_{\mathbb R}\beta_0(u)U_uG_yU_u^\dagger\dd u
    \notag\\
  &=q_y\int_{\mathbb R}\beta_0(u)
       B^{-(1+iu)/2}\sigma_yB^{-(1-iu)/2}\dd u.
  \label{eq:pgm-recovery-identity}
\end{align}
Extend $U_u$ by the identity on $\ker B$, and complete both $G_y$ and
$D_y$ by adding $q_y(I-P)$.  Since $\int\beta_0=1$, the same identity holds
for these completed effects.

For a fixed commuting ensemble, choose a common eigenbasis $w$, both completed measurements have
\[
  \langle w|G_y|w\rangle =\frac{q_y\langle w|\sigma_y|w\rangle}{\sum_yq_y\langle w|\sigma_y|w\rangle}.
\]
Consequently one can measure $w$ and sample $y$ from this posterior,
on $\sum_yq_y\langle w|\sigma_y|w\rangle=0$ both completions sample $q$, this is a representation of
the two measurements on that fixed ensemble and space.
Algorithm~\ref{alg:sequential-pgm} uses a different posterior on each
fresh block, Algorithm~\ref{alg:recovery-pgm} uses the original prior
on one uniformly chosen prefix.

%% file: sections/posterior_localization.tex
\section{Posterior localization and high-confidence estimation}
\label{sec:posterior-localization}

We store the full
measurement history and recursively change the scalar labels to which \Cref{thm:factor-two} is applied. 
The physical measurement remains the posterior-refocused PGM of \Cref{subsec:label-walk}, only the classical
post-processing is changed.  For a history $h=(y_1,\ldots,y_r)$, the corresponding full-history
POVM element is
\begin{equation}
\widetilde M^{[r]}_{h}
:=G^{(1)}_{y_1\mid\varnothing}\otimes G^{(2)}_{y_2\mid y_1}\otimes\cdots\otimes
G^{(r)}_{y_r\mid(y_1,\ldots,y_{r-1})}.
\label{eq:full-history-povm}
\end{equation}
Summing the last factor over $y_r$ gives $I$, and repeating this
operation for $y_{r-1},\ldots,y_1$ gives
$\sum_h\widetilde M_h^{[r]}=I$.  These operators form a POVM, and \eqref{eq:flattened-sequential-pgm} is their coarse-graining according to the final label
$y_r$.

At a fixed center $a$, the signed
power $(u-a)|u-a|^{s-1}/s!$ has squared magnitude
$|u-a|^{2s}/(s!)^2$, so a quadratic bound on this coordinate controls a
$2s$-th moment of the original error, so, its derivative is the
magnitude of the preceding signed power. This derivative relation
allows the leading sampling error on a fresh block to be weighted by
a residual which has been controlled in the preceding round. The posterior
center changes with the observations, however, recentering a power
about a new input value would also change its derivatives. Our
construction separates these two operations, first, we obtain the next
coordinate by integrating the magnitude of the current residual, and
then subtract its posterior mean after the next outcome. This
subtraction adapts the centering to the new posterior while leaving
the derivative relation unchanged. Thus, the result functions are dependent to full history of measurements, and the following estimate ensures that these bounds are still controlling high powers of the distance in the original expectation coordinate.

\subsection{Recursively centered coordinates}

Fix an observable $0\preceq E\preceq I$ and write
\[
\theta_x:=\operatorname{Tr}(E\rho_x),\qquad \Theta:=\theta_X.
\]
Thus $\Theta\in[0,1]$ is a classical random variable under every posterior.  Set $f_0(u):=1$.
For a positive-probability history $h_s=(y_1,\ldots,y_s)$, with prefix
$h_{s-1}=(y_1,\ldots,y_{s-1})$, define recursively
\begin{equation*}
F_{s-1}^{h_{s-1}}(u)
:=\int_0^u\left|f_{s-1}^{h_{s-1}}(v)\right|\,dv,
\qquad
c_s(h_s)
:=\mathbb E\!\left[F_{s-1}^{h_{s-1}}(\Theta)\mid H_s=h_s\right],
\qquad
f_s^{h_s}(u)
:=F_{s-1}^{h_{s-1}}(u)-c_s(h_s)
\end{equation*}
Zero-probability histories are completed arbitrarily, these are just labels only used for analysis, and are not supplied to the physical PGM.  By construction,
\begin{equation*}
\mathbb E\!\left[f_s^{H_s}(\Theta)\mid H_s\right]=0,
\qquad
\mathbb E\!\left[f_s^{H_s}(\Theta)^2\mid H_s\right]
=\operatorname{Var}\!\left(F_{s-1}^{H_{s-1}}(\Theta)\mid H_s\right).
\end{equation*}
Moreover, $(F_{s-1}^{h_{s-1}})'=|f_{s-1}^{h_{s-1}}|$.  Hence, for a small perturbation $z$,
\[
F_{s-1}(u+z)-F_{s-1}(u)
\approx z\,|f_{s-1}(u)| + O(z^2).
\]
Thus the new coordinate is flat where the previous residual is already small.

We next record the two deterministic estimates used below.  Fix a realized history path and
suppress its prefixes from the notation, a direct induction from
$f_j(v)-f_j(u)=\int_u^v|f_{j-1}(w)|\,dw$ gives, whenever $u,u+z\in[0,1]$,
\begin{equation}
\left|F_{s-1}(u+z)-F_{s-1}(u)\right|
\le
\sum_{\ell=1}^{s}\frac{|z|^\ell}{\ell!}\,|f_{s-\ell}(u)|.
\label{eq:triangular-increment}
\end{equation}
The same recursion also prevents the transformed coordinate from collapsing a long interval,
for every $0\le u<v\le 1$,
\begin{equation}
f_s(v)-f_s(u)
\ge
\frac{(v-u)^s}{2^{s-1}s!}.
\label{eq:inverse-modulus}
\end{equation}
Thus, for $0\le a<b\le 1$, that
\begin{equation*}
\int_a^b|f_j(t)|\,dt
\ge
\frac{(b-a)^{j+1}}{2^j(j+1)!}.
\end{equation*}
If $G'=|f_j|$ and $G$ is nondecreasing, then the minimum is attained at
$c=G((a+b)/2)$ and
\begin{align*}
\int_a^b|f_{j+1}(t)|\,dt
&\ge \inf_{c\in\mathbb R}\int_a^b|G(t)-c|\,dt \\
&=\int_0^{(b-a)/2}\int_{a+y}^{b-y}|f_j(t)|\,dt\,dy \\
&\ge \frac{1}{2^j(j+1)!}
\int_0^{(b-a)/2}(b-a-2y)^{j+1}\,dy \\
&=\frac{(b-a)^{j+2}}{2^{j+1}(j+2)!}.
\end{align*}
\subsection{One-observable localization}

\begin{theorem}
\label{thm:recursive-posterior-localization}
For every finite ensemble, every observable $0\preceq E\preceq I$, all integers $r,n\ge 1$,
and every $\eta\in(0,1]$, there is a deterministic decoder
$d_E:\mathcal X^r\to[0,1]$ such that the $r$-round posterior-refocused PGM satisfies
\begin{equation*}
\Pr\!\left[\left|d_E(H_r)-\Theta\right|>\eta\right]
\le
\left(\frac{9r^2}{n\eta^2}\right)^r.
\end{equation*}
The physical measurement is independent of $E$.
\end{theorem}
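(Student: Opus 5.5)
The plan is to run an induction over rounds that controls, at each round $s$, the conditional second moment of the recursively centered coordinate $f_s^{H_s}(\Theta)$. We then convert the final moment bound into a tail bound on $|\Theta-d_E(H_r)|$ via the inverse modulus estimate \eqref{eq:inverse-modulus}.

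Define the potential
\[
V_s:=\mathbb E\bigl[f_s^{H_s}(\Theta)^2\bigr]=\mathbb E\bigl[\operatorname{Var}(F_{s-1}^{H_{s-1}}(\Theta)\mid H_s)\bigr].
\]
Conditional on $H_{s-1}=h$, round $s$ is exactly a one-block PGM on the posterior ensemble $\mathcal F_h^{(s)}$ with $n$ copies. Apply \Cref{thm:factor-two} to it with scalar label $z_x:=F_{s-1}^{h}(\theta_x)$.

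We need to bound the optimal Personick risk for this label. The competitor is a classical estimator: measure $E$ on each of the $n$ copies, let $\widehat\theta$ be the empirical mean, and report $F_{s-1}^h(\widehat\theta)$ (clipped to $[0,1]$). This reduces to a classical binomial problem, which the appendix on binomial distributions is presumably for. Write $z=\widehat\theta-\Theta$. By \eqref{eq:triangular-increment} the error is at most $\sum_{\ell=1}^{s}\frac{|z|^\ell}{\ell!}|f_{s-\ell}(\Theta)|$. Squaring with Cauchy--Schwarz, with weights $\sim s$, and using binomial central moments $\mathbb E|z|^{2\ell}\lesssim (C\ell/n)^{\ell}$, the risk is bounded by
\[
s\sum_{\ell}\frac{(C\ell)^\ell}{n^\ell(\ell!)^2}\,\mathbb E\bigl[f_{s-\ell}(\Theta)^2\mid H_{s-1}=h\bigr].
\]
Since the posterior MSE of the PGM decoder upper bounds the posterior variance, \Cref{thm:factor-two} gives
\[
\mathbb E\bigl[\operatorname{Var}(F_{s-1}(\Theta)\mid H_s)\mid H_{s-1}\bigr]\le 2\,\mathrm{risk}.
\]
One subtlety: the $f_{s-\ell}$ with $\ell\ge2$ are measurable with respect to earlier histories, so their conditional second moments at $H_{s-1}$ average, by the tower property, to $V_{s-\ell}$. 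Taking full expectations yields a linear recursion
\[
V_s\le 2s\sum_{\ell=1}^{s}\frac{(C\ell)^\ell}{n^\ell(\ell!)^2}V_{s-\ell},\qquad V_0=\operatorname{Var}(1)=0.
\]
The base case needs care: take $f_0=1$, so $F_0(u)=u$ and $V_1\le 2/(4n)$. Solving by induction with $V_j\le (a r^2/n)^j/(j!)^2$-type ansatz should give $V_r\le (c\,r^2/n)^r/(r!)^2$.

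For the decoder, set $d_E(H_r):=$ the unique $u\in[0,1]$ solving $F_{r-1}^{H_{r-1}}(u)=c_r(H_r)$, which exists by monotonicity; clip if needed. If $|\Theta-d_E|>\eta$, then \eqref{eq:inverse-modulus}, applied to $F_{r-1}$ (same differences as $f_r$), gives $|f_r(\Theta)|\ge \eta^r/(2^{r-1}r!)$. Markov's inequality on $f_r^2$ then yields
\[
\Pr\bigl[|d_E(H_r)-\Theta|>\eta\bigr]\le \frac{V_r\,4^{r-1}(r!)^2}{\eta^{2r}}\le \Bigl(\frac{4c\,r^2}{n\eta^2}\Bigr)^r.
\]
Tuning constants should reach $9$. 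Independence of the measurement from $E$ is immediate, since only the labels change.

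The main obstacle is the moment recursion. Two issues must be handled there. First, the binomial higher moments must come with the correct $\ell^\ell/n^\ell$ scaling. Second, the Cauchy--Schwarz weights and the factorials must telescope so that the constant per round is $O(r^2/n)$ and does not blow up combinatorially. I would first try an explicit ansatz $V_j\le \gamma^j/(j!)^2$ with $\gamma=Kr^2/n$, and verify the inductive step using $\binom{s}{\ell}$-type bounds on $(s-\ell)!^{-2}\ell!^{-2}$.
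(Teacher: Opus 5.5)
Your overall plan matches the paper's: the factor-two comparison on the posterior ensemble with labels $F^{h}_{s-1}(\theta_x)$, the binomial competitor, \eqref{eq:triangular-increment}, the tower property, then Chebyshev with \eqref{eq:inverse-modulus}. However, the moment recursion as you set it up cannot give the $r^2$ rate, and tuning constants will not fix this.

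\textbf{The gap.} You square $\sum_{\ell=1}^{s}\frac{|\zeta_s|^\ell}{\ell!}|f_{s-\ell}(\Theta)|$ by Cauchy--Schwarz with uniform weights. That costs a factor $s$ in every round, and these losses multiply across rounds.
\begin{itemize}
\item Already the $\ell=1$ term of your recursion, with $\mathbb E\zeta_s^2\le 1/(4n)$, reads $V_s\le \frac{s}{2n}V_{s-1}+\cdots$. So the recursion can certify nothing better than $V_r\le r!/(2n)^r$.
\item Plugged into your Chebyshev step, this gives a tail bound of order $(Cr^3/(n\eta^2))^r$, not $(9r^2/(n\eta^2))^r$. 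Downstream it would force $n\asymp r^3/\eta^2$ and a $\log^4$ copy count.
\item Your ansatz fails for the same reason. Take $V_j\le\gamma^j/(j!)^2$ with $\gamma=ar^2/n$. The $\ell=1$ contribution to the inductive step at level $s$, relative to $\gamma^s/(s!)^2$, is $s^3/(2ar^2)$, even with the sharp second-moment bound. At $s=r$ this equals $r/(2a)$, which exceeds $1$ unless $a\gtrsim r$.
\end{itemize}
There is also a slip at the base: $V_0=\mathbb E f_0^2=1$, not $0$, because $f_0=1$ is uncentered. The $\ell=s$ term $|\zeta_s|^s/s!$ appears in every round, not only for $s=1$.

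\textbf{The missing idea.} Apply the triangle inequality to $L^2$ norms (Minkowski) rather than to squares. With $Q_s=\sqrt{V_s}$ and $Q_0=1$, the same ingredients give
\[
Q_s\le\sqrt2\sum_{\ell=1}^{s}\frac{\sqrt{b_{2\ell}}}{\ell!}\,n^{-\ell/2}Q_{s-\ell}\le\sqrt2\sum_{\ell=1}^{s}n^{-\ell/2}Q_{s-\ell}.
\]
Induction then yields $Q_s\le(3/\sqrt n)^s$, since $\sqrt2\sum_{\ell=1}^{s}3^{s-\ell}\le 3^s$. You do not need any $1/(j!)^2$ decay of $V_j$: the $(r!)^2$ produced by Chebyshev is absorbed into $r^{2r}$.

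\textbf{Your decoder.} With this corrected bound, your decoder (the zero of $f_r^{H_r}$ in $[0,1]$) works. It is a legitimate alternative to the paper's decoder, which thresholds $|f_r^h(\theta_x)|<\eta^r/r!$ and outputs the midpoint of the surviving $\theta_x$'s. Since $2^{r-1}r!\le r^r$, your estimate $V_r\,4^{r-1}(r!)^2/\eta^{2r}\le 9^r r^{2r}/(n\eta^2)^r$ gives exactly the stated constant $9$.
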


\begin{proof}
For each round $s$, introduce only for the analysis a fresh count
\[
K_s\mid X,H_{s-1}\sim\operatorname{Bin}(n,\Theta),
\qquad
\zeta_s:=K_s/n-\Theta.
\]
Let
\[
Q_s:=\left(\mathbb E\left|f_s^{H_s}(\Theta)\right|^2\right)^{1/2},
\qquad Q_0=1.
\]
Condition on $H_{s-1}=h$ and attach the labels
$z_x=F_{s-1}^{h}(\theta_x)$ to the posterior ensemble, the conditional expectation minimizes
mean-square error, and Theorem~\ref{thm:factor-two} may then be compared with the product measurement
$\{E,I-E\}^{\otimes n}$ reporting $F_{s-1}^{h}(K_s/n)$.  Averaging over the history gives
\begin{align*}
Q_s^2
&=\mathbb E\,\operatorname{Var}\!\left(
F_{s-1}^{H_{s-1}}(\Theta)\mid H_s\right) \notag\\
&\le \mathbb E\left|F_{s-1}^{H_{s-1}}(\theta_X)
-F_{s-1}^{H_{s-1}}(\theta_{Y_s})\right|^2 \notag\\
&\le 2\,\mathbb E\left|F_{s-1}^{H_{s-1}}(\Theta)
-F_{s-1}^{H_{s-1}}(K_s/n)\right|^2.
\end{align*}
Applying Eq.~\eqref{eq:triangular-increment}, Minkowski's inequality, and the binomial moment
bound from Appendix~A yields
\begin{align*}
Q_s
&\le \sqrt{2}\sum_{\ell=1}^{s}\frac{1}{\ell!}
\left\||\zeta_s|^\ell f_{s-\ell}^{H_{s-\ell}}(\Theta)\right\|_2 \notag\\
&\le \sqrt{2}\sum_{\ell=1}^{s}
\frac{\sqrt{b_{2\ell}}}{\ell!}\,n^{-\ell/2}Q_{s-\ell}.
\end{align*}
The second inequality uses fresh-block conditional independence given $X$ and the tower property
for the older features.  The definition in Appendix~A gives $b_2=1/4$ and
$b_{2\ell}=2^{1-\ell}\ell!\le \ell!$ for $\ell\ge2$.  Hence
\begin{equation}
Q_s\le \sqrt{2}\sum_{\ell=1}^{s}n^{-\ell/2}Q_{s-\ell}.
\label{eq:coarse-l2-recursion}
\end{equation}
An induction now gives
\begin{equation*}
Q_s\le \left(\frac{3}{\sqrt n}\right)^s.
\end{equation*}
Assuming the claim below rank $s$, Eq.~\eqref{eq:coarse-l2-recursion} implies
\[
Q_s
\le \sqrt{2}\,n^{-s/2}\sum_{\ell=1}^{s}3^{s-\ell}
=\frac{\sqrt{2}}{2}(3^s-1)n^{-s/2}
\le \left(\frac{3}{\sqrt n}\right)^s.
\]

\noindent
Fix a terminal history $h$ and define
\[
\mathcal G_h
:=\left\{x\in\mathcal X_h:\left|f_r^h(\theta_x)\right|<\frac{\eta^r}{r!}\right\}.
\]
If $x,x'\in\mathcal G_h$, then Eq.~\eqref{eq:inverse-modulus} gives
\[
\frac{|\theta_x-\theta_{x'}|^r}{2^{r-1}r!}
\le |f_r^h(\theta_x)-f_r^h(\theta_{x'})|
<\frac{2\eta^r}{r!},
\]
so $|\theta_x-\theta_{x'}|<2\eta$.  For $\mathcal G_h\neq\varnothing$, set
\[
d_E(h):=\frac12\left(\min_{x\in\mathcal G_h}\theta_x+\max_{x\in\mathcal G_h}\theta_x\right),
\]
and set $d_E(h)=0$ when $\mathcal G_h=\varnothing$.
Then
\begin{align*}
\Pr\!\left[|d_E(H_r)-\Theta|>\eta\right]
&\le \Pr\!\left[\left|f_r^{H_r}(\Theta)\right|
\ge \frac{\eta^r}{r!}\right] \\
&\le \frac{(r!)^2}{\eta^{2r}}Q_r^2
\le \frac{(r!)^2}{\eta^{2r}}\left(\frac{9}{n}\right)^r
\le \left(\frac{9r^2}{n\eta^2}\right)^r,
\end{align*}
\end{proof}

\subsection{Iteration and simultaneous concentration}

\begin{theorem}
\label{thm:finite-prior-high-confidence-cubic}
Fix a finite ensemble, one observable $E$, an accuracy $\eta\in(0,1]$, and a failure target
$\tau\in(0,1)$.  Set
\begin{equation}
r:=\left\lceil\log_2\frac{1}{\tau}\right\rceil,
\qquad
n:=\left\lceil\frac{18r^2}{\eta^2}\right\rceil.
\label{eq:one-coordinate-parameters}
\end{equation}
Then the $r$-round posterior-refocused PGM, followed by the decoder from
Theorem~\ref{thm:recursive-posterior-localization}, satisfies
\begin{equation*}
\Pr\!\left[\left|d_E(H_r)-\operatorname{Tr}(E\rho_X)\right|>\eta\right]
\le \tau.
\end{equation*}
Moreover,
\begin{equation*}
T=rn
=O\!\left(\frac{(1+\log(1/\tau))^3}{\eta^2}\right).
\end{equation*}
\end{theorem}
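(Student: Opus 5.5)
This is a direct instantiation of Theorem~\ref{thm:recursive-posterior-localization} with the parameters in Eq.~\eqref{eq:one-coordinate-parameters}. Since $\eta\in(0,1]$ and $r,n\ge1$ are integers, that theorem applies to the same finite ensemble and observable $E$. It yields a deterministic decoder $d_E\colon\mathcal X^r\to[0,1]$ for the $r$-round posterior-refocused PGM with
\[
\Pr\!\left[\left|d_E(H_r)-\Theta\right|>\eta\right]\le\left(\frac{9r^2}{n\eta^2}\right)^r .
\]
The remaining work is to bound the right-hand side by $\tau$ and to count copies.

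For the failure probability, the choice $n\ge 18r^2/\eta^2$ gives $9r^2/(n\eta^2)\le 1/2$, so the bound is at most $2^{-r}$. Since $r=\lceil\log_2(1/\tau)\rceil\ge\log_2(1/\tau)$, we get $2^{-r}\le\tau$. Note that $\tau<1$ forces $\log_2(1/\tau)>0$, so $r\ge1$ and the theorem is applicable. Because $\Theta=\operatorname{Tr}(E\rho_X)$, this is exactly the claimed guarantee.

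For the copy count, use $r\le 1+\log_2(1/\tau)=O(1+\log(1/\tau))$ and, since $\eta\le1$ and $r\ge1$, $n\le 18r^2/\eta^2+1\le 19r^2/\eta^2$. Hence $T=rn\le 19r^3/\eta^2=O\!\left((1+\log(1/\tau))^3/\eta^2\right)$.

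There is no real obstacle here. The only points to check are that the ceilings keep $r,n\ge1$ (so the earlier theorem's hypotheses hold) and that the additive $+1$ from the ceiling on $n$ is absorbed using $\eta\le1$. The physical measurement remains independent of $E$, as Theorem~\ref{thm:recursive-posterior-localization} already records; only the decoder depends on $E$.
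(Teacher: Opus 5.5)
Your proposal is correct and is essentially the paper's own argument: invoke Theorem~\ref{thm:recursive-posterior-localization}, use $n\ge 18r^2/\eta^2$ to get $\bigl(9r^2/(n\eta^2)\bigr)^r\le 2^{-r}\le\tau$, and read the copy bound off the parameter choice. Your extra bookkeeping makes explicit what the paper leaves implicit, namely $r\ge1$ from $\tau<1$, and $n\le 19r^2/\eta^2$ using $\eta\le1$.
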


\begin{proof}
By Theorem~\ref{thm:recursive-posterior-localization},
\[
\Pr\!\left[|d_E(H_r)-\Theta|>\eta\right]
\le \left(\frac{9r^2}{n\eta^2}\right)^r
\le 2^{-r}
\le \tau.
\]
The copy bound follows from Eq.~\eqref{eq:one-coordinate-parameters}.
\end{proof}

\begingroup
\emergencystretch=2em
\begin{corollary}
\label{cor:simultaneous-finite-prior-cubic}
Fix a finite ensemble, observables $E_1,\ldots,E_M$, an accuracy $\epsilon\in(0,1]$, and
$\delta\in(0,1)$.  Set
\begin{equation*}
r:=\left\lceil\log_2\frac{M}{\delta}\right\rceil,
\qquad
n:=\left\lceil\frac{18r^2}{\epsilon^2}\right\rceil.
\label{eq:simultaneous-parameters}
\end{equation*}
There is a deterministic full-history decoder
$D:\mathcal X^r\to[0,1]^M$ such that the same $r$-round posterior-refocused PGM satisfies
\begin{equation*}
\Pr\!\left[
\max_{1\le j\le M}
\left|D_j(H_r)-\operatorname{Tr}(E_j\rho_X)\right|>\epsilon
\right]
\le \delta.
\label{eq:simultaneous-finite-prior}
\end{equation*}
The total number of copies obeys
\begin{equation*}
T=rn
=O\!\left(\frac{(1+\log(M/\delta))^3}{\epsilon^2}\right),
\end{equation*}
and is independent of the Hilbert-space dimension.
\end{corollary}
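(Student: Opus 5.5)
The plan is to reduce the corollary to \Cref{thm:finite-prior-high-confidence-cubic}, applied once per observable with a common physical measurement, and then take a union bound over the $M$ observables. The key structural fact is the last sentence of \Cref{thm:recursive-posterior-localization}: the $r$-round posterior-refocused PGM depends only on the ensemble and on $(r,n)$, not on $E$. So if every observable uses the same pair $(r,n)$, a single run of \Cref{alg:sequential-pgm} produces one history $H_r$. Each observable then only changes the classical post-processing, namely the recursively centered coordinates $f_s^{h}$ built from $\theta_x=\Tr(E_j\rho_x)$ and the decoder $d_{E_j}$.

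Concretely, I would set $\tau:=\delta/M$ and $\eta:=\epsilon$. Then $r=\lceil\log_2(M/\delta)\rceil=\lceil\log_2(1/\tau)\rceil$, and $n=\lceil 18r^2/\epsilon^2\rceil$ is exactly the choice in Eq.~\eqref{eq:one-coordinate-parameters}. Since $M\ge1$ and $\delta<1$, we have $\tau\in(0,1)$, so the hypotheses of \Cref{thm:finite-prior-high-confidence-cubic} hold. Define the decoder coordinatewise by $D_j(H_r):=d_{E_j}(H_r)$, where $d_{E_j}$ is the deterministic decoder from \Cref{thm:recursive-posterior-localization} for $E_j$. Each $d_{E_j}$ takes values in $[0,1]$: it is either a midpoint of values $\theta_x\in[0,1]$ or $0$. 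Hence $D$ maps $\mathcal X^r$ into $[0,1]^M$.

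For fixed $j$, \Cref{thm:finite-prior-high-confidence-cubic} gives
\[
\Pr\!\left[\left|D_j(H_r)-\Tr(E_j\rho_X)\right|>\epsilon\right]\le (9r^2/(n\epsilon^2))^r\le 2^{-r}\le \delta/M.
\]
Here the probability is over the joint law of $X$ and the history under the shared measurement. This is legitimate because the joint law of $(X,H_r)$ is the same for every $j$: it is determined by the full-history POVM \eqref{eq:full-history-povm} and the prior. A union bound over $j=1,\ldots,M$ then bounds the probability of the maximum deviation exceeding $\epsilon$ by $\delta$.

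For the copy count, $T=rn\le r(18r^2/\epsilon^2+1)$. With $r\le 1+\log_2(M/\delta)$ and $\epsilon\le1$, this gives $T=O\bigl((1+\log(M/\delta))^3/\epsilon^2\bigr)$. No quantity depends on $\dim\mathcal H$, so the bound is dimension-free. The only step that needs care is checking that the per-observable guarantees all refer to the same probability space. This rests on the observable-independence of the physical measurement, and the rest is routine.
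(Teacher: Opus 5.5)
Your proposal is correct and matches the paper's proof: the paper likewise sets $D(h)=(d_{E_1}(h),\ldots,d_{E_M}(h))$, uses the fact that the posterior PGM does not depend on $E_j$ so a single history $H_r$ serves every coordinate, bounds each coordinate's failure probability by $2^{-r}\le\delta/M$, and then (implicitly) takes a union bound over $j$. Your explicit checks that $D$ maps into $[0,1]^M$, that $\tau=\delta/M\in(0,1)$, and that $T=rn\le r(18r^2/\epsilon^2+1)$ gives the stated copy count fill in details the paper leaves to the reader.
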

\endgroup

\begin{proof}
For each $E_j$, construct the decoder $d_{E_j}$ from
Theorem~\ref{thm:recursive-posterior-localization} and set
\[
D(h):=\bigl(d_{E_1}(h),\ldots,d_{E_M}(h)\bigr).
\]
The functions used to define the coordinates and decoders may depend on $E_j$, but the physical
posterior PGM depends only on the state ensemble and the observed history.  Hence the same history
$H_r$ is used for every coordinate. Thus, for each $j$,
\[
\Pr\!\left[
|D_j(H_r)-\operatorname{Tr}(E_j\rho_X)|>\epsilon
\right]
\le 2^{-r}
\le \frac{\delta}{M}.
\]
\end{proof}

%% file: sections/recovery_analysis.tex
\section{Finite prior analysis of the recovery measurement}
\label{sec:recovery-analysis}

We now consider Algorithm~\ref{alg:recovery-pgm} for the same finite ensemble.
The prior is fixed throughout this section.  Its output is the label $Y$, and
we use $\vartheta(\rho_Y)$ to represent decoder, and we first compare the selected rewards
in Lemma~\ref{lem:selected-reward}, then use this comparison to bound the
conditional mean bias in Corollary~\ref{cor:conditional-bias}.

\subsection{The output kernel and its symmetry}

For $0\le t<N$, define
\begin{equation}
K_t(y\mid x)
:=\operatorname{Tr}\!\left(D_{y,t}\rho_x^{\otimes t}\right),
\qquad
K(y\mid x):=\frac1N\sum_{t=0}^{N-1}K_t(y\mid x).
\label{eq:recovery-kernel}
\end{equation}
Thus $K_t(y\mid x)$ is the probability of outcome $y$ in the
$t$-copy measurement, conditional on $X=x$, by
Eq.~\eqref{eq:averaged-recovery-povm}, $K(y\mid x)$ is the probability
of $Y=y$ in Algorithm~\ref{alg:recovery-pgm}, conditional on $X=x$, and zero prior weights can be omitted when constructing this measurement.

\begin{lemma}
\label{lem:recovery-normalization}
The effects in Eq.~\eqref{eq:recovery-effects} form a POVM on
$\mathcal H^{\otimes t}$.  For every positive prior weight $p_x$,
$\supp(\rho_x^{\otimes t})\subseteq\supp(B_t)$.  Moreover,
\begin{equation}
\sum_yK_t(y\mid x)=1,\qquad
\sum_xp_xK_t(y\mid x)=p_y,\qquad
p_xK_t(y\mid x)=p_yK_t(x\mid y).
\label{eq:recovery-balance}
\end{equation}
The same identities hold with $K$ in place of $K_t$.
\end{lemma}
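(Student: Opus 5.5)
Here is the plan. I would first establish positivity and normalization of each $D_{y,t}$, then prove the support inclusion, and finally derive the three identities in \eqref{eq:recovery-balance} from trace cyclicity together with the evenness of $\beta_0$.

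\textbf{Positivity and completeness.} Write $U_u=B_t^{-iu/2}$ on $\supp B_t$, extended by the identity on $\ker B_t$; this is unitary. By \eqref{eq:pgm-recovery-identity}, $D_{y,t}$ is $\int\beta_0(u)\,U_uG_{y,t}U_u^\dagger\dd u$ plus $p_y(I-P_t)$, where $G_{y,t}=p_yB_t^{-1/2}\rho_y^{\otimes t}B_t^{-1/2}\succeq0$. A $\beta_0$-average of unitary conjugates of positive operators is positive, so $D_{y,t}\succeq0$. For completeness, sum over $y$ inside the integral. Since $\sum_yp_y\rho_y^{\otimes t}=B_t$, we get
\[
\sum_y D_{y,t}=\int\beta_0(u)\,B_t^{-(1+iu)/2}B_tB_t^{-(1-iu)/2}\dd u+(I-P_t)=P_t+(I-P_t)=I,
\]
using $\int\beta_0=1$ and the fact that the complex powers commute with $B_t$ on its support. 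The case $t=0$ is trivial: the space is one-dimensional, $B_0=1$, and $D_{y,0}=p_y$.

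\textbf{Support inclusion.} If $p_x>0$, then $B_t\succeq p_x\rho_x^{\otimes t}$. Hence $\ker B_t\subseteq\ker\rho_x^{\otimes t}$: indeed $\langle v|B_t|v\rangle=0$ forces $\langle v|\rho_x^{\otimes t}|v\rangle=0$, and since $\rho_x^{\otimes t}\succeq0$ this gives $\rho_x^{\otimes t}v=0$. Taking orthogonal complements yields $\supp\rho_x^{\otimes t}\subseteq\supp B_t$. One consequence is that the completion term $p_y(I-P_t)$ contributes nothing to $K_t(y\mid x)$. Another is that $P_t\rho_x^{\otimes t}P_t=\rho_x^{\otimes t}$.

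\textbf{The three identities.}
\begin{itemize}
\item The first identity, $\sum_yK_t(y\mid x)=1$, follows from completeness and $\Tr\rho_x^{\otimes t}=1$.
\item The second, $\sum_x p_xK_t(y\mid x)=p_y$, is $\Tr(D_{y,t}B_t)$. By cyclicity, and because $B_t^{\pm iu/2}$ commutes with $B_t$, this equals $p_y\int\beta_0(u)\Tr(B_t^{-1/2}\rho_y^{\otimes t}B_t^{-1/2}B_t)\dd u=p_y\Tr(P_t\rho_y^{\otimes t})=p_y$, where the last step uses the support inclusion.
\item The third, detailed balance, is the main point. Write
\[
p_xK_t(y\mid x)=p_xp_y\int\beta_0(u)\Tr\!\bigl(B_t^{-(1+iu)/2}\rho_y^{\otimes t}B_t^{-(1-iu)/2}\rho_x^{\otimes t}\bigr)\dd u .
\]
Cyclicity turns the trace into $\Tr\bigl(B_t^{-(1-iu)/2}\rho_x^{\otimes t}B_t^{-(1+iu)/2}\rho_y^{\otimes t}\bigr)$, which is the integrand of $p_yK_t(x\mid y)$ with $u$ replaced by $-u$. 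The substitution $u\mapsto-u$, together with the evenness of $\beta_0$, then gives $p_xK_t(y\mid x)=p_yK_t(x\mid y)$.
\end{itemize}
Alternatively, one can check that the integrand at $u$ is the complex conjugate of the swapped integrand. Since $K_t$ is real, either route works, and evenness is what makes the integral symmetric.

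\textbf{Passing to $K$, and where care is needed.} All three identities are linear in $K_t$, so averaging over $t$ transfers them to $K$. I expect the only delicate points to be bookkeeping: justifying the interchange of the finite sum over $y$ with the convergent integral, and handling the support conventions for the complex powers. The symmetry step is the one that needs care; it requires the evenness of $\beta_0$, not just that $\beta_0$ is a probability density.
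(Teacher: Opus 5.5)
Your argument is correct and is essentially the paper's proof: the support inclusion from $B_t\succeq p_x\rho_x^{\otimes t}$, normalization by summing inside the integral to get $P_t+(I-P_t)=I$, the marginal identity via $\Tr(D_{y,t}B_t)=p_y\Tr(P_t\rho_y^{\otimes t})$, and detailed balance via cyclicity, $A_{-u}=A_u^\dagger$ for $A_u=B_t^{-(1+iu)/2}$, and the evenness of $\beta_0$. You should drop your ``alternatively'' remark. Each integrand $\Tr(A_u\rho_y^{\otimes t}A_u^\dagger\rho_x^{\otimes t})$ is already real and nonnegative, and it equals the swapped integrand at $-u$, not at $u$; the two differ in general. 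So complex conjugation gives no second route, and the substitution $u\mapsto -u$ together with the evenness of $\beta_0$ is genuinely needed.
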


\begin{proof}
The inequality $B_t\succeq p_x\rho_x^{\otimes t}$ implies the support
inclusion.  If $v\in\ker B_t$, then
\[
0\le p_x\langle v,\rho_x^{\otimes t}v\rangle
\le\langle v,B_tv\rangle=0.
\]
Obviously, this implies $\rho_x^{\otimes t}v=0$, in particular, the
completion term in Eq.~\eqref{eq:recovery-effects} has zero probability
on every positive-weight ensemble state, it's not hard to verify that
the density $\beta_0$ has integral one,
so each integrand in Eq.~\eqref{eq:recovery-effects} is positive.  On
$\supp B_t$, summing the effects gives
\begin{align*}
\sum_y p_y\int_{\mathbb R}\beta_0(u)
B_t^{-(1+iu)/2}\rho_y^{\otimes t}B_t^{-(1-iu)/2}\,du
&=\int_{\mathbb R}\beta_0(u)
B_t^{-(1+iu)/2}B_tB_t^{-(1-iu)/2}\,du\\
&=P_t.
\end{align*}
Adding $\sum_yp_y(I-P_t)=I-P_t$ proves normalization on the full
space.

For the remaining identities, put $A_u=B_t^{-(1+iu)/2}$ on the
support. Thus,
\begin{align*}
p_xK_t(y\mid x)
&=p_xp_y\int_{\mathbb R}\beta_0(u)
\operatorname{Tr}\!\left(\rho_x^{\otimes t}A_u
\rho_y^{\otimes t}A_u^\dagger\right)\,du\notag\\
&=p_xp_y\int_{\mathbb R}\beta_0(u)
\operatorname{Tr}\!\left(\rho_y^{\otimes t}A_u^\dagger
\rho_x^{\otimes t}A_u\right)\,du\notag\\
&=p_yK_t(x\mid y).
\end{align*}
Here $\beta_0(-u)=\beta_0(u)$ and $A_{-u}=A_u^\dagger$.
Also,
\begin{align*}
\sum_xp_xK_t(y\mid x)
&=\operatorname{Tr}(B_tD_{y,t})\\
&=p_y\int_{\mathbb R}\beta_0(u)
\operatorname{Tr}\!\left(\rho_y^{\otimes t}P_t\right)\,du=p_y.
\end{align*}
At $t=0$, $D_{y,0}=p_y$, so the identities hold directly,
and averaging over $t$ proves the assertions for $K$. 
\end{proof}

\subsection{Recovery of a classical label}

We use natural logarithms in the entropy expressions.  We use Von-Neumann entropy, write
$S(\omega)=-\operatorname{Tr}(\omega\log\omega)$, and 
\[
D(\omega\Vert\tau)
=\operatorname{Tr}\!\left(\omega(\log\omega-\log\tau)\right)
\]
when $\supp\omega\subseteq\supp\tau$.  The root fidelity is
\[
F(\omega,\tau)=\|\sqrt\omega\sqrt\tau\|_1.
\]
For a classical law $q$, write $H(q)=-\sum_aq_a\log q_a$.
$H(J)_\omega$ denotes the entropy of the marginal law of a classical
register $J$ for subsystems of a state $\omega$, write
\[
I(U:V)_\omega:=S(\omega_U)+S(\omega_V)-S(\omega_{UV}),
\]
\[
I(J:C\mid K)_\omega
:=S(\omega_{JK})+S(\omega_{KC})-S(\omega_K)-S(\omega_{JKC}).
\]
The entropy identities used below are proved in
Lemma~\ref{lem:classical-entropy-budget}.
Strong subadditivity of Lieb and Ruskai gives
$I(J:C\mid K)_\omega\ge0$~\cite{liebruskai1973}.
Fawzi and Renner give a fidelity bound for recovery in terms of this
quantity~\cite{fawzirenner2015}.
We use the trace distance $d_{\rm tr}$ defined in
Section~\ref{sec:worst-case}.
The following recovery theorem is the external input to this section.

\begin{theorem}
\label{thm:universal-recovery}
Let $\omega,\tau$ be density matrices on a finite-dimensional space,
with $\supp\omega\subseteq\supp\tau$, and let $\mathcal N$ be a
quantum channel.  Put $B=\mathcal N(\tau)$ and let $\mathcal N^*$ be
its trace adjoint.  On $\supp B$, define
\begin{equation}
\mathcal R^{[u]}_{\tau,\mathcal N}(X)
=\tau^{(1-iu)/2}\mathcal N^*\!\left(
B^{-(1-iu)/2}XB^{-(1+iu)/2}\right)\tau^{(1+iu)/2},
\label{eq:rotated-recovery-map}
\end{equation}
\begin{equation*}
\mathcal R_{\tau,\mathcal N}(X)
=\int_{\mathbb R}\beta_0(u)\mathcal R^{[u]}_{\tau,\mathcal N}(X)\,du.
\end{equation*}
These maps preserve trace on the supported input space and can be
completed to channels on the full space.  The averaged map satisfies
\begin{equation*}
D(\omega\Vert\tau)-D(\mathcal N(\omega)\Vert\mathcal N(\tau))
\ge-2\log F\!\left(\omega,
(\mathcal R_{\tau,\mathcal N}\circ\mathcal N)(\omega)\right).
\end{equation*}
\end{theorem}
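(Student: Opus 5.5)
This statement is the Junge--Renner--Sutter--Wilde--Winter universal recovery theorem, and the paper calls it ``the external input to this section.'' My plan is therefore to invoke~\cite{junge2018recovery} for the entropy inequality rather than rederive it. I would add only the bookkeeping needed to match our conventions: complex powers taken on supports, the density $\beta_0$, and the completion on kernels.

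\textbf{Trace preservation on the supported input space.} For $X$ supported in $\supp B$ I would compute
\[
\operatorname{Tr}\mathcal R^{[u]}_{\tau,\mathcal N}(X)
=\operatorname{Tr}\!\left(\mathcal N^*\!\left(B^{-(1-iu)/2}XB^{-(1+iu)/2}\right)\tau\right)
=\operatorname{Tr}\!\left(B^{-(1-iu)/2}XB^{-(1+iu)/2}B\right),
\]
using cyclicity and $\tau^{(1+iu)/2}\tau^{(1-iu)/2}=\tau$ on $\supp\tau$. On $\supp B$ the product $B^{-(1+iu)/2}BB^{-(1-iu)/2}$ is the projector, so the right-hand side equals $\operatorname{Tr}X$. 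Averaging against $\beta_0$, which has integral one (checked in \cref{subsec:recovery-pgm}), gives trace preservation for $\mathcal R_{\tau,\mathcal N}$.

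\textbf{Complete positivity.} Each $\mathcal R^{[u]}$ is a composition of three completely positive maps: the congruence by $B^{-(1-iu)/2}$, the adjoint channel $\mathcal N^*$, and the congruence by $\tau^{(1-iu)/2}$. A $\beta_0$-average of completely positive maps is again completely positive.

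\textbf{Completion to a channel.} On $\ker B$ I would set
\[
X\mapsto \operatorname{Tr}\!\left((I-\Pi_{\supp B})X\right)\sigma_0
\]
for any fixed state $\sigma_0$, and I would also discard the off-diagonal blocks. This gives a CPTP extension that agrees with $\mathcal R_{\tau,\mathcal N}$ on supported inputs.

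\textbf{The inequality.} The only application is to inputs of the form $\mathcal N(\omega)$ with $\supp\omega\subseteq\supp\tau$, and for these $\supp\mathcal N(\omega)\subseteq\supp B$. So the completion never acts, and the inequality is exactly Theorem~2.1 of~\cite{junge2018recovery}. That theorem is stated with this rotated Petz map and the density $\beta_0(t)=\frac{\pi}{2}(\cosh(\pi t)+1)^{-1}$. Its proof uses Hirschman's strengthening of the Hadamard three-lines theorem applied to
\[
z\mapsto\left\|\left(\mathcal N(\omega)^{z/2}\mathcal N(\tau)^{-z/2}\otimes I\right)V\tau^{z/2}\omega^{1/2}\right\|_{2/z}
\]
on the strip $0\le\Re z\le 1$, followed by a limit $z\to 0$ that produces the relative entropy difference. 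Concavity of the fidelity then lets one move the $\beta_0$-average inside.

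\textbf{Main obstacle.} The step that needs care is matching conventions. I must check that our parametrization $\tau^{(1-iu)/2}\ldots B^{-(1-iu)/2}$ is the cited map with $t\mapsto -u$, and this is harmless because $\beta_0$ is even. I must also confirm the use of root fidelity $F=\|\sqrt\omega\sqrt\tau\|_1$, which is what gives the factor $-2\log F$. If one wanted a self-contained proof, the interpolation and the derivative-at-zero argument would be the real work, and I would simply reproduce it from~\cite{junge2018recovery}.
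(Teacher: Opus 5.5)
Your proposal matches the paper's treatment: the paper likewise takes the inequality from \cite[Theorem~2.1 and Remark~2.2]{junge2018recovery} (the averaged form coming from concavity, as you note). Its appendix verifies trace preservation by the same trace-adjoint and cyclicity computation you give, and completes the map by compressing to $\supp B$ and sending the complement to a fixed state. One harmless slip in your side sketch of the cited proof: the interpolated Schatten index at $\operatorname{Re} z=\theta$ is $p_\theta=2/(1+\theta)$, running from the $2$-norm on $\operatorname{Re} z=0$ to the $1$-norm (which gives the fidelity) on $\operatorname{Re} z=1$, not $2/z$; since you only invoke the theorem, this does not affect your argument.
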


This is the finite-dimensional specialization of
\cite[Theorem~2.1 and Remark~2.2]{junge2018recovery}.
All inverse powers are restricted to supports, and positive powers
with positive real part are zero on the kernel, and the full-space
completion used for classical labels is proved in
Appendix~\ref{app:recovery-specialization}.

For an integer $k\ge1$, write $[k]=\{1,\ldots,k\}$.  Fix a stochastic
map $W:\mathcal X\to[k]$, so that $W(a\mid x)\ge0$ and
$\sum_aW(a\mid x)=1$ for every $x$.  Its label is only used
for the analysis and not for the physical measurement. 
 Let $J$ be the classical register storing this label,
and let $Q_1,\ldots,Q_N$ denote quantum registers with
Hilbert space $\mathcal H$, and write $Q^t=Q_1\cdots Q_t$ for their
first $t$ registers, with $Q^0$ the trivial register.
Define the following state only for the analysis.
\begin{equation*}
\Omega_{JQ^N}
:=\sum_{x,a}p_xW(a\mid x)\ketbra{a}{a}\otimes\rho_x^{\otimes N}.
\end{equation*}
For $0\le t<N$, its marginal satisfies $\Omega_{Q^t}=B_t$, and
the $a$-block of $\Omega_{JQ^t}$ is the unnormalized matrix
\[
\sum_xp_xW(a\mid x)\rho_x^{\otimes t}\preceq B_t.
\]
This inequality gives its support inclusion.  If
$\sum_xp_xW(a\mid x)=0$, the block is zero, and that label
contributes a zero effect and can be omitted.

Apply Eq.~\eqref{eq:rotated-recovery-map} to the reference
$\Omega_{JQ^t}$ and the channel $\operatorname{Tr}_J$.
Since its adjoint inserts the identity on $J$, the $a$-block of
one rotated recovery output is
\begin{equation*}
\begin{aligned}
&\left(\sum_xp_xW(a\mid x)\rho_x^{\otimes t}\right)^{(1-iu)/2}
 B_t^{-(1-iu)/2}X B_t^{-(1+iu)/2}
 \left(\sum_xp_xW(a\mid x)\rho_x^{\otimes t}\right)^{(1+iu)/2}.
\end{aligned}
\end{equation*}
Taking the trace combines the two powers of
$\sum_xp_xW(a\mid x)\rho_x^{\otimes t}$ into that same sum, 
on the complement of $P_t$, complete the recovered label with
$\left(\sum_xp_xW(a\mid x)\right)_{a\in[k]}$.
Its full space classical effect for label $a$ is therefore
\begin{equation}
\int_{\mathbb R}\beta_0(u)
 B_t^{-(1+iu)/2}\left(\sum_xp_xW(a\mid x)\rho_x^{\otimes t}\right)
 B_t^{-(1-iu)/2}\,du+\left(\sum_xp_xW(a\mid x)\right)(I-P_t)
 =\sum_xW(a\mid x)D_{x,t}.
\label{eq:coarse-recovery-consistency}
\end{equation}
This identity concerns the recovered classical output after discarding
the recovered $Q^t$.  In particular, every choice of $W$ gives a
classical post-processing of the same effects $D_{x,t}$.  We use this
identity in Eq.~\eqref{eq:selected-recovered-state} below.
\subsection{Selected rewards and conditional mean bias}

\begin{lemma}
\label{lem:selected-reward}
For every stochastic map $W:\mathcal X\to[k]$ and every list of
effects $0\preceq A_1,\ldots,A_k\preceq I$ on $\mathcal H$, the
kernel in Eq.~\eqref{eq:recovery-kernel} satisfies
\begin{equation}
\left|
\sum_{x,a}p_xW(a\mid x)\operatorname{Tr}(A_a\rho_x)
-\sum_{x,y,a}p_xK(y\mid x)W(a\mid y)
\operatorname{Tr}(A_a\rho_x)
\right|
\le\sqrt{\frac{\log k}{N}}.
\label{eq:selected-reward-bound}
\end{equation}
The same kernel $K$ works for every $W$ and every such list of effects.
\end{lemma}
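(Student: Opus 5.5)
The plan is an information-theoretic chain-rule argument. The $(t{+}1)$-st copy serves as a fresh ``test'' register, and Theorem~\ref{thm:universal-recovery} is applied once for each prefix length $t$. Fix $W$ and the effects $A_a$, and consider the analysis state $\Omega_{JQ^N}$.

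\textbf{Step 1: one recovery inequality per prefix length.} For each $0\le t<N$, I take
\[
\omega=\Omega_{JQ^tQ_{t+1}},\qquad \tau=\Omega_{JQ^t}\otimes\Omega_{Q_{t+1}},\qquad \mathcal N=\operatorname{Tr}_J\otimes\mathrm{id}_{Q^{t+1}}.
\]
The support condition $\supp\omega\subseteq\supp\tau$ follows from the block inequality $\sum_xp_xW(a\mid x)\rho_x^{\otimes t}\preceq B_t$ and $p_x\rho_x\preceq\Omega_{Q_{t+1}}$. A direct entropy computation, which is of the kind collected in Lemma~\ref{lem:classical-entropy-budget}, gives
\[
D(\omega\Vert\tau)-D(\mathcal N\omega\Vert\mathcal N\tau)=I(JQ^t:Q_{t+1})-I(Q^t:Q_{t+1})=I(J:Q_{t+1}\mid Q^t)_\Omega=:I_t .
\]
Because $\tau$ is a product state, the powers of $\Omega_{Q_{t+1}}$ cancel in Eq.~\eqref{eq:rotated-recovery-map}. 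The averaged recovery map is therefore the recovery map of the reference $\Omega_{JQ^t}$ under $\operatorname{Tr}_J$, acting on $Q^t$, tensored with the identity on $Q_{t+1}$.

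\textbf{Step 2: identify the recovered state and pass to a reward bound.} Conditional on $X=x$, the registers $Q^t$ and $Q_{t+1}$ are in the product state $\rho_x^{\otimes t}\otimes\rho_x$. By the consistency identity Eq.~\eqref{eq:coarse-recovery-consistency}, after discarding the recovered $Q^t$ the recovered state on $JQ_{t+1}$ is
\[
\sum_{x,y,a}p_xK_t(y\mid x)W(a\mid y)\ketbra{a}{a}\otimes\rho_x .
\]
The true marginal is $\omega_{JQ_{t+1}}=\sum_{x,a}p_xW(a\mid x)\ketbra aa\otimes\rho_x$. Fidelity is monotone under partial trace, so Theorem~\ref{thm:universal-recovery} gives $-2\log F\le I_t$ for these marginals. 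Fuchs--van de Graaf then gives $d_{\rm tr}\le\sqrt{1-F^2}\le\sqrt{-2\log F}\le\sqrt{I_t}$, using $1-F^2\le-2\log F$. The test operator $\sum_a\ketbra aa\otimes A_a$ lies between $0$ and $I$. Hence the $t$-th version of the left side of Eq.~\eqref{eq:selected-reward-bound}, with $K_t$ in place of $K$, is at most $\sqrt{I_t}$. The case $t=0$ is consistent with this: $K_0(y\mid x)=p_y$ and $I_0=I(J:Q_1)$.

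\textbf{Step 3: average over $t$.} Since $K=\frac1N\sum_tK_t$, the triangle inequality and Jensen's inequality give
\[
\text{LHS}\le\frac1N\sum_{t=0}^{N-1}\sqrt{I_t}\le\sqrt{\frac1N\sum_{t=0}^{N-1}I_t}.
\]
By the chain rule, $\sum_{t<N}I_t=I(J:Q^N)_\Omega$. Because $J$ is classical with $k$ values, $I(J:Q^N)\le H(J)\le\log k$. This yields the bound $\sqrt{\log k/N}$. The kernel $K$ never involved $W$ or the $A_a$, which gives the uniformity claim.

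\textbf{Main obstacle.} I expect the delicate step to be Step 2, namely justifying that the recovery channel of Theorem~\ref{thm:universal-recovery}, once completed on kernels, reproduces exactly the classical kernel $\sum_yK_t(y\mid x)W(a\mid y)$ jointly with the untouched $\rho_x$ on $Q_{t+1}$. This requires the product-reference cancellation together with the support and completion conventions of Appendix~\ref{app:recovery-specialization} and identity Eq.~\eqref{eq:coarse-recovery-consistency}. I would first verify it blockwise in $a$ on $\supp B_t$, and then check that the completion terms have zero weight on every positive-weight $\rho_x^{\otimes t}$, as in Lemma~\ref{lem:recovery-normalization}.
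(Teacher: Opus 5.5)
Your proposal is correct and follows the paper's proof essentially step for step. It uses the same per-$t$ application of Theorem~\ref{thm:universal-recovery} with reference $\Omega_{JQ^t}\otimes\Omega_{Q_{t+1}}$ and channel $\operatorname{Tr}_J$, the same spectator cancellation and identification of the recovered state via Eq.~\eqref{eq:coarse-recovery-consistency}, and the same chain-rule and Jensen averaging with $\sum_t I_t=I(J:Q^N)\le\log k$. The differences are minor. You pass to the $JQ_{t+1}$ marginal by monotonicity of fidelity, whereas the paper tests the full recovered state with $\sum_a\ketbra{a}{a}\otimes I_{Q^t}\otimes A_a$. Also, the support inclusion $\supp\omega\subseteq\supp\tau$ needs $p_xW(a\mid x)\rho_x^{\otimes t}\preceq\sum_{x'}p_{x'}W(a\mid x')\rho_{x'}^{\otimes t}$, or simply the marginal-support inclusion of Lemma~\ref{lem:support-distance-fidelity}, rather than the comparison with $B_t$ that you cite.
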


\begin{proof}
Set
\[
\eta_t:=I(J:Q_{t+1}\mid Q^t)_\Omega,
\qquad 0\le t<N.
\]
By the entropy identities from Lemma~\ref{lem:classical-entropy-budget},
\begin{equation*}
\eta_t\ge0,\qquad
\sum_{t=0}^{N-1}\eta_t
=I(J:Q^N)_\Omega\le H(J)_\Omega\le\log k.
\end{equation*}
For a fixed $t$, write $C=Q_{t+1}$.  Apply
Theorem~\ref{thm:universal-recovery} with input $\Omega_{JQ^tC}$,
reference $\Omega_{JQ^t}\otimes\Omega_C$, and channel
$\operatorname{Tr}_J$.  Lemma~\ref{lem:support-distance-fidelity}
gives the required support inclusion.  The relative-entropy loss is
\begin{align*}
&D(\Omega_{JQ^tC}\Vert\Omega_{JQ^t}\otimes\Omega_C)
-D(\Omega_{Q^tC}\Vert B_t\otimes\Omega_C)\notag\\
&\qquad=S(\Omega_{JQ^t})+S(\Omega_{Q^tC})
-S(B_t)-S(\Omega_{JQ^tC})=\eta_t.
\end{align*}
In Eq.~\eqref{eq:rotated-recovery-map}, the powers of $\Omega_C$
cancel on its support, so the recovery acts as the map associated
with $\Omega_{JQ^t}$ on $Q^t$, tensor the identity on $C$.
Appendix~\ref{app:recovery-specialization} proves this cancellation
also when $\Omega_C$ is singular.  It agrees with the stabilization
property in \cite{junge2018recovery} after restricting
the spectator register to $\supp\Omega_C$.

Let $\widetilde\Omega_{JQ^tC}$ be the recovered state.  Theorem~\ref{thm:universal-recovery}
and Lemma~\ref{lem:support-distance-fidelity} give, using
\cite[Theorem 1]{fuchsvandegraaf1999} for the trace-distance bound,
\begin{equation*}
F(\Omega_{JQ^tC},\widetilde\Omega_{JQ^tC})\ge e^{-\eta_t/2},
\qquad
 d_{\rm tr}(\Omega_{JQ^tC},\widetilde\Omega_{JQ^tC})
\le\sqrt{1-e^{-\eta_t}}\le\sqrt{\eta_t}.
\end{equation*}
After discarding the recovered $Q^t$, Eq.~\eqref{eq:coarse-recovery-consistency}
gives
\begin{equation}
\widetilde\Omega_{JC}
=\sum_{x,y,a}p_xK_t(y\mid x)W(a\mid y)
\ketbra{a}{a}\otimes\rho_x.
\label{eq:selected-recovered-state}
\end{equation}
Conditional on the original label $X=x$, the prefix has state
$\rho_x^{\otimes t}$ and the untouched copy $C$ has state $\rho_x$.
The probability of fine outcome $y$ is $K_t(y\mid x)$, followed by
classical transition $W(a\mid y)$.  Conditional independence has only
been used at $X=x$, thus, given a coarse label $a$, the conditional state
can be a mixture of tensor powers.

\noindent
Since operator
\[
\sum_a\ketbra{a}{a}\otimes I_{Q^t}\otimes A_a
\]
is an effect, its expectation on the original state is the first
term of Eq.~\eqref{eq:selected-reward-bound}, and its expectation on
the recovered state is the second term with $K_t$ in place of $K$.
Thus the difference between the two expectations, denoted by $\Delta_t$, satisfies
$|\Delta_t|\le\sqrt{\eta_t}$.  Averaging over every $t$ gives
\begin{equation*}
\left|\frac1N\sum_{t=0}^{N-1}\Delta_t\right|
\le\frac1N\sum_{t=0}^{N-1}\sqrt{\eta_t}
\le\sqrt{\frac1N\sum_{t=0}^{N-1}\eta_t}
\le\sqrt{\frac{\log k}{N}}.
\end{equation*}
\end{proof}

\begin{corollary}
\label{cor:conditional-bias}
For the known effects $E_1,\ldots,E_M$, Algorithm~\ref{alg:recovery-pgm}
satisfies
\begin{equation}
\sum_xp_x\max_{j\in[M]}
\left|\theta_j(\rho_x)-\sum_yK(y\mid x)\theta_j(\rho_y)\right|
\le\sqrt{\frac{\log(2M)}{N}}.
\label{eq:conditional-bias-bound}
\end{equation}
\end{corollary}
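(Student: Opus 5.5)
The plan is to derive Corollary~\ref{cor:conditional-bias} from Lemma~\ref{lem:selected-reward}. We take $k=2M$, choose the labelling map $W$ to encode the worst coordinate together with the sign of its bias, and choose the effects $A_a$ to pair this label with the correct observable. Once the sign is absorbed, the left side of Eq.~\eqref{eq:conditional-bias-bound} becomes a single selected-reward difference.

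\textbf{The choice of $W$.} For each $x$ (with $p_x>0$), set
\[
b_j(x):=\theta_j(\rho_x)-\sum_yK(y\mid x)\theta_j(\rho_y).
\]
Let $j^*(x)$ attain $\max_j|b_j(x)|$, and let $\sigma(x)\in\{+,-\}$ be the sign of $b_{j^*(x)}(x)$. We index $[2M]$ as pairs $(j,\sigma)$ and let $W$ be deterministic, sending $x$ to $a(x)=(j^*(x),\sigma(x))$. The label stored by $W$ is attached to the true label $X$. In the second term of Eq.~\eqref{eq:selected-reward-bound}, however, $W$ is applied to the recovered label $y$, while the effect is still evaluated on $\rho_x$.

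\textbf{The main obstacle.} This mismatch is the crux. Written directly,
\[
\sum_{x,y,a}p_xK(y\mid x)W(a\mid y)\Tr(A_a\rho_x)
\]
tests the state $\rho_x$ with the effect selected by $y$. That is not the quantity $\sum_y K(y\mid x)\theta_j(\rho_y)$ appearing in the corollary. The fix is the detailed balance $p_xK(y\mid x)=p_yK(x\mid y)$ from Lemma~\ref{lem:recovery-normalization}: swapping the roles of $x$ and $y$ turns the second term into
\[
\sum_{x,y,a}p_xK(y\mid x)W(a\mid x)\Tr(A_a\rho_y).
\]
In this form the label is selected by $x$ and the observable is evaluated on $\rho_y$, which is exactly the posterior-averaged reconstruction.

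\textbf{The choice of effects and conclusion.} Put $A_{(j,+)}=E_j$ and $A_{(j,-)}=I-E_j$; both are valid effects since $0\preceq E_j\preceq I$. For the label $(j,+)$, the difference of the two terms in Eq.~\eqref{eq:selected-reward-bound} contributes $p_x b_j(x)$. For the label $(j,-)$ it contributes
\[
p_x\Bigl(1-\theta_j(\rho_x)-\sum_yK(y\mid x)\bigl(1-\theta_j(\rho_y)\bigr)\Bigr)=-p_x b_j(x),
\]
using $\sum_yK(y\mid x)=1$. Summing over $x$ with the sign-selecting $W$, the difference equals
\[
\sum_xp_x\,\sigma(x)\,b_{j^*(x)}(x)=\sum_xp_x\max_j|b_j(x)|.
\]
Lemma~\ref{lem:selected-reward} with $k=2M$ bounds this by $\sqrt{\log(2M)/N}$, which is Eq.~\eqref{eq:conditional-bias-bound}. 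The remaining checks are minor: ties in the argmax are broken arbitrarily, and zero-weight labels are omitted.
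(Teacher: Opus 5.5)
Your proof is correct and follows the paper's own argument: take $k=2M$ signed labels with $A_{j,+}=E_j$ and $A_{j,-}=I-E_j$, use detailed balance $p_xK(y\mid x)=p_yK(x\mid y)$ to move $W$ from the recovered label onto the true label, let the constants from $I-E_j$ cancel via $\sum_yK(y\mid x)=1$, and take $W$ to be the point mass at the maximizing signed coordinate before applying Lemma~\ref{lem:selected-reward}. The only point the paper states explicitly and you leave implicit is that choosing $W$ from the biases $b_j(x)$, which themselves depend on $K$, is legitimate. This holds because $K$ is built from the ensemble alone and Lemma~\ref{lem:selected-reward} holds for every $W$ with the same $K$.
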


\begin{proof}
Take $k=2M$, with labels $(j,s)\in[M]\times\{-1,+1\}$, and set
\[
A_{j,+}=E_j,\qquad A_{j,-}=I-E_j.
\]
For any stochastic $W$, Eq.~\eqref{eq:recovery-balance} gives the
following change of indices in the second reward.
\begin{align*}
&\sum_{x,y,a}p_xK(y\mid x)W(a\mid y)\operatorname{Tr}(A_a\rho_x)
\notag\\
&\quad=\sum_{x,y,a}p_yK(x\mid y)W(a\mid y)\operatorname{Tr}(A_a\rho_x)
\notag\\
&\quad=\sum_{x,y,a}p_xK(y\mid x)W(a\mid x)\operatorname{Tr}(A_a\rho_y).
\end{align*}
Since $\sum_yK(y\mid x)=1$, the difference of rewards is therefore
\begin{equation}
\sum_{x,j,s}p_xW(j,s\mid x)s
\left(\theta_j(\rho_x)-\sum_yK(y\mid x)\theta_j(\rho_y)\right).
\label{eq:signed-reward-cancellation}
\end{equation}
For $s=-1$, the two constant terms from $I-E_j$ cancel in this
expression.  For each $x$, choose a pair $(j,s)$ maximizing the
signed difference, with deterministic tie breaking, and let
$W(\cdot\mid x)$ be the point mass at that pair, the $W$ is only an analysis device, the kernel $K$ has already been constructed from the ensemble and thus is independent of $W$, so Lemma~\ref{lem:selected-reward} can be applied to this choice as well. Then
Eq.~\eqref{eq:signed-reward-cancellation} is the left side of
Eq.~\eqref{eq:conditional-bias-bound}. 
\end{proof}

%% file: sections/worst_case.tex
\section{Worst-case shadow tomography}
\label{sec:worst-case}

Here $C$ denotes an absolute positive constant, which may increase between
inequalities.  Every measurement budget uses a fixed sufficiently large
choice of this constant before priors or histories are considered.
\subsection{A common finite decoder}
\label{subsec:common-finite-decoder}

We design a shadow tomography strategy into 2 parts, a measurement and a finite decoder.
The finite-prior construction from
Corollary~\ref{cor:simultaneous-finite-prior-cubic} uses the complete
history $H_r$ and its decoder $D(H_r)$.  Algorithm~\ref{alg:recovery-pgm}
uses the label $Y$ and its decoder $\vartheta(\rho_Y)$.
For a generic finite-output POVM $\{Q_v\}_{v\in\mathcal V}$ with decoder
$d:\mathcal V\to[0,1]^M$, we will absorb the rounding into its effects
by summing all $Q_v$ with the same rounded value.  This places both
constructions in the space used in Lemma~\ref{lem:finite-minimax}.
For the minimax argument, we need to represent the strategy using same finite outcome alphabet and the same
decoder.  We achieve this by rounding each possible expectation vector to a
fixed grid on a designated net. And we claim that after this rounding step, we do not lose too much precision, while placing all strategies in one common finite dimensional space.

Fix an accuracy \(\epsilon\in(0,1]\), and define
\[
  K_\epsilon
  \coloneqq
  \left\lceil\frac{2}{\epsilon}\right\rceil,
  \qquad
  \mathcal Z_\epsilon
  \coloneqq
  \left\{
    \frac{k}{K_\epsilon}:0\leq k\leq K_\epsilon
  \right\}^{M}.
\]
Let
\[
  \mathcal R_\epsilon
  \colon
  [0,1]^M\longrightarrow\mathcal Z_\epsilon
\]
be coordinatewise rounding, with any fixed rule
for breaking ties.  Since the grid spacing is \(1/K_\epsilon\),
\begin{equation}
  \left\|
    \mathcal R_\epsilon(v)-v
  \right\|_\infty
  \leq
  \frac{1}{2K_\epsilon}
  \leq
  \frac{\epsilon}{4}
  \qquad
  \text{for every }v\in[0,1]^M.
  \label{eq:grid-rounding-error}
\end{equation}

Now consider a finite ensemble
\[
  \{(p_x,\rho_x):x\in\mathcal X\},
\]
and let \(H_r\) be the complete history produced by its sequential PGM.  We
replace the original decoder by
\[
  Z
  \coloneqq
  \mathcal R_\epsilon
  \bigl(
    D(H_r)
  \bigr)
  \in\mathcal Z_\epsilon.
\]
The rounded output remains accurate whenever the original full-history decoder is
accurate at half the target tolerance.  Indeed, if
\[
  \left\|
    D(H_r)-\vartheta(\rho_X)
  \right\|_\infty
  \leq
  \frac{\epsilon}{2},
\]
then
\[
\begin{aligned}
  \left\|
    Z-\vartheta(\rho_X)
  \right\|_\infty
  &\leq
  \left\|
    Z-D(H_r)
  \right\|_\infty
  +
  \left\|
    D(H_r)-\vartheta(\rho_X)
  \right\|_\infty                                                    \\
  &\leq
  \frac{\epsilon}{4}
  +
  \frac{\epsilon}{2}
  <
  \epsilon.
\end{aligned}
\]
So,
\begin{equation*}
  \left\{
    \left\|
      Z-\vartheta(\rho_X)
    \right\|_\infty
    >
    \epsilon
  \right\}
  \subseteq
  \left\{
    \left\|
      D(H_r)-\vartheta(\rho_X)
    \right\|_\infty
    >
    \frac{\epsilon}{2}
  \right\}.
  \end{equation*}

Which means the rounding operation can be absorbed into the measurement step.
In the generic notation above, set
\[
  N_z\coloneqq\sum_{\substack{v\in\mathcal V:\\
  \mathcal R_\epsilon(d(v))=z}}Q_v.
\]
The operators are positive, and the sets in this sum partition
$\mathcal V$, so $\sum_zN_z=\sum_vQ_v=I$.  Empty sums are zero.
In particular, for Algorithm~\ref{alg:sequential-pgm} with prior $p$,
\begin{equation}
  N_z^{(p)}
  \coloneqq
  \sum_{\substack{h\in\mathcal X^r:\\
    \mathcal R_\epsilon(D^{(p)}(h))=z}}
  \widetilde M_h^{[r],(p)}.
  \label{eq:coarse-grained-grid-povm}
\end{equation}
Here $D^{(p)}$ is the decoder in
Corollary~\ref{cor:simultaneous-finite-prior-cubic}. The superscript only
makes its prior dependence explicit.  Thus
$Z=\mathcal R_\epsilon(D^{(p)}(H_r))$.  The full history effects are
Eq.\eqref{eq:full-history-povm}.
For Algorithm~\ref{alg:recovery-pgm}, the same construction is
\[
  N_z^{(p)}
  =\sum_{\substack{y\in\mathcal X:\\
  \mathcal R_\epsilon(\vartheta(\rho_y))=z}}\overline D_y^{(p)}.
\]
In either case, the decoder is the identity map
\[
  g(z)=z.
\]

The POVM operators in
Eq.\eqref{eq:coarse-grained-grid-povm} may still depend on the prior, as is
appropriate for the finite prior problem.  The outcome alphabet
\(\mathcal Z_\epsilon\) and the decoder \(g(z)=z\) depends only on
\(M\) and \(\epsilon\).  Thus every prior dependent Bayes strategy is now
represented as an element of the same finite dimensional POVM space.

\subsection{Finite-state minimax}
\label{subsec:finite-state-minimax}

We now regard shadow tomography on a fixed finite family as a zero-sum game.
The adversary chooses a loss index, the measurement player chooses a
grid-output POVM, and the payoff is that loss.  The
finite prior theorems control the value of the two games below against every
distribution over their loss indices.  Minimax will convert these prior-dependent Bayes
guarantees into one measurement that controls every loss index in the family.
Fix a positive integer \(T\).  Let
\[
  \mathcal L_T(\mathcal Z_\epsilon)
  \coloneqq
  \left\{
    (N_z)_{z\in\mathcal Z_\epsilon}:
    N_z\succeq0,\quad
    \sum_{z\in\mathcal Z_\epsilon}N_z=I
  \right\}
\]
be the set of all \(\mathcal Z_\epsilon\)-outcome POVMs on
\(\mathcal H^{\otimes T}\).

This set is nonempty (put the identity at one grid point and zero elsewhere).
This set is convex and also compact, since it is a closed and bounded subset of the
finite dimensional real vector space of Hermitian
\(\mathcal Z_\epsilon\)-indexed tuples, and the normalization condition
implies
\[
  0\preceq N_z\preceq I
\]
for every outcome \(z\), so the set is closed,
for a state \(\rho\in\mathcal D(\mathcal H)\) and a tolerance \(a>0\), let
\[
  B_a(\rho)
  \coloneqq
  \left\{
    z\in\mathcal Z_\epsilon:
    \left\|
      z-\vartheta(\rho)
    \right\|_\infty
    >
    a
  \right\}
\]
be the set of outputs that fail at \(\rho\).  For
\(N\in\mathcal L_T(\mathcal Z_\epsilon)\), define
\begin{equation*}
  f_{\rho,a}(N)
  \coloneqq
  \sum_{z\in B_a(\rho)}
  \operatorname{Tr}
  \bigl(
    N_z\rho^{\otimes T}
  \bigr).
  \end{equation*}
Thus \(f_{\rho,a}(N)\) is the failure probability of the
identity-decoded strategy \(N\) on input \(\rho\), thus for a fixed \(\rho\) and
\(a\), it is a continuous affine function of \(N\), 
for event \(A\subseteq\mathcal Z_\epsilon\), we write
\begin{equation*}
  \mathbb P_\rho^N[A]
  \coloneqq
  \sum_{z\in A}
  \operatorname{Tr}
  \bigl(
    N_z\rho^{\otimes T}
  \bigr).
  \end{equation*}

\begin{lemma}
\label{lem:finite-minimax}
Let $\mathcal L$ be a nonempty compact convex subset of a
finite dimensional real vector space, and let
$f_i:\mathcal L\to\mathbb R$, $i\in\mathcal I$, be finitely many
continuous affine functions, with $\mathcal I$ nonempty.  Then
\begin{equation}
  \min_{N\in\mathcal L}\max_{i\in\mathcal I}f_i(N)
  =
  \max_{\pi\in\mathcal P(\mathcal I)}\min_{N\in\mathcal L}
  \sum_{i\in\mathcal I}\pi_i f_i(N).
  \label{eq:finite-minimax-identity}
\end{equation}
Both extrema are attained.  In particular, if each mixed adversary has
a response with expected loss at most $q$, one $N$ has every loss at
most $q$.
\end{lemma}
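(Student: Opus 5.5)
The plan is to reduce Lemma~\ref{lem:finite-minimax} to Sion's minimax theorem~\cite{sion1958}, applied to a bilinear payoff on a product of two compact convex sets. Define
\[
  \Phi(N,\pi)\coloneqq\sum_{i\in\mathcal I}\pi_if_i(N),
  \qquad N\in\mathcal L,\quad \pi\in\mathcal P(\mathcal I),
\]
where $\mathcal P(\mathcal I)$ is the probability simplex on the finite nonempty index set $\mathcal I$. This simplex is a nonempty compact convex subset of $\mathbb R^{\mathcal I}$, and $\mathcal L$ is nonempty, compact and convex by hypothesis. For fixed $\pi$, the map $N\mapsto\Phi(N,\pi)$ is a finite nonnegative combination of continuous affine functions, so it is continuous and affine, hence convex and lower semicontinuous. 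For fixed $N$, the map $\pi\mapsto\Phi(N,\pi)$ is linear and continuous, hence concave and upper semicontinuous. Sion's theorem then gives
\[
  \inf_{N\in\mathcal L}\sup_{\pi}\Phi(N,\pi)=\sup_{\pi}\inf_{N\in\mathcal L}\Phi(N,\pi).
\]

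Next I will identify the two sides with those of Eq.~\eqref{eq:finite-minimax-identity}. Because $\Phi(N,\cdot)$ is linear on the simplex, its supremum is attained at a vertex, so $\sup_\pi\Phi(N,\pi)=\max_if_i(N)$. For attainment of the extrema:
\begin{itemize}
  \item The function $N\mapsto\max_if_i(N)$ is a maximum of finitely many continuous functions, so it is continuous and attains its minimum on the compact set $\mathcal L$.
  \item For each $\pi$, the continuous function $\Phi(\cdot,\pi)$ attains its minimum on $\mathcal L$. The resulting map $g(\pi)=\min_N\Phi(N,\pi)$ is an infimum of continuous linear functions of $\pi$, hence concave and upper semicontinuous. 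It therefore attains its maximum on the compact simplex.
\end{itemize}
This upgrades the infima and suprema to minima and maxima.

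The main obstacle is routine but deserves a line: checking the semicontinuity of $g$. A cleaner alternative is to note that $\Phi$ is jointly continuous, since it is bilinear in $\pi$ and the affine data and all spaces are finite dimensional. Then $g$ is continuous by compactness of $\mathcal L$, via a standard uniform-continuity argument.

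Finally, the ``in particular'' clause follows by direct substitution. If every $\pi$ admits some $N_\pi$ with $\Phi(N_\pi,\pi)\le q$, then the right side of Eq.~\eqref{eq:finite-minimax-identity} is at most $q$. Hence the minimizer $N^\star$ of the left side satisfies $f_i(N^\star)\le\max_{i'}f_{i'}(N^\star)\le q$ for every $i\in\mathcal I$.
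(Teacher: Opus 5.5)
Your proposal is correct and follows essentially the same route as the paper: apply Sion's theorem to $J(N,\pi)=\sum_i\pi_i f_i(N)$ on $\mathcal L\times\mathcal P(\mathcal I)$, reduce the inner supremum to a vertex of the simplex, and use compactness for attainment. The only cosmetic difference is that you get attainment of the outer maximum from upper semicontinuity of $\pi\mapsto\min_N J(N,\pi)$, viewed as an infimum of continuous functions. The paper instead proves the explicit Lipschitz bound $\lvert\min_N J(N,\pi)-\min_N J(N,\pi')\rvert\le C_0\sum_i\lvert\pi_i-\pi_i'\rvert$ with $C_0=\max_{i,N}\lvert f_i(N)\rvert$.
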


\begin{proof}
For $N\in\mathcal L$ and $\pi\in\mathcal P(\mathcal I)$, define
\[
  J(N,\pi)\coloneqq\sum_{i\in\mathcal I}\pi_i f_i(N).
\]
The function $J$ is continuous and affine in each argument.  Both
$\mathcal L$ and $\mathcal P(\mathcal I)$ are compact and convex,
so Sion's minimax theorem~\cite[Theorem~3.4, p.~174]{sion1958} gives
\[
  \min_{N\in\mathcal L}\max_{\pi\in\mathcal P(\mathcal I)}J(N,\pi)
  =\max_{\pi\in\mathcal P(\mathcal I)}\min_{N\in\mathcal L}J(N,\pi).
\]
For a fixed measurement $N$,
\[
  \max_{\pi\in\mathcal P(\mathcal I)}J(N,\pi)
  =\max_{i\in\mathcal I}f_i(N),
\]
because the maximum over the probability simplex is achieved at a point
mass.  Substituting this identity proves
Eq.\eqref{eq:finite-minimax-identity}.
The minimum on the left is attained by continuity on a compact set.
For the other extremum, compactness makes
$C_0=\max_{i,N}|f_i(N)|$ finite, and
\[
  \left|\min_NJ(N,\pi)-\min_NJ(N,\pi')\right|
  \leq C_0\sum_i|\pi_i-\pi_i'|.
\]
Thus that minimum is continuous in $\pi$, and its maximum is attained.
\end{proof}

The minimization in Lemma~\ref{lem:finite-minimax} is over the full convex
set of grid-output POVMs.  We do not require the prior-dependent sequential
PGMs themselves to form a convex family.  Their role is only to provide 
one feasible measurement with small Bayes failure probability.  The same
full POVM space is used for the signed losses below, with its own fixed
budget.  No implementation structure of either prior-dependent algorithm
is imposed on the minimax optimizer.

\begin{proposition}
\label{prop:finite-uniformization}
Let
\(\mathcal F\subseteq\mathcal D(\mathcal H)\) be finite and nonempty, let
\(\epsilon\in(0,1]\), and let \(\beta\in(0,1)\).  Let \(T\) be the copy
count supplied by
Corollary~\ref{cor:simultaneous-finite-prior-cubic} when it is applied with
accuracy \(\epsilon/2\) and total failure probability \(\beta\).  Then there
exists a POVM
\[
  N^{\mathcal F}
  \in
  \mathcal L_T(\mathcal Z_\epsilon)
\]
such that
\begin{equation*}
  \max_{\rho\in\mathcal F}
  f_{\rho,\epsilon}
  \bigl(
    N^{\mathcal F}
  \bigr)
  \leq
  \beta.
  \label{eq:finite-state-uniformization}
\end{equation*}
In particular, if
\[
  L_\beta
  \coloneqq
  \log\frac{2M}{\beta},
\]
then
\begin{equation*}
  T
  \leq
  \frac{CL_\beta^3}{\epsilon^2}
  \label{eq:finite-state-copy-bound}
\end{equation*}
for a universal constant \(C\).
\end{proposition}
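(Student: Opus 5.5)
We apply Lemma~\ref{lem:finite-minimax} with $\mathcal L=\mathcal L_T(\mathcal Z_\epsilon)$, index set $\mathcal I=\mathcal F$, and losses $f_\rho\coloneqq f_{\rho,\epsilon}$. We already know that $\mathcal L_T(\mathcal Z_\epsilon)$ is a nonempty compact convex subset of a finite-dimensional real vector space, and that each $f_{\rho,\epsilon}$ is continuous and affine in $N$. Since $\mathcal F$ is finite and nonempty, the lemma applies. By its final clause, it suffices to show the following: for every distribution $\pi\in\mathcal P(\mathcal F)$ there is some $N^{(\pi)}\in\mathcal L_T(\mathcal Z_\epsilon)$ with $\sum_\rho\pi_\rho f_{\rho,\epsilon}(N^{(\pi)})\le\beta$.

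Fix such a $\pi$ and discard the states with zero weight. The remaining states form a finite ensemble $\{(\pi_\rho,\rho):\rho\in\supp\pi\}$ with positive weights. This is exactly the setting of Corollary~\ref{cor:simultaneous-finite-prior-cubic}, with $\mathcal X=\supp\pi$. Apply that corollary with accuracy $\epsilon/2$ and failure probability $\beta$. It gives $r=\lceil\log_2(M/\beta)\rceil$ and $n=\lceil 18r^2/(\epsilon/2)^2\rceil$. We take $T=rn$, which depends only on $M,\epsilon,\beta$ and not on $\pi$; this matters because the minimax needs one common space $\mathcal L_T(\mathcal Z_\epsilon)$. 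The corollary supplies the full-history POVM $\widetilde M^{[r],(\pi)}_h$ together with a decoder $D^{(\pi)}$ such that
\[
\sum_\rho \pi_\rho\,\mathbb P\!\left[\|D^{(\pi)}(H_r)-\vartheta(\rho)\|_\infty>\epsilon/2\,\middle|\,X=\rho\right]\le\beta .
\]
Let $N^{(\pi)}$ be the coarse-grained POVM $N^{(\pi)}_z$ of Eq.~\eqref{eq:coarse-grained-grid-povm}. It lies in $\mathcal L_T(\mathcal Z_\epsilon)$ because the sets $\{h:\mathcal R_\epsilon(D^{(\pi)}(h))=z\}$ partition $\mathcal X^r$.

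Under $X=\rho$, the outcome of $N^{(\pi)}$ has the same law as $Z=\mathcal R_\epsilon(D^{(\pi)}(H_r))$. This follows from $\Pr[H_r=h\mid X=\rho]=\Tr(\widetilde M_h^{[r]}\rho^{\otimes T})$, which is the product structure of Eq.~\eqref{eq:full-history-povm}. The event inclusion derived from Eq.~\eqref{eq:grid-rounding-error} then gives $f_{\rho,\epsilon}(N^{(\pi)})\le\Pr[\|D^{(\pi)}(H_r)-\vartheta(\rho)\|_\infty>\epsilon/2\mid X=\rho]$. Averaging over $\pi$ yields a Bayes loss of at most $\beta$, and Lemma~\ref{lem:finite-minimax} produces the single measurement $N^{\mathcal F}$ with every loss at most $\beta$.

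For the copy bound, $r\le \log_2(M/\beta)+1\le C L_\beta$, since $\log(2M/\beta)\ge\log 2>0$ dominates both the constant and the logarithm. Also $n\le 72r^2/\epsilon^2+1\le C r^2/\epsilon^2$ because $\epsilon\le1$. Hence $T=rn\le CL_\beta^3/\epsilon^2$.

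The main point needing care is the uniformity of $T$ over priors. The parameters of Corollary~\ref{cor:simultaneous-finite-prior-cubic} must not depend on $\pi$, and the outcome alphabet must be fixed before minimax. Both hold by construction. The only other subtlety is identifying the flattened POVM with the operational sequential procedure, including the conventions at zero-probability histories; these affect no outcome probabilities.
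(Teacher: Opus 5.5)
Your proposal is correct and follows the paper's proof essentially step for step. For each prior you build the flattened, grid-rounded sequential PGM from Corollary~\ref{cor:simultaneous-finite-prior-cubic} at accuracy $\epsilon/2$ and failure $\beta$, bound its Bayes failure by $\beta$ via the rounding inclusion, and invoke Lemma~\ref{lem:finite-minimax} with losses indexed by states. Your explicit checks that $T=rn$ does not depend on the prior and that $rn\le CL_\beta^3/\epsilon^2$ fill in details the paper leaves implicit.
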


\begin{proof}
Fix an arbitrary prior
\(
  p\in\mathcal P(\mathcal F)
\), and apply
Corollary~\ref{cor:simultaneous-finite-prior-cubic} with accuracy
\(\epsilon/2\) and failure probability \(\beta\) gives a sequential PGM
whose full-history decoder \(D(H_r)\) satisfies
\[
  \mathbb P
  \left[
    \left\|
      D(H_r)-\vartheta(\rho_X)
    \right\|_\infty
    >
    \frac{\epsilon}{2}
  \right]
  \leq
  \beta.
\]

Complete all zero-probability histories arbitrarily, flatten the sequential
strategy into a \(T\)-copy POVM, and apply the grid post-processing from the
preceding subsection.  This produces a measurement
\[
  N^{(p)}
  \in
  \mathcal L_T(\mathcal Z_\epsilon)
\]
on the common outcome alphabet.  The rounding guarantee implies
\[
\begin{aligned}
  \sum_{\rho\in\mathcal F}
  p_\rho
  f_{\rho,\epsilon}
  \bigl(
    N^{(p)}
  \bigr)
  &=
  \mathbb P
  \left[
    \left\|
      Z-\vartheta(\rho_X)
    \right\|_\infty
    >
    \epsilon
  \right]                                                        \\
  &\leq
  \mathbb P
  \left[
    \left\|
      D(H_r)-\vartheta(\rho_X)
    \right\|_\infty
    >
    \frac{\epsilon}{2}
  \right]                                                        \\
  &\leq
  \beta.
\end{aligned}
\]
Therefore, for every prior \(p\),
\[
  \min_{N\in\mathcal L_T(\mathcal Z_\epsilon)}
  \sum_{\rho\in\mathcal F}
  p_\rho f_{\rho,\epsilon}(N)
  \leq
  \beta.
\]
Taking the maximum over \(p\) and applying
Lemma~\ref{lem:finite-minimax}, with losses indexed by the states, gives
\[
  \min_{N\in\mathcal L_T(\mathcal Z_\epsilon)}
  \max_{\rho\in\mathcal F}
  f_{\rho,\epsilon}(N)
  \leq
  \beta.
\]
Finally,
Corollary~\ref{cor:simultaneous-finite-prior-cubic} gives
\[
  T
  \in
  O\left(
    \frac{L_\beta^3}{\epsilon^2}
  \right),
\]
\end{proof}

\subsection{The trace-distance net}
\label{subsec:trace-distance-net}
The major usage of the trace-distance net is to lift our finite ensemble conclusion to unrestricted all state. 
We use the trace distance
\[
  d_{\mathrm{tr}}(\rho,\sigma)
  =
  \frac12\|\rho-\sigma\|_1.
\]
Because \(\mathcal D(\mathcal H)\) is compact in finite dimension, for every
\(s>0\) there is a finite set \(\mathcal F_s\subset\mathcal D(\mathcal H)\)
such that every state \(\rho\) has some \(\sigma\in\mathcal F_s\) with
\(d_{\mathrm{tr}}(\rho,\sigma)\leq s\).

\begin{lemma}
  \label{lem:trace-distance-stability}
  If \(d_{\mathrm{tr}}(\rho,\sigma)\leq s\), then
  \begin{equation*}
    \|\vartheta(\rho)-\vartheta(\sigma)\|_\infty\leq s
    \label{eq:effect-trace-stability}
  \end{equation*}
  and, for every positive integer \(T\),
  \begin{equation*}
    d_{\mathrm{tr}}(\rho^{\otimes T},\sigma^{\otimes T})
    \leq Ts.
    \label{eq:tensor-trace-stability}
  \end{equation*}
  So, for every POVM \(N\) and every event \(A\) in its output
  alphabet,
  \begin{equation*}
    \left|\mathbb P_\rho^N[A]-\mathbb P_\sigma^N[A]\right|
    \leq Ts.
    \label{eq:measurement-law-stability}
  \end{equation*}
\end{lemma}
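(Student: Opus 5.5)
The plan has three parts: the variational characterization of trace distance, a hybrid telescoping argument for tensor powers, and the operational meaning of trace distance for an event of a POVM.

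\emph{Observable bound.} Fix $j$ and write $\rho-\sigma=\Delta_+-\Delta_-$ with $\Delta_\pm\succeq0$ having orthogonal supports. Since $\Tr(\rho-\sigma)=0$, we get $\Tr\Delta_+=\Tr\Delta_-=d_{\mathrm{tr}}(\rho,\sigma)$. Then, using $0\preceq E_j\preceq I$,
\[
\Tr(E_j(\rho-\sigma))\le\Tr(E_j\Delta_+)\le\Tr\Delta_+\le s .
\]
Swapping the roles of $\rho$ and $\sigma$ gives the reverse inequality, so $|\theta_j(\rho)-\theta_j(\sigma)|\le s$ for every $j$. Taking the maximum over $j$ gives the $\|\cdot\|_\infty$ bound.

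\emph{Tensor powers.} Write the difference as a telescoping sum of hybrids,
\[
\rho^{\otimes T}-\sigma^{\otimes T}=\sum_{k=1}^{T}\sigma^{\otimes(k-1)}\otimes(\rho-\sigma)\otimes\rho^{\otimes(T-k)} .
\]
The trace norm is multiplicative on tensor products, and density matrices have trace norm one. Hence each summand has trace norm $\|\rho-\sigma\|_1\le 2s$. The triangle inequality then gives $\|\rho^{\otimes T}-\sigma^{\otimes T}\|_1\le 2Ts$, which is $d_{\mathrm{tr}}(\rho^{\otimes T},\sigma^{\otimes T})\le Ts$.

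\emph{Measurement laws.} For an event $A$, set $E_A\coloneqq\sum_{z\in A}N_z$. This is an effect with $0\preceq E_A\preceq I$, since the $N_z$ are positive and sum to $I$. Moreover,
\[
\mathbb P^N_\rho[A]-\mathbb P^N_\sigma[A]=\Tr\bigl(E_A(\rho^{\otimes T}-\sigma^{\otimes T})\bigr).
\]
Applying the first step on $\mathcal H^{\otimes T}$ bounds this by $d_{\mathrm{tr}}(\rho^{\otimes T},\sigma^{\otimes T})$, which is at most $Ts$ by the second step.

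No step is a real obstacle. The only points that need care are the multiplicativity $\|X\otimes Y\|_1=\|X\|_1\|Y\|_1$ and the bound on $\Tr(E\Delta)$ for $0\preceq E\preceq I$, and both are standard.
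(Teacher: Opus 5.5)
Your proposal is correct and follows essentially the same route as the paper. It uses the Jordan decomposition of $\rho-\sigma$ for the effect bound, the hybrid telescoping sum with multiplicativity of the trace norm for tensor powers, and the event effect $\sum_{z\in A}N_z$ on $\mathcal H^{\otimes T}$ for the measurement law. The differences are cosmetic: your hybrids are the mirror-image ordering of the paper's, and you write out the step $\Tr(E_j(\rho-\sigma))\le\Tr\Delta_+\le s$, which the paper leaves implicit after stating $\Tr D_+=\Tr D_-=\tfrac12\|D\|_1$.
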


\begin{proof}
  Put \(D=\rho-\sigma\).  Since \(D\) is Hermitian and traceless, its
  Jordan decomposition \(D=D_+-D_-\) satisfies
  \[
    \operatorname{Tr}D_+
    =
    \operatorname{Tr}D_-
    =
    \frac12\|D\|_1.
  \]
  
  The telescoping identity
  \[
    \rho^{\otimes T}-\sigma^{\otimes T}
    =
    \sum_{t=1}^{T}
      \rho^{\otimes(t-1)}\otimes(\rho-\sigma)
      \otimes\sigma^{\otimes(T-t)}
  \]
  and multiplicativity of the trace norm under tensor products give
  \[
    \frac12\|\rho^{\otimes T}-\sigma^{\otimes T}\|_1
    \leq
    \frac{T}{2}\|\rho-\sigma\|_1,
  \]
Finally, for an output event
  \(A\), the operator \(Q_A=\sum_{z\in A}N_z\) is an observable.  Applying the
  first part of the proof to \(Q_A\) on the tensor-product space proves
  the lemma.
\end{proof}

\begin{theorem}
  \label{thm:finite-net-lift}
  Let \(0<\epsilon<\varepsilon\leq1\) and
  \(0<\beta<\delta<1\).  Fix \(T\), and choose
  \begin{equation*}
    0<s\leq
    \min\left\{
      \varepsilon-\epsilon,
      \frac{\delta-\beta}{T}
    \right\}.
    \label{eq:net-radius-general}
  \end{equation*}
  If a \(\mathcal Z_\epsilon\)-outcome POVM \(N\) satisfies
  \[
    \max_{\sigma\in\mathcal F_s}f_{\sigma,\epsilon}(N)\leq\beta
  \]
  on a finite trace-distance \(s\)-net, then
  \begin{equation}
    \sup_{\rho\in\mathcal D(\mathcal H)}
    f_{\rho,\varepsilon}(N)
    \leq\delta.
    \label{eq:finite-net-lift}
  \end{equation}
\end{theorem}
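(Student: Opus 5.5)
Fix an arbitrary $\rho\in\mathcal D(\mathcal H)$ and pick $\sigma\in\mathcal F_s$ with $d_{\mathrm{tr}}(\rho,\sigma)\le s$. The plan is to compare the two failure events at the level of output sets first, and only then at the level of probabilities. Both steps use Lemma~\ref{lem:trace-distance-stability}.

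The first step is the set inclusion
\[
  B_\varepsilon(\rho)\subseteq B_\epsilon(\sigma).
\]
Suppose $z\in B_\varepsilon(\rho)$, so $\|z-\vartheta(\rho)\|_\infty>\varepsilon$. By the first part of Lemma~\ref{lem:trace-distance-stability}, $\|\vartheta(\rho)-\vartheta(\sigma)\|_\infty\le s$. The triangle inequality then gives
\[
  \|z-\vartheta(\sigma)\|_\infty>\varepsilon-s\ge\epsilon,
\]
where the last step uses $s\le\varepsilon-\epsilon$. Hence $z\in B_\epsilon(\sigma)$. Because each $N_z$ is positive, enlarging the output set can only increase the probability, so $f_{\rho,\varepsilon}(N)\le\mathbb P^N_\rho[B_\epsilon(\sigma)]$.

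The second step changes the input state while keeping the event $A=B_\epsilon(\sigma)$ fixed. This event is determined by $\sigma$ alone, so it is a legitimate fixed event for the measurement-law bound in Lemma~\ref{lem:trace-distance-stability}. That bound gives
\[
  \mathbb P^N_\rho[A]\le\mathbb P^N_\sigma[A]+Ts=f_{\sigma,\epsilon}(N)+Ts\le\beta+Ts\le\delta,
\]
using the hypothesis on the net and $s\le(\delta-\beta)/T$. Since $\rho$ was arbitrary, taking the supremum yields \eqref{eq:finite-net-lift}.

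I expect no real obstacle here. The only care needed is the order of the two steps. If we changed the state before changing the threshold, we would compare $\mathbb P_\rho[B_\varepsilon(\rho)]$ with $\mathbb P_\sigma[B_\varepsilon(\rho)]$ and then still have to relate $B_\varepsilon(\rho)$ to $B_\epsilon(\sigma)$, which is the same two ingredients. Doing the inclusion first keeps the event fixed for the continuity bound, which is cleanest.

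One minor point in Lemma~\ref{lem:trace-distance-stability} should also be confirmed: for an effect $0\preceq Q\preceq I$, we have $|\operatorname{Tr}(Q(\rho-\sigma))|\le\operatorname{Tr}D_\pm= d_{\mathrm{tr}}$. Applied to $Q_A$ on $\mathcal H^{\otimes T}$, this gives the $Ts$ bound.
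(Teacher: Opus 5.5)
Your proof is correct and follows the paper's argument. The paper also first proves the inclusion $A_\rho\subseteq A_\sigma$, phrasing it contrapositively: $z\notin A_\sigma$ forces $\|z-\vartheta(\rho)\|_\infty\le\epsilon+s\le\varepsilon$. It then applies the measurement-law bound of Lemma~\ref{lem:trace-distance-stability} to the fixed event $A_\sigma$, giving $\beta+Ts\le\delta$.
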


\begin{proof}
  Fix \(\rho\in\mathcal D(\mathcal H)\), and choose
  \(\sigma\in\mathcal F_s\) with
  \(d_{\mathrm{tr}}(\rho,\sigma)\leq s\).  Define the two subsets of the
  common output alphabet
  \[
    A_\rho
    =
    \{z:\|z-\vartheta(\rho)\|_\infty>\varepsilon\},
    \qquad
    A_\sigma
    =
    \{z:\|z-\vartheta(\sigma)\|_\infty>\epsilon\}.
  \]
  From Lemma~\ref{lem:trace-distance-stability}, imply \(A_\rho\subseteq A_\sigma\), if
  \(z\notin A_\sigma\), then
  \[
    \|z-\vartheta(\rho)\|_\infty
    \leq
    \|z-\vartheta(\sigma)\|_\infty
    +\|\vartheta(\sigma)-\vartheta(\rho)\|_\infty
    \leq\epsilon+s\leq\varepsilon.
  \]
  Using this inclusion and Lemma~\ref{lem:trace-distance-stability},
  \begin{align*}
    f_{\rho,\varepsilon}(N)
    &=\mathbb P_\rho^N[A_\rho]\\
    &\leq\mathbb P_\rho^N[A_\sigma]\\
    &\leq\mathbb P_\sigma^N[A_\sigma]+Ts\\
    &=f_{\sigma,\epsilon}(N)+Ts\\
    &\leq\beta+(\delta-\beta)
    =\delta.
  \end{align*}
  Since \(\rho\) was arbitrary, Eq.\eqref{eq:finite-net-lift} follows.
\end{proof}

The net is not a family over which we take a union bound.  Minimax first
produces one POVM that is uniform over all points of the net, and trace-distance
stability then extends that same POVM to the entire state space.  Therefore
the cardinality of \(\mathcal F_s\), and hence the Hilbert-space dimension 
does not enter the copy count.

\begin{theorem}
\label{thm:sequential-uniform}
Under the hypotheses of Theorem~\ref{thm:main}, there is a state-uniform
shadow-tomography strategy using
\[
  T=O\!\left(\frac{\log^3(2M/\delta)}{\varepsilon^2}\right)
\]
copies, obtained by uniformizing the full-history guarantee of
Algorithm~\ref{alg:sequential-pgm}.
\end{theorem}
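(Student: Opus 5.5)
The plan is to combine Proposition~\ref{prop:finite-uniformization} with Theorem~\ref{thm:finite-net-lift}, choosing parameters so that the net radius never enters the copy count.

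Given target accuracy $\varepsilon$ and failure probability $\delta$, set the internal parameters to
\[
\epsilon=\varepsilon/2, \qquad \beta=\delta/2.
\]
Let $T$ be the copy count that Corollary~\ref{cor:simultaneous-finite-prior-cubic} supplies at accuracy $\epsilon/2=\varepsilon/4$ and failure probability $\beta$. Then
\[
T\le C\,\frac{\log^3(4M/\delta)}{\varepsilon^2}=O\!\left(\frac{\log^3(2M/\delta)}{\varepsilon^2}\right).
\]
This equality holds because $\log(4M/\delta)\le 2\log(2M/\delta)$, using $2M/\delta>2$.

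The essential ordering is that $T$ is fixed before the net is chosen. With $T$ fixed, pick the net radius
\[
s=\min\left\{\frac{\varepsilon}{2},\ \frac{\delta}{2T}\right\}>0,
\]
and let $\mathcal F_s$ be a finite trace-distance $s$-net of $\mathcal D(\mathcal H)$. It exists by compactness in finite dimension.

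Next, apply Proposition~\ref{prop:finite-uniformization} to the finite family $\mathcal F=\mathcal F_s$ with accuracy $\epsilon$ and failure probability $\beta$. Its proof runs as follows. For every prior on $\mathcal F_s$, it takes the sequential PGM of Algorithm~\ref{alg:sequential-pgm}, flattens it, and rounds it onto the common grid $\mathcal Z_\epsilon$ as in Eq.~\eqref{eq:coarse-grained-grid-povm}. This gives Bayes failure at most $\beta$. Lemma~\ref{lem:finite-minimax} then produces a single POVM $N^{\mathcal F_s}\in\mathcal L_T(\mathcal Z_\epsilon)$ with
\[
\max_{\sigma\in\mathcal F_s} f_{\sigma,\epsilon}(N^{\mathcal F_s})\le\beta.
\]
The copy count used here is the same $T$, since it depends only on $M$, $\epsilon$ and $\beta$, and not on the family or the prior.

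Then apply Theorem~\ref{thm:finite-net-lift} with $\epsilon<\varepsilon$ and $\beta<\delta$. Its hypotheses are $s\le\varepsilon-\epsilon=\varepsilon/2$ and $s\le(\delta-\beta)/T=\delta/(2T)$, and the choice of $s$ satisfies both. It yields
\[
\sup_{\rho\in\mathcal D(\mathcal H)} f_{\rho,\varepsilon}(N^{\mathcal F_s})\le\delta.
\]
With the identity decoder $g(z)=z$, this is exactly the condition of Definition~\ref{def:shadow-tomography}, so the strategy uses $T$ copies as claimed.

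The main obstacle is the circularity between $T$ and the net radius. The radius must shrink like $1/T$, so one has to check that $T$ does not in turn depend on $s$ or on $|\mathcal F_s|$. This holds because Corollary~\ref{cor:simultaneous-finite-prior-cubic} is uniform over all finite ensembles and all priors, and because the minimax is applied to a loss family indexed by the net, not through a union bound. The net's cardinality, and hence $\dim\mathcal H$, therefore never enters $T$.

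A second point to check is that the grid $\mathcal Z_\epsilon$ is independent of the prior, so all the Bayes strategies lie in the same compact convex set $\mathcal L_T(\mathcal Z_\epsilon)$. This is already arranged in Section~\ref{subsec:common-finite-decoder}.
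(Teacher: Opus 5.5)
Your proposal is correct and is essentially the paper's proof: the same choices $\epsilon=\varepsilon/2$, $\beta=\delta/2$, fixing $T=rn$ from Corollary~\ref{cor:simultaneous-finite-prior-cubic} before taking the net radius $s=\min\{\varepsilon/2,\delta/(2T)\}$, then Proposition~\ref{prop:finite-uniformization} on $\mathcal F_s$ and Theorem~\ref{thm:finite-net-lift} to extend to all states. Your check that $\log(4M/\delta)\le 2\log(2M/\delta)$ only makes explicit the constant absorption that the paper leaves implicit.
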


\begin{proof}
Set $\epsilon=\varepsilon/2$, $\beta=\delta/2$, and fix the integers
$r,n$ from Corollary~\ref{cor:simultaneous-finite-prior-cubic} at accuracy
$\epsilon/2$ and failure $\beta$, so $T=rn$ is fixed before choosing
the net.  Take $s=\min\{\varepsilon/2,\delta/(2T)\}$.
Proposition~\ref{prop:finite-uniformization} supplies one measurement on
$\mathcal F_s$, and Theorem~\ref{thm:finite-net-lift} extends it to every
state. 
\end{proof}

\subsection{Minimax on recovery measurements and independent averaging}
\label{subsec:uniform-bias}

We now apply the same minimax lemma to a different algorithm, the conditional
mean bound in Corollary~\ref{cor:conditional-bias} concerns the mean of the
recovery label decoder.  We first make this mean bound uniform over all
states, and then use independent copies of the final measurement in
Corollary~\ref{cor:independent-averaging}.

\begin{lemma}
\label{lem:uniform-bias}
Fix $0\preceq E_1,\ldots,E_M\preceq I$ and $\epsilon\in(0,1)$.  For a sufficiently large absolute constant $C$, set
\begin{equation*}
N:=\left\lceil\frac{C\log(2M)}{\epsilon^2}\right\rceil.
\end{equation*}
There is a $\mathcal Z_\epsilon$-outcome POVM on $\mathcal H^{\otimes N}$, with
identity decoder, such that
\begin{equation}
\sup_{\rho\in\mathcal D(\mathcal H)}
\max_{1\le j\le M}
\left|\theta_j(\rho)-\mathbb E_\rho Z_j\right|
\le \epsilon,
\qquad \theta_j(\rho):=\operatorname{Tr}(E_j\rho).
\label{eq:uniform-bias}
\end{equation}
The measurement depends only on $\mathcal H$, the known list, and $\epsilon$.
\end{lemma}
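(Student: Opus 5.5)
The plan is to run the minimax argument of Lemma~\ref{lem:finite-minimax} on \emph{signed} mean losses instead of failure probabilities, with the Bayes responses supplied by Algorithm~\ref{alg:recovery-pgm} and Corollary~\ref{cor:conditional-bias}. After that, a trace-distance net extends the result to all states, as in Theorem~\ref{thm:finite-net-lift}.

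\textbf{Setting up the game.} Fix a finite trace-distance $s$-net $\mathcal F_s$, with $s$ chosen at the end. The index set is $\mathcal I=\mathcal F_s\times[M]\times\{-1,+1\}$, and the strategy set is $\mathcal L_N(\mathcal Z_\epsilon)$. For $(\sigma,j,\varsigma)\in\mathcal I$ define
\[
f_{\sigma,j,\varsigma}(N')\coloneqq\varsigma\Bigl(\theta_j(\sigma)-\sum_{z\in\mathcal Z_\epsilon}z_j\operatorname{Tr}\bigl(N'_z\sigma^{\otimes N}\bigr)\Bigr).
\]
Each of these is a continuous affine function of $N'$, and there are finitely many of them. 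So Lemma~\ref{lem:finite-minimax} applies: it suffices to show that every prior $\pi$ on $\mathcal I$ admits a response with $\sum_i\pi_if_i\le\epsilon/2$.

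\textbf{Bayes response.} Let $p$ be the marginal of $\pi$ on $\mathcal F_s$, and drop states of zero weight. Run Algorithm~\ref{alg:recovery-pgm} on the ensemble $\{(p_\sigma,\sigma)\}$ with budget $N$. Then coarse-grain it to the grid POVM $N^{(p)}_z=\sum_{y:\mathcal R_\epsilon(\vartheta(\rho_y))=z}\overline D^{(p)}_y$, as in Section~\ref{subsec:common-finite-decoder}. Conditional on $X=\sigma$, Eq.~\eqref{eq:recovery-kernel} gives $\mathbb E[\theta_j(\rho_Y)]=\sum_yK(y\mid\sigma)\theta_j(\rho_y)$. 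By Eq.~\eqref{eq:grid-rounding-error}, rounding moves each coordinate of the mean by at most $\epsilon/4$. Bounding the conditional weights of $\pi$ by a maximum over $(j,\varsigma)$, and then applying Corollary~\ref{cor:conditional-bias}, gives
\[
\sum_i\pi_if_i(N^{(p)})\le\sum_\sigma p_\sigma\max_{j}\Bigl|\theta_j(\sigma)-\sum_yK(y\mid\sigma)\theta_j(\rho_y)\Bigr|+\frac{\epsilon}{4}\le\sqrt{\frac{\log(2M)}{N}}+\frac{\epsilon}{4}\le\frac{\epsilon}{2},
\]
once $C\ge16$. Lemma~\ref{lem:finite-minimax} then yields a single $N^\star\in\mathcal L_N(\mathcal Z_\epsilon)$ with all signed losses at most $\epsilon/2$. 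Taking both signs, this means $|\theta_j(\sigma)-\mathbb E_\sigma Z_j|\le\epsilon/2$ for every $\sigma\in\mathcal F_s$ and every $j$.

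\textbf{Net lift.} Fix $\rho$ and pick $\sigma\in\mathcal F_s$ with $d_{\rm tr}(\rho,\sigma)\le s$. Lemma~\ref{lem:trace-distance-stability} gives $|\theta_j(\rho)-\theta_j(\sigma)|\le s$. The operator $\sum_zz_jN^\star_z$ satisfies $0\preceq\cdot\preceq I$ because $z_j\in[0,1]$, so the same lemma (the effect part, applied on $\mathcal H^{\otimes N}$) gives $|\mathbb E_\rho Z_j-\mathbb E_\sigma Z_j|\le Ns$. Fix $N$ first, then take $s=\epsilon/(4(N+1))$. Then $|\theta_j(\rho)-\mathbb E_\rho Z_j|\le\epsilon/2+(N+1)s\le 3\epsilon/4\le\epsilon$, which is Eq.~\eqref{eq:uniform-bias}.

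\textbf{Where the care is needed.} Two points need checking: that the signed losses really are affine in the POVM, so that minimax applies directly to expectations and not only to probabilities; and that the rounding error is charged only once. The main obstacle is the Bayes step. One has to verify that the prior-weighted sum over $(j,\varsigma)$ is dominated by the per-state maximum in Corollary~\ref{cor:conditional-bias}. This holds because the conditional law of $(j,\varsigma)$ given $\sigma$ is a distribution, so the inner sum is at most the maximum of $\varsigma(\cdot)$, which is at most the absolute value. One also has to check that using the marginal $p$ as the prior of the recovery measurement is legitimate even though $\pi$ also weights observables. This is fine because the recovery measurement depends only on the state ensemble.
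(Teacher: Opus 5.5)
Your proposal is correct and follows the paper's proof essentially step for step. It uses the same signed-coordinate game on $\mathcal L_N(\mathcal Z_\epsilon)$, the same Bayes response (the recovery measurement with the state marginal as prior, then grid coarse-graining), \cref{lem:finite-minimax}, and the same trace-distance net of radius $\epsilon/[4(N+1)]$. The only difference is cosmetic: you bound the $\pi$-weighted signed sum by the per-state maximum and cite \cref{cor:conditional-bias}, while the paper re-applies \cref{lem:selected-reward} with $W((j,s)\mid x)=\pi(x,j,s)/p_x$ and detailed balance, and both yield the same $\sqrt{\log(2M)/N}$ bound.
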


\begin{proof}
First fix a finite nonempty state set
$\mathcal F=\{\rho_x:x\in\mathcal X\}$.  Use the common space
$\mathcal L_N(\mathcal Z_\epsilon)$, so both the copy count and the output alphabet
are fixed before choosing a prior.  The adversary now chooses
$(x,j,s)\in\mathcal X\times[M]\times\{-1,+1\}$, with payoff
\begin{equation*}
f_{x,j,s}((N_z)_z)
:=s\left(
\theta_j(\rho_x)-
\sum_{z\in\mathcal Z_\epsilon}z_j\operatorname{Tr}(N_z\rho_x^{\otimes N})
\right).
\end{equation*}
For each fixed $(x,j,s)$ this is a continuous affine function on the common
POVM space, as required in Lemma~\ref{lem:finite-minimax}.

Fix a distribution $\pi(x,j,s)$ over these choices and put
\[
p_x:=\sum_{j,s}\pi(x,j,s),
\qquad
W((j,s)\mid x):=\frac{\pi(x,j,s)}{p_x}\quad(p_x>0).
\]
States of weight zero can be omitted when constructing the recovery
measurement. Choose $W(\cdot\mid x)$ arbitrarily for them.  There are $2M$
signed-coordinate labels, and set
\[
A_{j,+}=E_j,\qquad A_{j,-}=I-E_j.
\]
Apply the recovery label measurement with the state prior $p$ and budget
$N$, and first decode its label $Y$ as $\vartheta(\rho_Y)$.  In particular,
the conditional law of $Y$ given $X=x$ is the same averaged kernel $K(y\mid x)$
for every $W$, as in Lemma~\ref{lem:selected-reward}.
By detailed balance, Eq.~\eqref{eq:recovery-balance}, the second reward in
that lemma can be written as
\begin{align*}
&\sum_{x,y,j,s}p_xK(y\mid x)W((j,s)\mid y)
\operatorname{Tr}(A_{j,s}\rho_x)\\
&\qquad=
\sum_{x,y,j,s}p_yK(x\mid y)W((j,s)\mid y)
\operatorname{Tr}(A_{j,s}\rho_x)\\
&\qquad=
\sum_{x,j,s}p_xW((j,s)\mid x)
\sum_yK(y\mid x)\operatorname{Tr}(A_{j,s}\rho_y).
\end{align*}
Here the last equality interchanges $x$ and $y$.  Since
$\sum_yK(y\mid x)=1$, both signs satisfy
\[
\operatorname{Tr}(A_{j,s}\rho_x)
-\sum_yK(y\mid x)\operatorname{Tr}(A_{j,s}\rho_y)
=s\left(\theta_j(\rho_x)-\sum_yK(y\mid x)\theta_j(\rho_y)\right).
\]
For $s=-1$, the two constant terms from $I-E_j$ cancel.  Therefore
Lemma~\ref{lem:selected-reward} gives
\begin{equation}
\sum_{x,j,s}\pi(x,j,s)s
\left(\theta_j(\rho_x)-\sum_yK(y\mid x)\theta_j(\rho_y)\right)
\le\sqrt{\frac{\log(2M)}{N}}\le\frac{\epsilon}{2}.
\label{eq:mixed-signed-bias}
\end{equation}

Now round the raw output with $\mathcal R_\epsilon$.  The decoder-absorption
construction gives the feasible POVM
\[
N_z^{(p)}
:=\sum_{\substack{y:\mathcal R_\epsilon(\vartheta(\rho_y))=z}}
\overline D_y^{(p)},\qquad z\in\mathcal Z_\epsilon.
\]
Eq.~\eqref{eq:grid-rounding-error} changes each signed coordinate payoff
by at most $\epsilon/4$, so Eq.~\eqref{eq:mixed-signed-bias} implies
\[
\sum_{x,j,s}\pi(x,j,s)f_{x,j,s}((N_z^{(p)})_z)
\le\frac{3\epsilon}{4}.
\]
This holds for every mixed adversary $\pi$.  Lemma~\ref{lem:finite-minimax}
therefore gives one POVM $(N_z^{\mathcal F})_z$ such that
\begin{equation}
\max_{x,j}\left|
\theta_j(\rho_x)-\sum_{z\in\mathcal Z_\epsilon}z_j
\operatorname{Tr}(N_z^{\mathcal F}\rho_x^{\otimes N})
\right|\le\frac{3\epsilon}{4}.
\label{eq:finite-state-bias}
\end{equation}
The absolute value follows by maximizing over both signs.  This minimizer
is selected in the full space $\mathcal L_N(\mathcal Z_\epsilon)$. The
prior-dependent recovery measurement supplied one feasible response to
each $\pi$.

We next use the trace-distance tools to choose $\mathcal F$.  For any fixed
POVM $(N_z)_z$, put
\[
B_j:=\sum_{z\in\mathcal Z_\epsilon}z_jN_z.
\]
Since $0\le z_j\le1$, both
$B_j=\sum_z z_jN_z$ and $I-B_j=\sum_z(1-z_j)N_z$ are positive.
Thus $0\preceq B_j\preceq I$, and
$\mathbb E_\rho Z_j=\operatorname{Tr}(B_j\rho^{\otimes N})$.
Lemma~\ref{lem:trace-distance-stability}, applied to $E_j$ and to $B_j$ on
the tensor-product space, gives
\begin{align}
&\left|
[\theta_j(\rho)-\mathbb E_\rho Z_j]
-[\theta_j(\sigma)-\mathbb E_\sigma Z_j]
\right|\notag\\
&\qquad\le
\left|\operatorname{Tr}(E_j(\rho-\sigma))\right|
+\left|\operatorname{Tr}(B_j(\rho^{\otimes N}-\sigma^{\otimes N}))\right|
\notag\\
&\qquad\le(N+1)d_{\mathrm{tr}}(\rho,\sigma).
\label{eq:bias-trace-stability}
\end{align}
Choose a finite trace-distance net of radius $\epsilon/[4(N+1)]$ and use it as
$\mathcal F$ in Eq.~\eqref{eq:finite-state-bias}, for every state $\rho$,
there is a net state $\sigma$ for which Eq.~\eqref{eq:bias-trace-stability}
adds at most $\epsilon/4$ to that bound.  This proves Eq.~\eqref{eq:uniform-bias}
for the same POVM on all states, where the net and the grid are used in the
minimax and continuity arguments, with no probability union bound over
either of their sizes.
\end{proof}

Lemma~\ref{lem:uniform-bias} shows that, under the recovery-measurement approach, the expectation of the estimator is uniformly close to the true expectation value. To turn this small-bias guarantee into a high-probability estimation guarantee, we repeat the measurement independently on multiple disjoint groups of fresh copies and take the empirical mean of the estimates. By Hoeffding's inequality, this empirical mean concentrates around its expectation with high probability. Combining the uniform bias bound with this concentration bound yields the desired shadow-tomography guarantee.

\begin{corollary}
\label{cor:independent-averaging}
For every known list of $M$ effects, every $\epsilon\in(0,1)$, and
$\delta\in(0,1)$, there is a finite-outcome shadow-tomography strategy with
accuracy $\epsilon$ and failure probability $\delta$ using at most
\begin{equation*}
O(\epsilon^{-4}\log(2M)\log(2M/\delta))
\label{eq:intermediate-copy-bound}
\end{equation*}
copies.  The bound is independent of the Hilbert-space dimension.
\end{corollary}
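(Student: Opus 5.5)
We apply Lemma~\ref{lem:uniform-bias} with accuracy $\epsilon/2$. This gives a $\mathcal Z_{\epsilon/2}$-outcome POVM on $\mathcal H^{\otimes N}$, with
\[
N=\left\lceil\frac{4C\log(2M)}{\epsilon^2}\right\rceil,
\]
whose identity-decoded output $Z\in[0,1]^M$ satisfies $\max_j|\theta_j(\rho)-\mathbb E_\rho Z_j|\le\epsilon/2$ for every state $\rho$. The strategy is to run this same measurement independently on $m$ disjoint blocks of $N$ fresh copies, obtaining i.i.d.\ outputs $Z^{(1)},\ldots,Z^{(m)}$, and to report the coordinatewise empirical mean $\widehat Z=\frac1m\sum_{i=1}^m Z^{(i)}$. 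The total copy count is $T=mN$.

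Since the input is $\rho^{\otimes mN}=(\rho^{\otimes N})^{\otimes m}$ and the POVM is a product of identical POVMs, the outputs are i.i.d.\ with the law $\mathbb P_\rho$ of a single block. Each coordinate $Z^{(i)}_j$ lies in $[0,1]$, so Hoeffding's inequality gives
\[
\mathbb P_\rho\bigl[|\widehat Z_j-\mathbb E_\rho Z_j|>\epsilon/2\bigr]\le 2e^{-m\epsilon^2/2}.
\]
A union bound over the $M$ coordinates, with $m=\lceil 2\epsilon^{-2}\log(2M/\delta)\rceil$, bounds the failure probability by $\delta$. On the complementary event, the triangle inequality with the uniform bias bound gives $|\widehat Z_j-\theta_j(\rho)|\le\epsilon$ for all $j$. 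This holds for every $\rho$, since both the bias bound and Hoeffding's inequality are uniform in $\rho$. Hence
\[
T=mN=O\bigl(\epsilon^{-4}\log(2M)\log(2M/\delta)\bigr),
\]
with no dependence on $\dim\mathcal H$.

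It remains to check that the result fits Definition~\ref{def:shadow-tomography}. The overall POVM is the tensor product of the $m$ block POVMs, which has the finite outcome alphabet $\mathcal Z_{\epsilon/2}^m$. The decoder $g$ is the empirical mean, which maps into $[0,1]^M$ by convexity. The strict inequality $>\epsilon$ in the definition is handled because failure requires some coordinate deviation to exceed $\epsilon/2$.

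The only delicate point, and the one to verify carefully, is the i.i.d.\ structure. The minimax POVM from Lemma~\ref{lem:uniform-bias} is fixed independently of $\rho$, so reusing it on fresh blocks gives genuinely independent samples under every input state. This independence is what lets Hoeffding's inequality and the union bound apply. Everything else is routine.
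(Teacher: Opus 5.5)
Your proposal is correct and matches the paper's proof. Both apply Lemma~\ref{lem:uniform-bias} at tolerance $\epsilon/2$, rerun the resulting state-independent measurement on $\lceil 2\epsilon^{-2}\log(2M/\delta)\rceil$ disjoint blocks, and finish with Hoeffding's inequality, a union bound over the $M$ coordinates, and the triangle inequality against the uniform bias bound; your checks of the product (i.i.d.) structure and of the finite output alphabet $\mathcal Z_{\epsilon/2}^m$ make explicit what the paper leaves implicit.
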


\begin{proof}
Apply Lemma~\ref{lem:uniform-bias} with tolerance $\epsilon/2$. There exists a measurement protocol using
$O(\log(M)/\epsilon^2)$ copies and outputting an estimator $Z_j$ for each observable $E_j$, such that
\begin{align*}
    \sup_{\rho\in\mathcal D(\mathcal H)}
    \max_{1\le j\le M}
    \left|\theta_j(\rho)-\mathbb E_\rho Z_j\right|
    \le \frac{\epsilon}{2}.
\end{align*}
Now repeat this protocol independently on $r$ disjoint groups of fresh copies, and denote the estimator for the $j$-th observable in the $\ell$-th round by $Z_j^{(\ell)}$, where
\[
    r=\left\lceil\frac{2\log(2M/\delta)}{\epsilon^2}\right\rceil.
\]
Define the final estimator by
\[
    \widehat\theta_j
    :=\frac1r\sum_{\ell=1}^r Z_j^{(\ell)}.
\]

For every input state $\rho$ and every $j$, the random variables
$Z_j^{(1)},\ldots,Z_j^{(r)}$ are independent, have the same mean
$\mathbb E_\rho Z_j$, and take values in $[0,1]$.
Therefore, Hoeffding's inequality~\cite[Theorem~2]{hoeffding1963} gives
\begin{equation*}
    \mathbb P_\rho\!\left[
        \left|
        \frac1r\sum_{\ell=1}^r Z_j^{(\ell)}
        -\mathbb E_\rho Z_j
        \right|
        >\frac{\epsilon}{2}
    \right]
    \le
    2\exp(-r\epsilon^2/2)
    \le
    \frac{\delta}{M}.
\end{equation*}
By a union bound over $j\in[M]$,
\[
    \mathbb P_\rho\!\left[
        \max_{1\le j\le M}
        \left|
        \frac1r\sum_{\ell=1}^r Z_j^{(\ell)}
        -\mathbb E_\rho Z_j
        \right|
        >\frac{\epsilon}{2}
    \right]
    \le
    2M\exp(-r\epsilon^2/2)
    \le
    \delta.
\]
Combining this concentration bound with Eq.~\eqref{eq:uniform-bias}, we obtain, for every
$\rho\in\mathcal D(\mathcal H)$,
\[
    \mathbb P_\rho\!\left[
        \max_{1\le j\le M}
        \left|
        \widehat\theta_j-\theta_j(\rho)
        \right|
        >\epsilon
    \right]
    \le\delta.
\]
\end{proof}

\subsection{Geometric precision refinement}
\label{subsec:geometric-refinement}
We now improve the dependence on the target accuracy from
$\varepsilon^{-4}$ to $\varepsilon^{-2}$ by applying
Corollary~\ref{cor:independent-averaging} repeatedly at constant
accuracy. Rather than using the corollary to estimate the original
expectation values directly to accuracy $\varepsilon$, we apply it to
a sequence of observables chosen according to our current estimates and improve the accuracy sequentially.

Suppose that at stage $s$, we have an estimator $c_{j,s}$ with current error bound $\epsilon_s$. For each observable $j$, we construct a block observable $F_{j,s}$ whose energy provides additional information that allows us to refine $c_{j,s}$ and improve its accuracy to the next target level $\epsilon_{s+1}$.

Each invocation of Corollary~\ref{cor:independent-averaging} estimates
the entire list of block observables simultaneously. Its
dimension-independent guarantee allows us to work on
$\mathcal H^{\otimes k_s}$ without paying an additional dimension
factor. The error bounds decrease geometrically, and the total copy count is controlled by
the last few stages. This yields the improvement on $\varepsilon$-dependency.

\begin{theorem}
\label{thm:geometric-refinement}
For every finite-dimensional $\mathcal H$, every sets of observables
$0\preceq E_1,\ldots,E_M\preceq I$, and every
$0<\varepsilon\le1$, $0<\delta<1$, there is a finite-outcome collective
POVM and decoder with
\[
\sup_{\rho\in\mathcal D(\mathcal H)}
\mathbb P_\rho\!\left[
\max_{1\le j\le M}|\widehat\theta_j-\theta_j(\rho)|>\varepsilon
\right]\le\delta
\]
using
\begin{equation}
T=O\!\left(\varepsilon^{-2}\log(M)\log(M/\delta)\right)
\label{eq:quadratic-log-copy-bound}
\end{equation}
copies.  The implied constant is independent of $\dim\mathcal H$.
\end{theorem}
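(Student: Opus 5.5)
The plan is to bootstrap Corollary~\ref{cor:independent-averaging}, used only at a fixed constant accuracy $\alpha$, through a geometric sequence of stages $s=0,1,\ldots,S$ with target errors $\epsilon_s=2^{-s}$ and $S=\lceil\log_2(1/\varepsilon)\rceil$. The constant $\alpha$ makes the factor $\alpha^{-4}$ an absolute constant. Stage $0$ is trivial: take $c_{j,0}=1/2$ with $\epsilon_0=1/2$. At stage $s$ we are given current estimates $c_{j,s}$ with $\max_j|c_{j,s}-\theta_j(\rho)|\le\epsilon_s$. We then choose a block length $k_s=\lceil C_0\epsilon_s^{-2}\rceil$ and define the block observables on $\mathcal H^{\otimes k_s}$ by
\[
F_{j,s}\coloneqq\sum_{K=0}^{k_s} g_{j,s}(K/k_s)\,\Pi^{(j)}_K ,
\qquad
g_{j,s}(u)\coloneqq \min\Bigl\{1,\max\Bigl\{0,\tfrac{u-c_{j,s}+w_s}{2w_s}\Bigr\}\Bigr\},
\qquad w_s=2\epsilon_s .
\]
Here $\Pi^{(j)}_K$ is the spectral projector of the commuting product measurement $\{E_j,I-E_j\}^{\otimes k_s}$ onto outcome count $K$. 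Because these projectors commute and $g_{j,s}\in[0,1]$, we have $0\preceq F_{j,s}\preceq I$. Moreover $\operatorname{Tr}(F_{j,s}\rho^{\otimes k_s})=h_{j,s}(\theta_j(\rho))$, where $h_{j,s}(\theta)=\mathbb E\,g_{j,s}(\mathrm{Bin}(k_s,\theta)/k_s)$ is a known function. Since the list $\{F_{j,s}\}_j$ depends only on the classical data $c_{\cdot,s}$, Corollary~\ref{cor:independent-averaging} applies to it on the space $\mathcal H^{\otimes k_s}$, with accuracy $\alpha$ and failure probability $\delta_s$. Its copy count does not depend on the dimension, so replacing $\mathcal H$ by $\mathcal H^{\otimes k_s}$ costs nothing extra. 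The stage therefore uses $O(k_s\log(2M)\log(2M/\delta_s))$ copies of $\rho$.

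The decoding step inverts $h_{j,s}$ on the window $|\theta-c_{j,s}|\le\epsilon_s$. By stochastic monotonicity of the binomial law and monotonicity of $g_{j,s}$, the function $h_{j,s}$ is nondecreasing. Its derivative can be written as
\[
h_{j,s}'(\theta)=k_s\,\mathbb E\bigl[g_{j,s}((K'+1)/k_s)-g_{j,s}(K'/k_s)\bigr],
\qquad K'\sim\mathrm{Bin}(k_s-1,\theta).
\]
This is at least $\frac{1}{2w_s}\Pr[K'/k_s\text{ lies in the ramp interior}]$. On the window the ramp interior contains $[\theta-\epsilon_s,\theta+\epsilon_s]$, up to an $O(1/k_s)$ shift. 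Chebyshev's inequality with variance at most $1/(4k_s)\le\epsilon_s^2/(4C_0)$ then makes this probability at least $1/2$, giving $h'\ge 1/(4w_s)=1/(8\epsilon_s)$. We set $c_{j,s+1}$ to the point of the window whose $h_{j,s}$-value is closest to the estimate, clipping at the endpoints; this is a deterministic classical rule. Then an $\alpha$-accurate estimate of $h_{j,s}(\theta_j)$ gives $|c_{j,s+1}-\theta_j|\le 8\epsilon_s\alpha\le\epsilon_{s+1}$ once $\alpha\le1/16$. The clipping is harmless because $\theta_j$ lies in the window on the success event.

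For the accounting we take failure budgets $\delta_s=\delta\,2^{-(S-s+1)}$, so that $\sum_s\delta_s\le\delta$ and a union bound over stages controls the total failure. The total copy count is
\[
\sum_{s\le S} O\bigl(4^s\log(2M)\,[\log(2M/\delta)+(S-s+1)]\bigr)=O\bigl(4^S\log(M)\log(M/\delta)\bigr)=O\bigl(\varepsilon^{-2}\log(M)\log(M/\delta)\bigr),
\]
since $\sum_s4^s(S-s+1)=O(4^S)$. Adaptivity is not a problem. Each stage's POVM is chosen as a function of the earlier classical outcomes, so the flattened adaptive procedure, constructed exactly as in Remark~\ref{prop:flatten}, is a single collective POVM on $\mathcal H^{\otimes T}$ with a deterministic decoder. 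Outputting $\widehat\theta_j=c_{j,S+1}$ then gives error at most $\varepsilon$ simultaneously for all $j$.

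I expect the main obstacle to be the quantitative inversion step. It requires a uniform lower bound on $h_{j,s}'$ across the whole window that is robust to the discreteness of $K/k_s$ and to boundary effects near $\theta\in\{0,1\}$. These require checking the constants $C_0$, $w_s$ and $\alpha$ consistently; I would first try the Chebyshev argument above and fall back to an explicit binomial anti-concentration bound if needed. A secondary point is to confirm that Corollary~\ref{cor:independent-averaging} is invoked with constant accuracy and a stage-specific failure probability, so that the $\alpha^{-4}$ factor stays constant and only $\log(M/\delta_s)$ appears. This is what keeps the final dependence at $\varepsilon^{-2}$ rather than $\varepsilon^{-4}$.
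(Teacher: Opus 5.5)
Your proposal is correct and has the same architecture as the paper's proof. Both use geometric stages with $k_s\asymp\epsilon_s^{-2}$ and block observables on $\mathcal H^{\otimes k_s}$ built from the count of the product measurement $\{E_j,I-E_j\}^{\otimes k_s}$. Both call Corollary~\ref{cor:independent-averaging} at constant accuracy at each stage, allocate failure probabilities geometrically, and get a total cost dominated by the last stage.

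The difference is the per-stage primitive.
\begin{itemize}
\item The paper takes $F_{j,s}$ to be the hard-threshold event $\{k_s^{-1}\sum_i v_i\ge c_{j,s}\}$ and extracts only one bit. Hoeffding gives $p_{j,s}\le 1/8$ when $\theta_j\le c_{j,s}-\epsilon_s/4$ and $1-p_{j,s}\le 1/8$ when $\theta_j\ge c_{j,s}+\epsilon_s/4$. An accuracy-$1/4$ estimate therefore decides a sign, and the update $c_{j,s}\pm\epsilon_s/2$ contracts the error by a factor $3/4$.
\item You use a soft ramp $g_{j,s}$ and invert the known monotone function $h_{j,s}$. This needs an anti-concentration lower bound $h_{j,s}'\ge 1/(8\epsilon_s)$ on the whole window.
\item Your Chebyshev sketch does deliver that bound. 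With $C_0\ge 2$, the mean of $K'/k_s$ lies within $1/k_s$ of $\theta$ and the variance is at most $1/(4k_s)$. Hence $K'/k_s$ and $(K'+1)/k_s$ both lie in the ramp interior with probability at least $1/2$, uniformly in $\theta\in[0,1]$, so there is no real boundary issue.
\item The closest-point rule gives $|h_{j,s}(c_{j,s+1})-h_{j,s}(\theta_j)|\le\alpha$. This holds whether or not the estimate lies in the range of $h_{j,s}$ on the window, so you get the contraction to $8\alpha\epsilon_s\le\epsilon_s/2$.
\end{itemize}
The paper's route is more elementary: only tail bounds and a sign test, with no inversion or derivative estimate. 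Yours extracts a quantitative estimate per stage and halves the error, at the cost of the extra anti-concentration step. Both give the same $O(\varepsilon^{-2}\log(2M)\log(2M/\delta))$ bound.

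Three small cosmetic points:
\begin{itemize}
\item $\epsilon_s=2^{-s}$ is inconsistent with your stated $\epsilon_0=1/2$. Either indexing works, but fix one.
\item When $E_j$ is not a projector, the $\Pi^{(j)}_K$ are commuting positive operators summing to $I$, not spectral projectors. That is still all you need for $0\preceq F_{j,s}\preceq I$.
\item Intersect the window with $[0,1]$. Then the decoder outputs values in $[0,1]$, and the probabilistic derivative formula applies on the entire search interval.
\end{itemize}
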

\begin{proof}
Consider the following algorithm:
\begin{algorithm}[H]
\caption{Geometric precision refinement}
\label{alg:geometric-refinement}
\begin{algorithmic}[1]

\Require Observables
    $0\preceq E_1,\ldots,E_M\preceq I$ on $\mathcal H$;
    $0<\varepsilon\leq1$, $0<\delta<1$;
    fresh copies of an unknown $\rho\in\mathbf{D}(\mathcal H)$.

\Ensure Estimates $(\widehat\theta_j)_{j=1}^M$.

\State $c_{j,0}\gets 1/2$ for every $j=1,\ldots,M$

\If{$\varepsilon\geq 1/2$}
    \State \Return $(c_{j,0})_{j=1}^M$
        \Comment{No copies are needed}
\EndIf

\State
 $\epsilon_s\gets\tfrac12\left(\frac{3}{4}\right)^s$

\State $S\gets\min\{s\geq0:\epsilon_s\leq\varepsilon\}$

\For{$s=0,\ldots,S-1$}

    \State $\delta_s\gets(\delta/4)\left(\frac{3}{4}\right)^{S-1-s}$

    \State
        $k_s\gets\lceil C/\epsilon_s^2\rceil$,
        $q_s\gets
        \lceil C\log(2M)\log(2M/\delta_s)\rceil$ with C in Corollary~\ref{cor:independent-averaging} and assume $e^{-C/8}
    \le \frac18$

    \For{$j=1,\ldots,M$}

        \State Define the following observable on
            $\mathcal H^{\otimes k_s}$:

        \Statex
        \hspace{\algorithmicindent}
        \hspace{\algorithmicindent}
        $\displaystyle
        F_{j,s}:=
        \sum_{\substack{
            v\in\{0,1\}^{k_s}\\
            k_s^{-1}\sum_{i=1}^{k_s}v_i\geq c_{j,s}
        }}
        \bigotimes_{i=1}^{k_s}
        \begin{cases}
            E_j,   & v_i=1,\\
            I-E_j, & v_i=0.
        \end{cases}$

    \EndFor

    \State Take $k_sq_s$ fresh copies of $\rho$ and form
        $q_s$ disjoint blocks of $k_s$ copies each.

    \State
        $(\widehat p_{j,s})_{j=1}^M\gets
        \text{Corollary~\ref{cor:independent-averaging} with observables $\{F_{j,s}\}$, accuracy $1/4$, failure probability $\delta_s$}$

    \For{$j=1,\ldots,M$}

        \If{$\widehat p_{j,s}\geq 1/2$}
            \State
                $c_{j,s+1}\gets
                (c_{j,s}+\epsilon_s/2)$
        \Else
            \State
                $c_{j,s+1}\gets
                (c_{j,s}-\epsilon_s/2)$

        \EndIf

    \EndFor

\EndFor

\State \Return
    $(\widehat\theta_j)_{j=1}^M==\min\{1, \max\{0, (c_{j,S})_{j=1}\}\}^M$

\end{algorithmic}
\end{algorithm}

\noindent\textit{Correctness:} 
Fix $\rho$ and write $\theta_j=\theta_j(\rho)$. Let
\[
p_{j,s}
:=\Tr\!\left[\rho^{\otimes k_s}F_{j,s}\right]
\]
denote the expectation of the observable $F_{j,s}$. By the product structure of $F_{j,s}$, we have
\[
p_{j,s}
=
\Pr\!\left[
\frac{1}{k_s}\sum_{i=1}^{k_s}V_i
\ge c_{j,s}
\right],
\qquad
V_1,\ldots,V_{k_s}
\stackrel{\mathrm{i.i.d.}}{\sim}
\operatorname{Bernoulli}(\theta_j).
\]
Since $V_1,\ldots,V_{k_s}$ are independent Bernoulli random variables with mean $\theta_j$, Hoeffding's inequality implies that, for every $t>0$,
\[
\Pr\!\left[
\frac{1}{k_s}\sum_{i=1}^{k_s}V_i-\theta_j\ge t
\right]
\le e^{-2k_st^2},
\qquad
\Pr\!\left[
\theta_j-\frac{1}{k_s}\sum_{i=1}^{k_s}V_i\ge t
\right]
\le e^{-2k_st^2}.
\]
Consequently, let $\gamma_s=\epsilon_s/4$ and
$k_s=\lceil C/\epsilon_s^2\rceil$, we have:
\begin{equation}
\begin{aligned}
    \theta_j \le c_{j,s}-\gamma_s
    &\quad\Longrightarrow\quad
    p_{j,s}
    \le
    \Pr\!\left(
        \frac{1}{k_s}\sum_{i=1}^{k_s}V_i-\theta_j
        \ge \gamma_s
    \right)
    \le e^{-2k_s\gamma_s^2}
    \le \frac18,\\
    \theta_j \ge c_{j,s}+\gamma_s
    &\quad\Longrightarrow\quad
    1-p_{j,s}
    \le
    \Pr\!\left(
        \theta_j-\frac{1}{k_s}\sum_{i=1}^{k_s}V_i
        \ge \gamma_s
    \right)
    \le e^{-2k_s\gamma_s^2}
    \le \frac18.
\end{aligned}
\label{eq:threshold-tail-separation}
\end{equation}
where
\[
e^{-2k_s\gamma_s^2}
=
e^{-k_s\epsilon_s^2/8}
\le
e^{-C/8}
\le
\frac18,
\]
Apply Corollary~\ref{cor:independent-averaging} on the Hilbert space
$\mathcal H^{\otimes k_s}$ to this list, at accuracy $1/4$ and failure
probability $\delta_s$, which requires
\begin{equation}
q_s:=\left\lceil C\log(2M)\log(2M/\delta_s)\right\rceil
\label{eq:refinement-block-budget}
\end{equation}
fresh copies of the block state $\rho^{\otimes k_s}$. Then we
update the estimator by
\begin{equation*}
c_{j,s+1}:=
\begin{cases}
c_{j,s}+\epsilon_s/2,&\widehat p_{j,s}\ge1/2,\\
c_{j,s}-\epsilon_s/2,&\widehat p_{j,s}<1/2.
\end{cases}
\end{equation*}

At each stage $s$, we say that stage $s$ is \emph{successful} if the protocol call of
Corollary~\ref{cor:independent-averaging} succeeds, namely,
\[
    \max_j |\widehat p_{j,s}-p_{j,s}|\le \frac14.
\]
We prove by induction that, if all previous stages are successful, then the
current estimator satisfies
\[
    |c_{j,s}-\theta_j|\le \epsilon_s
\]
For the base case $s=0$, recall that $c_{j,0}=1/2$ and
$\epsilon_0=1/2$. Since $\theta_j\in[0,1]$, we have
\[
    |c_{j,0}-\theta_j|\le \epsilon_0.
\]
At stage $s$, suppose that all previous stages are successful.
By the induction hypothesis and the success of stage $s$, we have:
\[
    |c_{j,s}-\theta_j|\le \epsilon_s,\qquad |\widehat p_{j,s}-p_{j,s}|\le \frac14
\]
First, if $\widehat p_{j,s}\ge 1/2$. Then
\[
    p_{j,s}
    \ge \widehat p_{j,s}
        -|\widehat p_{j,s}-p_{j,s}|
    \ge \frac12-\frac14
    =\frac14.
\]
Since $p_{j,s}>1/8$, by
Eq.~(\ref{eq:threshold-tail-separation}), we have
\[
    \theta_j>c_{j,s}-\frac{\epsilon_s}{4}.
\]
Combining this with the induction hypothesis yields
\[
    c_{j,s}-\frac{\epsilon_s}{4}
    <\theta_j
    \le c_{j,s}+\epsilon_s.
\]
In this case, the algorithm sets $c_{j,s+1}=c_{j,s}+\frac{\epsilon_s}{2}$. Therefore,
\[
    -\frac{3\epsilon_s}{4}
    <\theta_j-c_{j,s+1}
    \le \frac{\epsilon_s}{2} \leq \frac{3\epsilon_s}{4}
\]
Similarly, suppose that $\widehat p_{j,s}<1/2$. We also have:
\[
   -\frac{3\epsilon_s}{4}
    <\theta_j-c_{j,s+1} \le \frac{3\epsilon_s}{4}
\]
Thus, in either case,
\[
    |c_{j,s+1}-\theta_j|
    <\frac{3\epsilon_s}{4}
    =\epsilon_{s+1},
\]
which completes the induction.

Hence, suppose that all $S$ stages are successful. By the previous induction argument, we have
\[
    |c_{j,S}-\theta_j|
    < \epsilon_S
    \le \varepsilon,
\]
which gives the desired estimation accuracy.

It remains to bound the overall success probability. By a union bound, the probability that all $S$ stages succeed is at least
\begin{align*}
    1-\sum_{s=0}^{S-1}\delta_s
    &=
    1-\frac{\delta}{4}
    \sum_{s=0}^{S-1}
    \left(\frac{3}{4}\right)^s \\
    &=
    1-\delta
    \left(
        1-\left(\frac{3}{4}\right)^S
    \right)
    > 1-\delta.
\end{align*}
Therefore, with probability at least $1-\delta$, the final estimates satisfy $
    |c_{j,S}-\theta_j|<\varepsilon$ simultaneously for all $j$.

\noindent\textit{Sample Complexity:} The stage $s$ uses $k_sq_s$ copies of $\rho$. The total sample complexity is
\begin{equation}
T=\sum_{s=0}^{S-1}k_sq_s
=\sum_{s=0}^{S-1}
\left\lceil\frac{C}{\epsilon_s^2}\right\rceil
\left\lceil C\log(2M)\log(2M/\delta_s)\right\rceil.
\label{eq:refinement-total-budget}
\end{equation}
Write $\ell=S-1-s$. Since $\epsilon_{S-1}>\varepsilon$, we have:
\begin{equation*}
\epsilon_s^{-2}=\epsilon_{S-1}^{-2}\left(\frac{3}{4}\right)^{2\ell}
<\varepsilon^{-2}\left(\frac{3}{4}\right)^{2\ell},
\qquad
\log(2M/\delta_s)=\log(8M/\delta)+\ell\log(4/3).
\end{equation*}
The two geometric sums obviously gives
\begin{equation}
T\le C\varepsilon^{-2}\log(2M)
\left(\log(8M/\delta)
+3\right).
\label{eq:refinement-summed-cost}
\end{equation}
for some constant $C$, which proves Eq.~\eqref{eq:quadratic-log-copy-bound}. 
\end{proof}

Theorem~\ref{thm:geometric-refinement} proves Theorem~\ref{thm:main}.

%% file: sections/binomial_distribution.tex
\section{Moment bound for binomial distribution}

Define the numerical constant
\[
  b_2\coloneqq\frac14,
  \qquad
  b_q\coloneqq
  2q\int_0^\infty t^{q-1}\mathrm e^{-2t^2}\dd t
  \quad(q>2).
\]
\begin{lemma}
  \label{lem:binomial-moments}
  For every integer $q\geq2$ and $n\geq1$,
  \begin{equation*}
    \sup_{x\in[0,1]}
    \mathbb{E}_x\left|\frac{K}{n}-x\right|^q
    \leq b_qn^{-q/2},
    \qquad
    K\sim\operatorname{Bin}(n,x).
      \end{equation*}
\end{lemma}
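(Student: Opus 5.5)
The plan has two cases: an exact variance computation for $q=2$, and a sub-Gaussian tail integral for $q>2$. Throughout, write $S=K/n-x$ with $K\sim\operatorname{Bin}(n,x)$. Then $S=\frac1n\sum_{i=1}^n(V_i-x)$ with $V_i$ i.i.d.\ Bernoulli$(x)$. Each summand $V_i-x$ has mean zero and takes values in an interval of length one.

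\emph{Case $q=2$.} This is exact. We have $\mathbb E_x S^2=\operatorname{Var}(K)/n^2=x(1-x)/n\le 1/(4n)=b_2n^{-1}$, with the supremum attained at $x=1/2$. No tail argument is needed here. This is why $b_2$ is defined separately, since the tail formula below would give the worse constant $1/2$.

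\emph{Case $q>2$, step one.} Use the layer-cake representation:
\[
  \mathbb E_x|S|^q=\int_0^\infty q u^{q-1}\,\mathbb P_x[|S|\ge u]\dd u .
\]

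\emph{Case $q>2$, step two.} Apply Hoeffding's inequality~\cite[Theorem~2]{hoeffding1963} to the bounded summands, in the same form used in \cref{subsec:geometric-refinement}. This gives $\mathbb P_x[S\ge u]\le e^{-2nu^2}$ and $\mathbb P_x[-S\ge u]\le e^{-2nu^2}$. Combining the two tails,
\[
  \mathbb P_x[|S|\ge u]\le 2e^{-2nu^2},
\]
uniformly in $x\in[0,1]$.

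\emph{Case $q>2$, step three.} Substitute the tail bound into the layer-cake integral and change variables $u=t/\sqrt n$. This yields
\[
  \mathbb E_x|S|^q\le 2q\,n^{-q/2}\int_0^\infty t^{q-1}e^{-2t^2}\dd t=b_q n^{-q/2}.
\]
This matches the definition of $b_q$ exactly, with the factor $2$ coming from the two-sided tail. Since the bound holds for every $x$, the supremum over $x$ follows immediately.

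\emph{Check of the constant.} As a consistency check, evaluate the Gaussian integral via $\int_0^\infty t^{q-1}e^{-2t^2}\dd t=\tfrac12\,2^{-q/2}\Gamma(q/2)$. For even $q=2\ell$ this gives $b_{2\ell}=2\ell\cdot\tfrac12 2^{-\ell}(\ell-1)!=2^{-\ell}\ell!$. This is consistent with, and in fact a factor $2$ better than, the value $2^{1-\ell}\ell!\le\ell!$ quoted in \cref{thm:recursive-posterior-localization}. So either value suffices for the paper's use.

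\emph{Main obstacle.} None of the steps is truly hard. The only care needed is bookkeeping: the two-sided factor $2$, and keeping the special case $q=2$ separate from the tail argument.
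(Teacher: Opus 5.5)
Your proof is correct and follows the same argument as the paper: the exact variance computation $x(1-x)/n\le 1/(4n)$ for $q=2$, and, for $q>2$, the two-sided Hoeffding tail $2e^{-2nu^2}$ integrated against $qu^{q-1}$ followed by the substitution $u=t/\sqrt n$. Your side check has one slip that does not affect the proof: you dropped the factor $2$ in $b_q=2q\int_0^\infty t^{q-1}e^{-2t^2}\,dt$. With that factor, $b_{2\ell}=2^{1-\ell}\ell!$, which is exactly the paper's value and not a factor $2$ better, and the tail formula at $q=2$ gives $1$ rather than $1/2$.
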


\begin{proof}
  For $q=2$,
  \[
    \mathbb E_x|K/n-x|^2
    =\frac{x(1-x)}n
    \leq\frac1{4n}.
  \]
  For $q>2$, Hoeffding's inequality~\cite[Theorem~2]{hoeffding1963} gives
  \[
    \mathbb P_x\left[|K/n-x|\geq t\right]
    \leq2\mathrm e^{-2nt^2}.
  \]
  Integrating the tail,
  \[
    \begin{aligned}
    \mathbb E_x|K/n-x|^q
    &=q\int_0^\infty
      t^{q-1}\mathbb P_x[|K/n-x|\geq t]\dd t\\
    &\leq
    2q\int_0^\infty t^{q-1}\mathrm e^{-2nt^2}\dd t
    =b_qn^{-q/2}.
    \end{aligned}
  \]
\end{proof}

Also, we could have an upper bound on the coefficients $b_q$.
\begin{lemma}
  For every integer \(q\geq2\),
  \[
    b_q\leq\left(\frac q2\right)^{q/2}.
  \]
  Consequently, for every \(n\geq1\),
  \[
    \sup_{x\in[0,1]}
    \mathbb{E}_x\left|
      \frac{K}{n}-x
    \right|^q
    \leq
    \left(\frac{q}{2n}\right)^{q/2},
    \qquad
    K\sim\operatorname{Bin}(n,x).
  \]
  \label{lem: binomial-coefficient-bound}
\end{lemma}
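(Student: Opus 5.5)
We need to prove $b_q\le (q/2)^{q/2}$, where $b_q=2q\int_0^\infty t^{q-1}e^{-2t^2}\,dt$ for $q>2$ and $b_2=1/4$. The plan is to evaluate $b_q$ in closed form via the Gamma function and then compare it with $(q/2)^{q/2}$.

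\emph{Closed form.} Substitute $u=2t^2$, so that $t=(u/2)^{1/2}$ and $dt=\tfrac12(2u)^{-1/2}\,du$. Then
\[
  \int_0^\infty t^{q-1}e^{-2t^2}\,dt=\tfrac12\,2^{-q/2}\,\Gamma(q/2),
\]
and hence
\[
  b_q=q\,2^{-q/2}\Gamma(q/2)=2^{1-q/2}\,\Gamma(q/2+1).
\]
As a check, $q=4$ gives $b_4=2^{-1}\cdot 2=1$. This is consistent with the expression $2^{1-\ell}\ell!$ used in Section~\ref{sec:posterior-localization} for $q=2\ell$.

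\emph{Case $q=2$.} Here $b_2=1/4\le 1=(2/2)^{1}$, so the bound holds trivially.

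\emph{Case $q\ge3$.} Set $m=q/2\ge 3/2$. The claim becomes $2^{1-m}\Gamma(m+1)\le m^m$, i.e.
\[
  \Gamma(m+1)\le \tfrac12(2m)^m .
\]
This is the only real step. The cleanest route is the crude bound $\Gamma(m+1)\le m^m$ for $m\ge1$, which follows from log-convexity of $\Gamma$ or, for integer $m$, simply from $m!\le m^m$. That bound is off by the factor $2^{m-1}\ge 1$ in our favor, since $\tfrac12(2m)^m=2^{m-1}m^m\ge m^m$.

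\emph{Justifying $\Gamma(m+1)\le m^m$ for non-integer $m\ge1$.} Apply Jensen's inequality to $\Gamma(m+1)=\int_0^\infty x^m e^{-x}\,dx$ under the measure $e^{-x}\,dx$. A cleaner alternative is to use log-convexity to interpolate between the consecutive integers $\lfloor m\rfloor$ and $\lceil m\rceil$. Failing both, one can verify $\Gamma(m+1)\le m^m$ directly: the function $g(m)=m\log m-\log\Gamma(m+1)$ has $g(1)=0$ and $g'(m)=\log m+1-\psi(m+1)\ge 0$, because $\psi(m+1)\le\log(m+1)\le\log m+1/m\le \log m+1$.

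\emph{Consequence.} Combining the bound on $b_q$ with Lemma~\ref{lem:binomial-moments} gives
\[
  \sup_{x}\mathbb E_x\left|\tfrac{K}{n}-x\right|^q\le b_q n^{-q/2}\le\left(\tfrac{q}{2n}\right)^{q/2},
\]
which is the stated moment bound.

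The main obstacle is only bookkeeping in the Gamma-function inequality for half-integer $q/2$. The monotonicity argument via the digamma function handles this uniformly.
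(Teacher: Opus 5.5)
Your proof is correct and follows the paper's argument: the substitution $u=2t^2$ gives $b_q=2^{1-q/2}\Gamma(q/2+1)$, and $\Gamma(s+1)\le s^s$ for $s\ge1$ together with $2^{1-q/2}\le1$ finishes the bound, with $q=2$ checked directly and the moment bound following from Lemma~\ref{lem:binomial-moments}. The only difference is that you prove $\Gamma(s+1)\le s^s$, which the paper just cites as standard, by a valid digamma monotonicity argument; the Jensen route you list first does not work, though, since convexity of $x\mapsto x^m$ only gives the lower bound $\Gamma(m+1)\ge1$.
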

\begin{proof}
   For \(q=2\), the claim follows from
  \[
    b_2=\frac14\leq1.
  \]
  Now suppose that \(q>2\). The change of variables \(u=2t^2\) gives
  \[
    t=\left(\frac u2\right)^{1/2},
    \qquad
    \dd t
    =
    2^{-3/2}u^{-1/2}\dd u,
  \]
  Hence
  \begin{align*}
    b_q
    &=2q\int_0^\infty t^{q-1}\mathrm e^{-2t^2}\dd t\\
    &=2q\int_0^\infty
    2^{-q/2-1}
    u^{q/2-1}\mathrm e^{-u}\dd u\\
    &=
    q2^{-q/2}
    \int_0^\infty
    u^{q/2-1}\mathrm e^{-u}\dd u.\\
    &= q2^{-q/2}\Gamma\left(\frac q2\right)\\
    &= 2^{1-q/2}
    \Gamma\left(\frac q2+1\right)
  \end{align*}
  where $\Gamma(x)$ is the Gamma function
  \[
    \Gamma(s)
    \coloneqq
    \int_0^\infty
    u^{s-1}\mathrm e^{-u}\dd u,
    \qquad s>0.
  \]
  Using the standard inequality
  \[
    \Gamma(s+1)\leq s^s,
    \qquad s\geq1,
  \]
  with \(s=q/2\), we conclude that
  \[
    \begin{aligned}
    b_q
    &\leq
    2^{1-q/2}
    \left(\frac q2\right)^{q/2}\\
    &\leq
    \left(\frac q2\right)^{q/2},
    \end{aligned}
  \]
  since \(2^{1-q/2}\leq1\) for \(q\geq2\).

  Finally, Lemma~\ref{lem:binomial-moments} yields
  \[
    \begin{aligned}
    \sup_{x\in[0,1]}
    \mathbb{E}_x\left|
      \frac{K}{n}-x
    \right|^q
    &\leq
    b_qn^{-q/2}\\
    &\leq
    \left(\frac q2\right)^{q/2}n^{-q/2}\\
    &=
    \left(\frac{q}{2n}\right)^{q/2}.
    \end{aligned}
  \]
\end{proof}

%% file: sections/recovery_appendix.tex
\section{Support, entropy, and recovery calculations}
\label{app:recovery-technical}

All spaces below are finite-dimensional.  Inverse powers are taken on
supports, and complex powers with positive real part are zero on kernels.

\subsection{Support, trace distance, and root fidelity}

\begin{lemma}
\label{lem:support-distance-fidelity}
For every bipartite state $\omega_{UV}$,
\begin{equation}
\supp\omega_{UV}\subseteq\supp\omega_U\otimes\supp\omega_V.
\label{eq:marginal-support}
\end{equation}
For density matrices $\rho,\sigma$ on the same space,
\begin{equation}
d_{\rm tr}(\rho,\sigma)
=\max_{0\preceq E\preceq I}
\left|\operatorname{Tr}(E(\rho-\sigma))\right|.
\label{eq:trace-distance-effects}
\end{equation}
Trace distance contracts under quantum channels, and
\begin{equation}
d_{\rm tr}(\rho,\sigma)\le\sqrt{1-F(\rho,\sigma)^2}.
\label{eq:root-fidelity-distance}
\end{equation}
\end{lemma}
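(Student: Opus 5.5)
We prove the three claims of Lemma~\ref{lem:support-distance-fidelity} separately, each by a short finite-dimensional argument.

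\emph{Support inclusion \eqref{eq:marginal-support}.} Let $P_U$ and $P_V$ be the projectors onto $\supp\omega_U$ and $\supp\omega_V$. It suffices to show $(I-P_U)\otimes I$ annihilates $\omega_{UV}$, and symmetrically for $V$. We have
\[
\operatorname{Tr}\bigl(((I-P_U)\otimes I)\,\omega_{UV}\bigr)=\operatorname{Tr}\bigl((I-P_U)\omega_U\bigr)=0 .
\]
Since $\omega_{UV}\succeq0$ and $(I-P_U)\otimes I$ is a projector $R$, this gives $\|\omega_{UV}^{1/2}R\|_2^2=0$. Hence $R\,\omega_{UV}=\omega_{UV}R=0$. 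Doing the same with $P_V$ shows $\omega_{UV}=(P_U\otimes P_V)\,\omega_{UV}\,(P_U\otimes P_V)$, which is the inclusion.

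\emph{Effect characterization \eqref{eq:trace-distance-effects}.} Write $D=\rho-\sigma=D_+-D_-$ (Jordan decomposition). Tracelessness gives $\operatorname{Tr}D_+=\operatorname{Tr}D_-=\tfrac12\|D\|_1$. For any $0\preceq E\preceq I$,
\[
\operatorname{Tr}(ED)\le\operatorname{Tr}(ED_+)\le\operatorname{Tr}D_+,
\]
and symmetrically $-\operatorname{Tr}(ED)\le\operatorname{Tr}D_-$. Equality is attained by the projector onto the positive part of $D$.

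\emph{Contraction under channels.} Contraction follows from \eqref{eq:trace-distance-effects} by duality. For a channel $\mathcal N$, the trace adjoint $\mathcal N^*$ is unital and positive, so it maps effects to effects. Therefore
\[
\max_E\bigl|\operatorname{Tr}\bigl(E(\mathcal N(\rho)-\mathcal N(\sigma))\bigr)\bigr|=\max_E\bigl|\operatorname{Tr}\bigl(\mathcal N^*(E)(\rho-\sigma)\bigr)\bigr|\le d_{\rm tr}(\rho,\sigma).
\]

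\emph{Fuchs--van de Graaf bound \eqref{eq:root-fidelity-distance}.} This is the main step. The plan is to reduce to pure states and then use monotonicity. By Uhlmann's theorem, choose purifications $\ket\psi,\ket\phi$ with $|\langle\psi|\phi\rangle|=F(\rho,\sigma)$. For pure states, $d_{\rm tr}(\psi,\phi)=\sqrt{1-|\langle\psi|\phi\rangle|^2}$. This holds because $\psi-\phi$ is a rank-two traceless operator whose eigenvalues are $\pm\sqrt{1-|\langle\psi|\phi\rangle|^2}$, which can be computed in the two-dimensional span. The partial trace is a channel, so contraction gives
\[
d_{\rm tr}(\rho,\sigma)\le d_{\rm tr}(\psi,\phi)=\sqrt{1-F(\rho,\sigma)^2}.
\]
Alternatively, one can simply cite \cite[Theorem 1]{fuchsvandegraaf1999}, as the paper already does. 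Uhlmann's theorem is the only nonelementary ingredient. If a self-contained route is preferred, the purification can be constructed explicitly from the polar decomposition of $\sqrt\rho\sqrt\sigma$.
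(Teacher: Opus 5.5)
Your proof is correct. The support inclusion, the effect characterization, and contraction under channels follow the paper's argument essentially verbatim. The paper also shows $\|(Q_U\otimes I)\omega_{UV}^{1/2}\|_2^2=0$ for the kernel projector, uses the Jordan decomposition with equality at the positive-eigenspace projector, and gets contraction from the adjoint of a channel mapping effects to effects.

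The genuine difference is Eq.~\eqref{eq:root-fidelity-distance}. The paper does not prove it; it cites \cite[Theorem~1, Eq.~(46)]{fuchsvandegraaf1999}, adds a remark that this holds in the root-fidelity convention and for singular states, and separately records $F\le 1$ by Cauchy--Schwarz. You instead derive it from three ingredients: Uhlmann's theorem, the exact pure-state formula $d_{\rm tr}=\sqrt{1-|\langle\psi|\phi\rangle|^2}$, and the contraction property you just proved, applied to the partial trace.

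What your route buys is that the lemma becomes self-contained apart from Uhlmann. It also settles the convention and singularity issues the paper has to flag. The purification overlap gives $F\le 1$ automatically, and nothing in the argument needs $\rho$ or $\sigma$ to be faithful. You only need the attainment half of Uhlmann, namely purifications with $|\langle\psi|\phi\rangle|\ge F(\rho,\sigma)$, because $x\mapsto\sqrt{1-x^2}$ is decreasing. That half follows from your explicit construction via the polar decomposition of $\sqrt\rho\sqrt\sigma$, which in finite dimension can be taken with a full unitary even when the operators are singular. The paper's citation is shorter, but the reader must check that the external result's fidelity normalization matches $F(\rho,\sigma)=\|\sqrt\rho\sqrt\sigma\|_1$.
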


\begin{proof}
If $Q_U$ projects onto $\ker\omega_U$, then
\[
0=\operatorname{Tr}((Q_U\otimes I)\omega_{UV})
=\|(Q_U\otimes I)\omega_{UV}^{1/2}\|_2^2.
\]
Thus $(Q_U\otimes I)\omega_{UV}=0$.  Applying the same argument on
$V$ proves Eq.~\eqref{eq:marginal-support}.
The Jordan-decomposition argument in
Lemma~\ref{lem:trace-distance-stability} bounds every effect expectation
by $\tfrac12\|\rho-\sigma\|_1$, with equality for the projection onto
the positive eigenspace of $\rho-\sigma$.  This proves
Eq.~\eqref{eq:trace-distance-effects}.  The adjoint of a quantum channel
maps effects to effects, so that identity also proves contraction.
Equation~\eqref{eq:root-fidelity-distance} is
\cite[Theorem~1, Eq.~(46)]{fuchsvandegraaf1999} in the root-fidelity
convention, including singular states.  We also use
\[
F(\rho,\sigma)=\|\sqrt\rho\sqrt\sigma\|_1
\le\|\sqrt\rho\|_2\|\sqrt\sigma\|_2=1.
\]
\end{proof}

\subsection{The recovery map on a classical register}
\label{app:recovery-specialization}

Let $\omega_{JKC}$ be a state with $J$ classical, and write
\[
\omega_{JK}=\sum_a\ketbra{a}{a}\otimes C_a,
\qquad B=\sum_aC_a=\omega_K,
\qquad q_a=\operatorname{Tr}C_a,
\qquad P=\Pi_{\supp B}.
\]
Each $C_a$ is positive and supported on $P$, since $C_a\preceq B$.
For $u\in\mathbb R$, put $\alpha=(1-iu)/2$, so $\bar\alpha=(1+iu)/2$.
On $\supp B$ the rotated
map in Theorem~\ref{thm:universal-recovery} is
\[
\mathcal R_u(X)=\sum_a\ketbra{a}{a}\otimes
C_a^\alpha B^{-\alpha}X B^{-\bar\alpha}C_a^{\bar\alpha}.
\]
Every block is a positive congruence, so the map is completely positive.
Since $\alpha+\bar\alpha=1$, cyclicity gives
\[
\operatorname{Tr}\mathcal R_u(X)
=\operatorname{Tr}\!\left(
B^{-\bar\alpha}\Bigl(\sum_aC_a\Bigr)B^{-\alpha}X\right)
=\operatorname{Tr}(PX)=\operatorname{Tr}X.
\]
Therefore $\mathcal R=\int_{\mathbb R}\beta_0(u)\mathcal R_u\,du$
preserves trace on this support.  Its full-space completion is
\begin{equation}
\widehat{\mathcal R}(X)
=\mathcal R(PXP)+\operatorname{Tr}((I-P)X)\omega_{JK}.
\label{eq:classical-recovery-completion}
\end{equation}
Both terms are completely positive, and their traces sum to
$\operatorname{Tr}(PX)+\operatorname{Tr}((I-P)X)=\operatorname{Tr}X$.
Thus this completion is a channel.

For the reference $\tau=\omega_{JK}\otimes\omega_C$ and channel
$\mathcal N=\operatorname{Tr}_J$, Eq.~\eqref{eq:marginal-support}
gives $\supp\omega_{JKC}\subseteq\supp\tau$ and
$\supp\omega_{KC}\subseteq\supp B\otimes\supp\omega_C$.
Set $Q=\Pi_{\supp\omega_C}$.  The left product in the $a$-block of
Eq.~\eqref{eq:rotated-recovery-map} is
\[
(C_a^\alpha\otimes\omega_C^\alpha)
(B^{-\alpha}\otimes\omega_C^{-\alpha})
=C_a^\alpha B^{-\alpha}\otimes Q.
\]
Its right product is the adjoint.  Hence, on every operator supported
on $P\otimes Q$, the rotated map equals
$\mathcal R_u\otimes\operatorname{id}_C$, including correlated
operators.  This proves the spectator identity for singular
$\omega_C$ as well.  It agrees with
\cite[Remark~2.4(4)]{junge2018recovery} after restricting the spectator
to its support, where its reference state is faithful.
The complement term in Eq.~\eqref{eq:classical-recovery-completion}
vanishes on the actual input $\omega_{KC}$, so
$(\widehat{\mathcal R}\otimes\operatorname{id}_C)(\omega_{KC})$
is the recovered state used in Theorem~\ref{thm:universal-recovery}.

Tracing out the recovered $K$ combines the powers of $C_a$.
The full-space effect for label $a$ is
\[
\int_{\mathbb R}\beta_0(u)
B^{-(1+iu)/2}C_aB^{-(1-iu)/2}\,du+q_a(I-P).
\]
It is linear in the unnormalized block $C_a$, including its trace
$q_a$.  A zero $q_a$ gives $C_a=0$ and a zero effect.  Grouping blocks
by a stochastic matrix therefore groups these effects by that same
matrix, proving Eq.~\eqref{eq:coarse-recovery-consistency}.

For a general channel in Theorem~\ref{thm:universal-recovery}, the same
trace-adjoint calculation gives, on $\supp B$ with $B=\mathcal N(\tau)$,
\[
\operatorname{Tr}\mathcal R^{[u]}_{\tau,\mathcal N}(X)
=\operatorname{Tr}\!\left(
B^{-(1+iu)/2}B B^{-(1-iu)/2}X\right)=\operatorname{Tr}X.
\]
Compression to this support and sending its complement to a fixed
state gives a channel completion as in
Eq.~\eqref{eq:classical-recovery-completion}.  Zero extension alone
need not preserve trace on the full space.

\subsection{The entropy of a classical label}

We use the entropy notation from Section~\ref{sec:recovery-analysis},
suppressing the state subscript when it is fixed.

\begin{lemma}
\label{lem:classical-entropy-budget}
If $J$ is classical with at most $k$ possible values and
$Q_1,\ldots,Q_N$ are finite-dimensional quantum registers, then
\begin{equation}
\sum_{t=0}^{N-1}I(J:Q_{t+1}\mid Q_1\cdots Q_t)
=I(J:Q_1\cdots Q_N)\le H(J)\le\log k.
\label{eq:classical-entropy-chain}
\end{equation}
Every summand is nonnegative.
\end{lemma}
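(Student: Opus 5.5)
We prove the chain rule and the two inequalities in \eqref{eq:classical-entropy-chain} separately, working throughout with the analysis state $\Omega_{JQ^N}=\sum_a\ketbra{a}{a}\otimes\omega_a$, where $\omega_a$ is the unnormalized block of the quantum registers. The conditional mutual information is defined in Section~\ref{sec:recovery-analysis} as $I(J:C\mid K)=S(JK)+S(KC)-S(K)-S(JKC)$.

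\textbf{Chain rule.} With $K=Q^t$ and $C=Q_{t+1}$, each summand reads
\[
S(JQ^t)+S(Q^{t+1})-S(Q^t)-S(JQ^{t+1}).
\]
Summing over $t=0,\ldots,N-1$ gives a telescoping sum, since $Q^tC=Q^{t+1}$. The result is
\[
\bigl[S(Q^N)-S(Q^0)\bigr]-\bigl[S(JQ^N)-S(JQ^0)\bigr].
\]
Because $Q^0$ is trivial, $S(Q^0)=0$ and $S(JQ^0)=S(J)=H(J)$. The sum therefore equals $H(J)+S(Q^N)-S(JQ^N)=I(J:Q^N)$.

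\textbf{Nonnegativity.} Each summand is a conditional mutual information, so it is nonnegative by the strong subadditivity of Lieb and Ruskai~\cite{liebruskai1973}, already cited in the text.

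\textbf{Bound by $H(J)$.} Write $I(J:Q^N)=H(J)-\bigl[S(JQ^N)-S(Q^N)\bigr]$, so it suffices to show $S(JQ^N)\ge S(Q^N)$ when $J$ is classical. Set $q_a=\operatorname{Tr}\omega_a$ and $\sigma_a=\omega_a/q_a$ for $q_a>0$. Since the blocks of $\Omega_{JQ^N}$ are orthogonal,
\[
S(JQ^N)=H(q)+\sum_aq_aS(\sigma_a).
\]
The marginal is $\Omega_{Q^N}=\sum_aq_a\sigma_a$, and the standard upper bound on the entropy of a mixture gives $S\bigl(\sum_aq_a\sigma_a\bigr)\le H(q)+\sum_aq_aS(\sigma_a)$. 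To prove this bound, we introduce purifications or the cq state and apply subadditivity. Alternatively, one can note $I(J:Q^N)=\chi\le H(q)$ via concavity together with the relative entropy identity.

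I expect this mixture bound to be the only step that is not pure bookkeeping. A clean route is: $S(JQ^N)-S(Q^N)=\sum_aq_aD(\sigma_a\Vert\cdot)$-type rewriting is awkward, so instead we use monotonicity of relative entropy under the partial trace over $Q^N$ applied to $\Omega_{JQ^N}$ versus $I_J\otimes\Omega_{Q^N}$. This yields $-S(JQ^N)+S(Q^N)\le 0$ directly, because the conditional entropy of a cq state with classical $J$ is $\ge0$.

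\textbf{Bound by $\log k$.} Finally, $H(J)\le\log k$ since the entropy of a law on at most $k$ points is maximized by the uniform law, which follows from Jensen's inequality.
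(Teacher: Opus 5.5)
Your telescoping of the entropies, the use of strong subadditivity for each summand, and Jensen for $H(J)\le\log k$ are all fine and agree with the paper, which telescopes via $I(J:C\mid K)=I(J:KC)-I(J:K)$. Your reduction of $I(J:Q^N)\le H(J)$ to $S(Q^N)\le S(JQ^N)$, i.e.\ to $S(\sum_aq_a\sigma_a)\le H(q)+\sum_aq_aS(\sigma_a)$, is also correct. The gap is that this inequality is the entire content of the bound, and the route you commit to does not prove it.

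\begin{itemize}
\item Monotonicity of relative entropy under $\operatorname{Tr}_{Q^N}$ gives $D(\Omega_{JQ^N}\Vert I_J\otimes\Omega_{Q^N})\ge D(\Omega_J\Vert I_J)=-H(J)$. That is, $S(Q^N)-S(JQ^N)\ge-H(J)$, which is just $I(J:Q^N)\ge0$.
\item With the pair $(\Omega_{JQ^N},I_J\otimes\Omega_{Q^N})$ as the input, data processing can only lower-bound its relative entropy. It can never give the upper bound $D(\Omega_{JQ^N}\Vert I_J\otimes\Omega_{Q^N})=S(Q^N)-S(JQ^N)\le0$ that you need.
\item The ``concavity'' alternative is not spelled out. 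Concavity of $S$ gives $\chi\ge0$, which is again the wrong direction.
\item The clause ``because the conditional entropy of a cq state with classical $J$ is $\ge0$'' is literally the statement $S(JQ^N)\ge S(Q^N)$ you are trying to prove. As written, the step is circular.
\end{itemize}

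What is missing is an argument that actually uses the classicality of $J$.

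The paper's route: write $I(J:Q)=\sum_aq_aD(\sigma_a\Vert B)$ with $B=\sum_aq_a\sigma_a$ and note that $q_a\sigma_a\preceq B$. Operator monotonicity of $\log$ then gives $D(\sigma_a\Vert B)\le-\log q_a$, hence $I(J:Q)\le H(q)$. The paper proves this monotonicity from the resolvent integral and handles singular $B$ by regularizing, $B+wI\succeq q_a\sigma_a+wI$ with $w\downarrow0$.

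A route closer to your own hints also works. Adjoin a classical copy $J'$, i.e.\ take the state $\sum_aq_a\ketbra{a}{a}\otimes\ketbra{a}{a}\otimes\sigma_a$ on $JJ'Q^N$.
\begin{itemize}
\item Strong subadditivity $I(J:J'\mid Q^N)\ge0$ reads $S(JQ^N)+S(J'Q^N)-S(Q^N)-S(JJ'Q^N)\ge0$.
\item The block structure gives $S(JJ'Q^N)=S(J'Q^N)$, so $S(JQ^N)\ge S(Q^N)$.
\item Equivalently, $D(\Omega_{JJ'Q^N}\Vert I_J\otimes\Omega_{J'Q^N})=0$, and $\operatorname{Tr}_{J'}$ maps this pair to $(\Omega_{JQ^N},I_J\otimes\Omega_{Q^N})$. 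So your data-processing idea works once you trace out $J'$ rather than $Q^N$.
\end{itemize}

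The paper's route is self-contained matrix analysis. The copy trick instead reuses the strong subadditivity you already invoke for nonnegativity.
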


\begin{proof}
The identity $I(J:C\mid K)=I(J:KC)-I(J:K)$ telescopes to the first
part of Eq.~\eqref{eq:classical-entropy-chain}.
The nonnegativity of each summand follows from strong
subadditivity~\cite[Theorem~2(i)]{liebruskai1973}.

For the upper bound, combine the quantum registers into $Q$ and write
\[
\omega_{JQ}=\sum_aq_a\ketbra{a}{a}\otimes\sigma_a,
\qquad B=\sum_aq_a\sigma_a,
\]
omitting zero probabilities.  The classical-block eigenvalues give
$S(\omega_{JQ})=H(q)+\sum_aq_aS(\sigma_a)$, and therefore
\begin{equation}
I(J:Q)=S(B)-\sum_aq_aS(\sigma_a)
=\sum_aq_aD(\sigma_a\Vert B).
\label{eq:classical-holevo-expansion}
\end{equation}
For $C\succ0$, diagonalization gives
\begin{equation}
\log C=\int_0^\infty\left((1+v)^{-1}I-(C+vI)^{-1}\right)\,dv.
\label{eq:log-resolvent}
\end{equation}
If $C\succeq D\succ0$, conjugating by $(D+vI)^{-1/2}$ gives
$(C+vI)^{-1}\preceq(D+vI)^{-1}$, so
Eq.~\eqref{eq:log-resolvent} implies $\log C\succeq\log D$.
Apply this to $B+wI\succeq q_a\sigma_a+wI$ on $\supp B$, take the
trace against $\sigma_a$, and let $w\downarrow0$.  Only positive
eigenvalues of $\sigma_a$ contribute, giving
$D(\sigma_a\Vert B)\le-\log q_a$.
Equation~\eqref{eq:classical-holevo-expansion} now gives $I(J:Q)\le H(q)$.
Finally, concavity of the scalar logarithm gives
\[
H(q)=\sum_aq_a\log(1/q_a)
\le\log\left(\sum_{a:q_a>0}q_a/q_a\right)\le\log k.
\]
This proves the bound also for singular quantum marginals.
\end{proof}

\section{A scalar quadratic comparison on the same ensemble}
\label{app:recovery-quadratic}

We record a comparison of the two label measurements on one fixed
input space.  It uses the optimal scalar-estimation formula already
proved in Proposition~\ref{prop:personick}.

\begin{lemma}
\label{lem:recovery-scalar-kernel}
For $a,b>0$, put
\[
\Lambda(a,b):=
\begin{cases}
\frac{\log a-\log b}{a-b},&a\ne b,\\
\frac1a,&a=b.
\end{cases}
\]
If $B=\sum_ad_a\ketbra{a}{a}\succ0$, then
\begin{equation}
\left[\int_{\mathbb R}\beta_0(u)
B^{-(1+iu)/2}AB^{-(1-iu)/2}\,du\right]_{ab}
=\Lambda(d_a,d_b)A_{ab},
\label{eq:recovery-spectral-kernel}
\end{equation}
and
\begin{equation}
\frac2{a+b}\le\Lambda(a,b)\le\frac1{\sqrt{ab}}.
\label{eq:recovery-kernel-comparison}
\end{equation}
\end{lemma}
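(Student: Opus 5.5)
The plan has two parts: an entrywise computation of the integral in the eigenbasis of $B$, and then elementary scalar inequalities for the logarithmic mean. For the first part, since $B=\sum_a d_a\ketbra{a}{a}\succ0$, the complex powers act diagonally:
\[
\bra{a}B^{-(1+iu)/2}AB^{-(1-iu)/2}\ket{b}=d_a^{-(1+iu)/2}d_b^{-(1-iu)/2}A_{ab}=(d_ad_b)^{-1/2}\Bigl(\frac{d_b}{d_a}\Bigr)^{iu/2}A_{ab}.
\]
In a finite dimension, integrating entrywise against $\beta_0$ is legitimate because the integrand is bounded. This reduces \eqref{eq:recovery-spectral-kernel} to the scalar identity
\[
(ab)^{-1/2}\int_{\mathbb R}\beta_0(u)\,e^{iu\lambda}\,du=\Lambda(a,b),\qquad \lambda:=\tfrac12\log(b/a).
\]
So the key computation is the Fourier transform of $\beta_0$. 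Writing $\beta_0(u)=\frac{\pi}{4}\operatorname{sech}^2(\pi u/2)$, the standard transform $\int\operatorname{sech}^2(x)e^{i\omega x}\,dx=\pi\omega/\sinh(\pi\omega/2)$ together with rescaling gives
\[
\hat\beta_0(\lambda)=\frac{\lambda}{\sinh\lambda}.
\]
If needed, I would verify this by a contour/residue computation at the double poles $u=i(2k+1)$, or by citing it.

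Substituting $\lambda=\tfrac12\log(b/a)$ gives $\sinh\lambda=\frac{b-a}{2\sqrt{ab}}$. Then
\[
(ab)^{-1/2}\frac{\lambda}{\sinh\lambda}=(ab)^{-1/2}\cdot\frac{\tfrac12\log(b/a)\cdot2\sqrt{ab}}{b-a}=\frac{\log a-\log b}{a-b}.
\]
At $a=b$ we have $\lambda=0$, so $\hat\beta_0(0)=1$ and the value is $1/a$, matching the case definition of $\Lambda$. This step, pinning down the Fourier transform with the correct normalization, is the main point where care is needed; I expect it to be the obstacle.

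For \eqref{eq:recovery-kernel-comparison}, by symmetry I take $a\neq b$ and write $x=\tfrac12\log(a/b)\neq0$, so that $a=\sqrt{ab}\,e^{x}$ and $b=\sqrt{ab}\,e^{-x}$. Then
\[
\Lambda(a,b)=\frac{2x}{\sqrt{ab}\,(e^x-e^{-x})}=\frac{1}{\sqrt{ab}}\cdot\frac{x}{\sinh x},
\qquad
\frac{2}{a+b}=\frac{1}{\sqrt{ab}}\cdot\frac{1}{\cosh x}.
\]
The claim therefore reduces to $1/\cosh x\le x/\sinh x\le1$. The right inequality is $\sinh x\ge x$ for $x>0$, extended by evenness. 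The left inequality is $\tanh x\le x$, which holds since the derivative of $\tanh$ is $\operatorname{sech}^2\le1$ and $\tanh 0=0$. Both inequalities are equalities at $x=0$, which covers the case $a=b$.
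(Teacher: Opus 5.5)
Your proposal is correct and essentially matches the paper's proof. You both reduce entrywise in the eigenbasis of $B$ to the Fourier transform $\int_{\mathbb R}\beta_0(u)e^{ivu}\,du=v/\sinh v$ and substitute $v=\tfrac12\log(d_b/d_a)$; the paper verifies the transform by a residue computation at the double pole $z=i$ of a height-$2$ rectangle, where you cite the $\operatorname{sech}^2$ transform. The only minor deviation is the lower bound $2/(a+b)\le\Lambda(a,b)$: the paper derives it from $\Lambda(a,b)=\int_0^1 dt/(ta+(1-t)b)$ and convexity of $1/v$, while your hyperbolic parametrization handles both bounds uniformly via $\tanh x\le x\le\sinh x$ for $x\ge 0$.
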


\begin{proof}
The Fourier transform of $\beta_0$ is
\begin{equation}
\int_{\mathbb R}\beta_0(u)e^{ivu}\,du=\frac{v}{\sinh v},
\label{eq:recovery-density-fourier}
\end{equation}
with value one at $v=0$.  To verify it for $v>0$, integrate
$f(z)=e^{ivz}/\cosh^2(\pi z/2)$ around the rectangle with vertices
$-R,R,R+2i,-R+2i$ around the pole, then $f(z+2i)=e^{-2v}f(z)$, and vertical integral is zero as $R\rightarrow \infty$.  Its only pole inside
the rectangle is the double pole at $z=i$, so, with $z = i + w$, taking the limit in the residue theorem gives
\[
\cosh^{-2}(\pi(i+w)/2)=-\frac4{\pi^2w^2}+O(1),
\]
so the residue of $f$ at $i$ is $-4iv e^{-v}/\pi^2$.  The residue
theorem yields
\[
(1-e^{-2v})\int_{\mathbb R}\frac{e^{ivu}}{\cosh^2(\pi u/2)}\,du
=\frac{8ve^{-v}}\pi.
\]
Multiplying by $\pi/4$ proves Eq.~\eqref{eq:recovery-density-fourier}, evenness gives the case $v<0$, and dominated convergence gives $v=0$.
Taking $v=(\log d_b-\log d_a)/2$, the coefficient multiplying
$A_{ab}$ in the LHS of Eq.~\eqref{eq:recovery-spectral-kernel}
is
\[
\frac1{\sqrt{d_ad_b}}\int_{\mathbb R}\beta_0(u)e^{ivu}\,du
=\frac1{\sqrt{d_ad_b}}\frac{v}{\sinh v}
=\Lambda(d_a,d_b).
\]
For Eq.~\eqref{eq:recovery-kernel-comparison}, the integral identity
\[
\Lambda(a,b)=\int_0^1\frac{dt}{ta+(1-t)b}
\]
and convexity of $v\mapsto1/v$ give the lower bound $2/(a+b)$.
The upper bound follows from $v/\sinh v\le1$, including its value
at zero.  These arguments include the diagonal case $a=b$ by continuity.
\end{proof}

\begin{proposition}
\label{prop:recovery-quadratic}
Fix the same ensemble $\{(p_x,\rho_x)\}$ on the same
finite-dimensional input space, and give it real scalar labels $z_x$.
Let $R_{\rm PGM}$ and $R_{\rm rec}$ be the mean-square risks when
the ordinary PGM and the averaged recovery measurement, respectively,
report $z_y$ on outcome $y$.  Let $R_{\rm opt}$ be the infimum over
all scalar strategies on that input space.  Then
\begin{equation}
R_{\rm PGM}\le R_{\rm rec}\le2R_{\rm opt}.
\label{eq:recovery-quadratic-comparison}
\end{equation}
\end{proposition}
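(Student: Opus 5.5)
The plan is to express all three risks through the same spectral data of $\bar\rho=B=\sum_a d_a\ketbra{a}{a}$ (on its support), with $A:=\sum_xp_xz_x\rho_x$. Each risk will then be $\sum_xp_xz_x^2$ minus a quadratic form in the matrix entries $A_{ab}$, weighted by a kernel in $(d_a,d_b)$. The three inequalities will reduce to pointwise comparisons of these kernels, and those comparisons are supplied by Lemma~\ref{lem:recovery-scalar-kernel}.

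\textbf{Step 1: PGM and optimum.} For the ordinary PGM, the computation in the proof of Theorem~\ref{thm:factor-two} gives
\[
R_{\rm PGM}=2\sum_xp_xz_x^2-2\sum_{a,b}\frac{|A_{ab}|^2}{\sqrt{d_ad_b}}.
\]
For the optimum, Proposition~\ref{prop:personick} gives
\[
R_{\rm opt}=\sum_xp_xz_x^2-\sum_{a,b}\frac{2|A_{ab}|^2}{d_a+d_b}.
\]

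\textbf{Step 2: recovery risk.} I will check that the same expansion holds for the recovery measurement, replacing $1/\sqrt{d_ad_b}$ by $\Lambda(d_a,d_b)$. The recovery kernel satisfies detailed balance and the marginal identity by Lemma~\ref{lem:recovery-normalization}, read with $t=1$ and a single block on the common input space. Using these, the two square terms each contribute $\sum_xp_xz_x^2$, exactly as in Eq.~\eqref{eq:pgm-risk-formula}. The completion term $p_y(I-P)$ has zero trace against every $\rho_x$. The cross term is $\sum_{x,y}p_xz_xz_y\Tr(D_y\rho_x)=\Tr\bigl(A\,\mathcal D(A)\bigr)$, where $\mathcal D(A)=\int\beta_0(u)B^{-(1+iu)/2}AB^{-(1-iu)/2}\,du$. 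By Eq.~\eqref{eq:recovery-spectral-kernel}, this cross term equals $\sum_{a,b}\Lambda(d_a,d_b)|A_{ab}|^2$, since $A$ is Hermitian and supported on $\supp B$. Hence
\[
R_{\rm rec}=2\sum_xp_xz_x^2-2\sum_{a,b}\Lambda(d_a,d_b)|A_{ab}|^2.
\]
Lemma~\ref{lem:recovery-scalar-kernel} assumes $B\succ0$, so I will apply it after restricting everything to $\supp B$.

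\textbf{Step 3: comparison.} The upper bound $\Lambda\le 1/\sqrt{ab}$ in Eq.~\eqref{eq:recovery-kernel-comparison} gives $R_{\rm PGM}\le R_{\rm rec}$. The lower bound $\Lambda\ge 2/(a+b)$ gives
\[
R_{\rm rec}\le 2\sum_xp_xz_x^2-2\sum_{a,b}\frac{2|A_{ab}|^2}{d_a+d_b}=2R_{\rm opt}.
\]

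\textbf{Main obstacle.} The delicate part is Step 2: justifying the interchange of the $u$-integral with the trace, and keeping the support bookkeeping consistent so that all cross terms live on $\supp B$. The $u$-integral converges absolutely in finite dimension, and the support inclusion is Lemma~\ref{lem:recovery-normalization}, so this step should be routine.
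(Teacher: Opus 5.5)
Your proposal is correct and follows the paper's proof essentially step for step. Like the paper, you write each of the three risks as $\sum_xp_xz_x^2$ minus a spectral quadratic form in the entries $A_{ab}$, with kernels $1/\sqrt{d_ad_b}$, $\Lambda(d_a,d_b)$ and $2/(d_a+d_b)$; you handle the recovery square terms through the balance identities of Lemma~\ref{lem:recovery-normalization}, and you conclude directly from Eq.~\eqref{eq:recovery-kernel-comparison}.
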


\begin{proof}
Set
\[
B:=\sum_xp_x\rho_x,\qquad
A:=\sum_xp_xz_x\rho_x,\qquad
\]
All calculations are on $\supp B$.  The effects are
\[
G_y=p_yB^{-1/2}\rho_yB^{-1/2},\qquad
D_y=p_y\int_{\mathbb R}\beta_0(u)
B^{-(1+iu)/2}\rho_yB^{-(1-iu)/2}\,du.
\]
The common completion $p_y(I-\Pi_{\supp B})$ has zero probability
on the positive-weight source states.  The PGM joint distribution
has the symmetry and marginals used in Eq.~\eqref{eq:pgm-risk-formula}.
Lemma~\ref{lem:recovery-normalization} gives the same facts for the
recovery measurement.  Therefore the two marginal square terms in
each quadratic loss sum to $2\sum_xp_xz_x^2$.  In an eigenbasis
$B=\sum_ad_a\ketbra{a}{a}$ on its support, Eq.~\eqref{eq:pgm-risk-formula}
and Lemma~\ref{lem:recovery-scalar-kernel} give
\begin{align*}
R_{\rm PGM}
&=2\sum_xp_xz_x^2-2\operatorname{Tr}(AB^{-1/2}AB^{-1/2})
=2\sum_xp_xz_x^2-2\sum_{a,b}\frac{|A_{ab}|^2}{\sqrt{d_ad_b}},
\\
R_{\rm rec}
&=2\sum_xp_xz_x^2-2\int_{\mathbb R}\beta_0(u)
\operatorname{Tr}\!\left(AB^{-(1+iu)/2}AB^{-(1-iu)/2}\right)\,du
\notag\\
&=2\sum_xp_xz_x^2-2\sum_{a,b}\Lambda(d_a,d_b)|A_{ab}|^2.
\end{align*}
By Proposition~\ref{prop:personick}, the optimal scalar risk is
$\sum_xp_xz_x^2-\operatorname{Tr}(LA)$, where $BL+LB=2A$.
Thus $L_{ab}=2A_{ab}/(d_a+d_b)$ and
\begin{equation*}
R_{\rm opt}=\sum_xp_xz_x^2-\sum_{a,b}\frac{2|A_{ab}|^2}{d_a+d_b}.
\end{equation*}
Applying Eq.~\eqref{eq:recovery-kernel-comparison} to the three
spectral sums proves Eq.~\eqref{eq:recovery-quadratic-comparison}.
\end{proof}

The comparison in Eq.~\eqref{eq:recovery-quadratic-comparison} fixes
the ensemble, the input space, and a scalar quadratic loss.
Algorithm~\ref{alg:sequential-pgm} changes the ensemble according to its posterior history,
whereas Algorithm~\ref{alg:recovery-pgm} uses its original prior on a
random whole prefix.  The scalar inequality consequently gives no
performance ordering for these complete protocols or for their
simultaneous estimation guarantees, which is the main motivation that we keep Algorithm~\ref{alg:sequential-pgm}, since ideally the PGM attains a better quadratic loss, while we did not prove that the sequential PGM can have an $O(\frac{\log^2(M)}{\varepsilon^2})$ sample complexity.